\numberwithin{equation}{section}
\theoremstyle{plain}
\newtheorem{theorem}{Theorem}[section]
\newtheorem{corollary}[theorem]{Corollary}
\newtheorem{definition}[theorem]{Definition} 
\newtheorem{assumption}{Assumption}
\newtheorem{proposition}[theorem]{Proposition} 
\newtheorem{lemma}[theorem]{Lemma} 
\newtheorem{example}[theorem]{Example}
\begin{document}

\author{
	{Jiuk Jang\textsuperscript{a}
		\quad Jaehyun Kim\textsuperscript{b}\thanks{jaehyunkim@cuhk.edu.hk, jaehyunkim2026@gmail.com}
		\quad Hyungbin Park\textsuperscript{a,c}
		\quad Jonghwa Park\textsuperscript{d}
	}\\ $ $ \\

	{\textsuperscript{a}{\small Department of Mathematical Sciences, Seoul National University, 1, Gwanak-ro, Gwanak-gu, Seoul, South Korea;}\\
	\textsuperscript{b}{\small Department of Statistics and Data Science, The Chinese University of Hong Kong, Shatin, N.T., Hong Kong;}\\ 		
		 \textsuperscript{c}{\small Research Institute of Mathematics, Seoul National University, 1, Gwanak-ro, Gwanak-gu, Seoul, South Korea;} \\
		\textsuperscript{d}{\small Department of Mathematical Sciences, Carnegie Mellon University,  Pittsburgh, Pennsylvania, USA  }}
}


\title{Hedging short-maturity Asian options in local volatility models}

	\date{\today}   
	
\maketitle

\begin{abstract}
This paper discusses the short-maturity behavior of Asian option prices and hedging portfolios.	
We consider the risk-neutral valuation and the  delta value of the Asian option  having a H\"older continuous payoff function  in a local volatility model. 
The main idea of this analysis is that the local volatility model can be approximated by a Gaussian process at short maturity.
By combining this approximation argument with Malliavin calculus, 
we derive short-maturity asymptotics for Asian option prices and deltas, and express them in terms of the local volatility function and the initial stock price.
In addition, we show that the convergence rate of the approximation is determined by  the H\"older exponent of the payoff function. Numerical experiments on concrete examples validate the effectiveness of the proposed method.\\

\noindent {\bf{Keywords}} Asian option, short maturity, H\"older continuous, local volatility model, hedging strategy, Gaussian process, Malliavin calculus, large deviation principle

\end{abstract}

\section{Introduction}

This paper investigates an arithmetic average Asian option in continuous time with a terminal payoff of the form
\begin{equation}
\Phi\left(\frac{1}{T}\int_0^T S_t\,dt\right)
\end{equation}
where   $\Phi:\mathbb{R}\rightarrow\mathbb{R}$ is a given payoff function, $T>0$
denotes the maturity, and $(S_t)_{t\ge 0}$
represents the underlying asset price process.
We refer to this contract simply as the Asian option.
Because of its average property, the Asian option is less exposed to a sudden plummet in stock prices just before maturity. This property makes it particularly appealing for hedging purposes and widely used by traders and financial institutions. For a detailed discussion of the role of Asian options in financial markets, see \cite{WilmottPaul2006PWoq}.
Despite its popularity in real markets, the Asian option is mathematically challenging to price and hedge. Even when the underlying asset $(S_t)_{t\ge 0}$ follows the classical Black--Scholes model, no simple closed-form expression for the density of the random variable $\frac{1}{T}\int_0^T S_t\,dt$ is available. The path-dependent nature of the payoff introduces significant analytical and computational difficulties, making asymptotic and approximation methods particularly valuable.

The main contribution of this paper is an asymptotic analysis of the initial-value sensitivity of short-maturity Asian options, commonly referred to as the delta in the finance literature.
To the best of our knowledge, rigorous analytical results for short-maturity deltas have not been established even in the Black--Scholes model. The payoff function $\Phi$ is assumed to be H\"older continuous, and the price process $(S_t)_{t\ge 0}$ follows a local volatility model; the detailed assumptions are provided in Section \ref{sec:price_main}.
Define the Asian volatility by
\begin{equation}
\sigma_A(T):=\sqrt{\frac{1}{T^3}\int_0^T \sigma^2(t,S_0)(T-t)^2\,dt}\,,
\end{equation}
where $\sigma(\cdot,\cdot)$ denotes the local volatility function.
We show that, for small maturity $T>0$,
the delta admits an asymptotic representation in terms of the Asian volatility.
More precisely, the  delta $\Delta_A(T)$ satisfies
\begin{align}
\Delta_A(T)=\mathbb{E}^\mathbb{Q}\left[\frac{\Phi(S_0+S_0\sigma_A(T)\sqrt{T}Z)}{S_0\sigma_A(T)\sqrt{T}}Z\right]+\mathcal{O}(T^{\gamma-\frac{1}{2}})\,,
\end{align}
where $Z$ is a standard normal random variable and $\gamma$ denotes the H\"older exponent of the payoff function  $\Phi.$

As a secondary contribution, short-maturity Asian option prices are examined. Let $P_A(T)$ denote the Asian option price with maturity $T$. The asymptotic behavior as $T\to 0$ is characterized by
\begin{equation}
P_A(T)=\mathbb{E}^\mathbb{Q}[\Phi(S_0+S_0\sigma_A(T)\sqrt{T}Z)]+\mathcal{O}(T^{\gamma})  \,. 
\end{equation}
The novelty here is that H\"older-continuous payoffs are treated in a unified and rigorous way, a setting that has not been addressed in the existing literature. At the same time, finer short-maturity expansions are already available for standard call and put payoffs; for this reason, the pricing result is presented as a minor contribution relative to the main focus on deltas.

To obtain our results, we combine several well-established mathematical techniques with an approximation scheme.
The core idea is an $L^p$-approximation of the underlying stock price process $(S_t)_{0\le t\le T}$ by a suitably constructed Gaussian process $(\hat{X}_t)_{0\le t\le T}.$ 
Our approach builds upon the methods of \cite{PirjolDan2016SMAO, PirjolDan2019SMAO, PirjolDan2019SMFS}, where a similar approximation was employed to derive short-maturity asymptotics for at-the-money (ATM) Asian call and put option prices.
Following this idea, our analysis focuses on the Gaussian random variable  $\frac{1}{T}\int_0^T S_t\,dt\,$ having a sophisticated density to the Gaussian random variable $\frac{1}{T}\int_0^T \hat{X}_t\,dt$
instead of the original average
$\frac{1}{T}\int_0^T S_t\,dt$, whose density is analytically intractable. This forms the core strategy underlying our approximation framework.
In addition, we employ Malliavin calculus to analyze the Asian option delta.
This technique has been successfully used for sensitivity analysis of Asian call and put options in \cite{benhamou2000application, PirjolDan2018SOAO}, and we adapt their approach to obtain an explicit representation for the delta.
Furthermore, the large deviation principle is applied to investigate both 
OTM and ITM Asian call and put options, following the line of research initiated in \cite{PirjolDan2016SMAO, PirjolDan2019SMAO}.

In recent years, short-maturity Asian options have been studied by many researchers.
Under a local volatility model, the asymptotic behavior of Asian option prices has been investigated in \cite{PirjolDan2016SMAO, PirjolDan2019SMFS}.
In \cite{PirjolDan2019SMAO}, asymptotic analysis was further developed under the constant elasticity of variance (CEV) model.
These studies employ the large deviation principle and derive analytical expressions for the rate function of the distribution of 
$\frac{1}{T}\int_0^T S_t\,dt$ 
to construct asymptotic approximations.
Sensitivity analysis was later performed in \cite{PirjolDan2018SOAO} as a follow-up to \cite{PirjolDan2016SMAO}, focusing on the delta under the Black--Scholes framework using the previously obtained price approximation.
More recently, \cite{shoshi2025some} analyzed short-maturity Asian options via large deviations theory within a constant volatility model that incorporates a jump component in addition to the drift and diffusion terms.
Within this framework, they derived asymptotic estimates for out-of-the-money, in-the-money and at-the-money Asian call and put options.
Their results rely heavily on the works of \cite{linetsky2004spectral} and \cite{sengupta2014pricing}.

The remainder of this paper is organized as follows.
Section \ref{sec:price_main} analyzes the short-maturity behavior of Asian option prices.
Section \ref{sec:Short maturity asymptotic for a sensitivity} establishes preliminary estimates for the short-maturity deltas of Asian and European options.
Section \ref{sec:delta_main} presents the main results of the paper, providing short-maturity asymptotics for the deltas of Asian and European options.
Section \ref{sec:numerical} reports numerical experiments that support and illustrate our theoretical estimates.
Section \ref{sec:Special case} applies the large deviation principle to analyze Asian call and put options.
Finally, Section \ref{sec:conclusion} concludes the paper.
Detailed proofs of necessary lemmas are collected in the appendices.

\section{Short-maturity option prices}
\label{sec:price_main}

We analyze the short-maturity asymptotic behavior of Asian options under local volatility
models. Assume that  the stock price process $(S_t)_{t\ge0}$ follows a local volatility model,
\begin{align}\label{eq:S_t}
dS_t=(r-q)S_t\,dt+\sigma(t,S_t)S_t\,dW_t\,, \quad S_0>0\,,
\end{align}
under risk-neutral measure $\mathbb{Q}$, where $r$ is the short rate, $q$ is the dividend rate, and $(W_t)_{t\ge0}$ is a $\mathbb{Q}$-Brownian motion. 
Under Assumption \ref{classical assumption}, the mapping $x\mapsto \sigma(t,x)x$ is Lipschitz continuous uniformly in 
$t$; hence, the SDE \eqref{eq:S_t} admits a unique strong solution \cite[Theorem 3.3.1]{zhang2017backward}.
Short-maturity Asian options under assumptions similar to those stated below have been investigated in \cite{shoshi2025some}.

\begin{assumption}
   \label{classical assumption}
	Let us consider the following conditions on the diffusion function.
	\begin{enumerate}
    \renewcommand{\labelenumi}{(\roman{enumi})}
			\renewcommand{\theenumi}{\roman{enumi}} 
		\item The function $\sigma:[0,T]\times\mathbb{R} \to\mathbb{R}$ is measurable, and there exist positive constants $\underline{\sigma}$ and $\overline{\sigma}$ such that $\underline{\sigma}\le\sigma(t,x) \le\overline{\sigma}$ for all $(t,x)\in [0,T]\times\mathbb{R}$.
		\item For each $t\in [0,T],$ the function $\sigma(t,\cdot)$ is twice continuously differentiable in $\mathbb{R}\,.$
		\item Define $\nu(t,x):=\frac{\partial[\sigma(t,x)x]}{\partial x}$ and $\rho(t,x):=\frac{\partial^2[\sigma(t,x)x]}{\partial x^2}\,.$ Then, the functions $\sigma(t,\cdot),$ $\sigma(t,\cdot)\cdot,$ $\nu(t,\cdot),$ $\rho(t,\cdot)$ are Lipschitz continuous uniformly in $t$. More precisely, there is a constant $\alpha>0$ such that for any $x,y\in\mathbb{R},$
		\begin{align}
		&\underset{0\le t\le T}{\sup}\vert \sigma(t,x)-\sigma(t,y) \vert \le \alpha \vert x-y \vert, \ \underset{0\le t\le T}{\sup}\vert \sigma(t,x)x-\sigma(t,y)y \vert \le \alpha \vert x-y \vert, \\
		&\underset{0\le t\le T}{\sup}\vert \nu(t,x)-\nu(t,y) \vert \le \alpha \vert x-y \vert, \ \underset{0\le t\le T}{\sup}\vert \rho(t,x)-\rho(t,y) \vert \le \alpha \vert x-y \vert\,.
		\end{align}
	\end{enumerate}
\end{assumption}
\noindent This assumption includes deterministic volatility functions, that is, positive continuous functions $\sigma : [0, T] \to \mathbb{R}$ that are independent of $x$. 

Although the assumptions may appear restrictive, our results provide a meaningful contribution in two respects. First, the delta asymptotics are novel even under this restrictive setting. To the best of our knowledge, there are no existing analytical results with mathematically rigorous proofs for short-maturity deltas, even in the Black--Scholes model. Second, a suitable truncation procedure allows a broad class of local volatility models used in practice to be brought within the scope of the assumptions. This truncation method, together with supporting numerical results, is presented in Section \ref{sec:LV}.

In this paper, we consider H\"older continuous payoff functions $\Phi$.  Under Assumption \ref{Holder payoff assumption}, the constants 
$\beta$ and $\gamma$ are always understood to be those associated with the payoff function 
$\Phi$  throughout this paper.

A specific class of H\"older continuous payoff functions has been studied in \cite{bernard2016power,denis2009leland,liu2012implied,liu2024power} in the context of European option pricing.

\begin{assumption}\label{Holder payoff assumption}
	The payoff function $\Phi:\mathbb{R}\rightarrow{\mathbb{R}}$ is $\gamma$-H\"older continuous with coefficient $\beta>0$. More precisely, for any $x,y\in\mathbb{R}$, $\vert\Phi(x)-\Phi(y)\vert \le \beta\vert x-y\vert^{\gamma}$ with $0<\gamma\le 1\,.$
\end{assumption}

Under the risk-neutral measure $\mathbb{Q}$, 
the arbitrage-free values of the Asian and European options are 
$$ P_{A}(T):=e^{-rT}\,\mathbb{E}^\mathbb{Q}\left[\Phi\left(\frac{1}{T}\int_0^T S_t\,dt\right)\right]\,,\quad   P_{E}(T):=e^{-rT}\,\mathbb{E}^\mathbb{Q}[\Phi(S_T)]\,,$$
where $T$ is the maturity. 
Our objective is to find an asymptotic formula for the Asian option price up to $\mathcal{O}(T)$; a formula for its European counterpart is also presented. 
We introduce two notions of volatilities called the {\em Asian volatility} and the {\em European volatility}.
\begin{definition} We define the Asian volatility $\sigma_A(T)$ and the European volatility $\sigma_E(T)$ as
	\begin{equation}\label{def:asian european volatility}
	\sigma_{A}(T):=\sqrt{\frac{1}{T^3}\int_0^T\sigma^2(t,S_0)(T-t)^2\,dt}\,, \; \sigma_{E}(T):=\sqrt{\frac{1}{T}\int_0^T\sigma^2(t,S_0)\,dt}\,.
	\end{equation}
\end{definition}

The following theorem is one of the main results of this paper. It characterizes the short-time asymptotic behavior of Asian and European option prices.
Related studies on short-maturity asymptotics of Asian option prices can be found in \cite[Theorem 2]{PirjolDan2016SMAO}, \cite[Theorem 5]{PirjolDan2019SMAO}, and \cite[Theorems 4.5 and 4.10]{shoshi2025some}.
The proof of the theorem is provided in Appendix~\ref{sec:Short maturity limit of an option price}.

\begin{theorem}\label{thm:option price holder}
	Under Assumptions \ref{classical assumption} and \ref{Holder payoff assumption}, as $T\rightarrow{0}$, we have
	\begin{equation}
	P_A(T)=\mathbb{E}^\mathbb{Q}[\Phi(S_0+S_0\sigma_A(T)\sqrt{T}Z)]+\mathcal{O}(T^{\gamma})\,,\quad
	P_E(T)=\mathbb{E}^\mathbb{Q}[\Phi(S_0+S_0\sigma_E(T)\sqrt{T}Z)]+\mathcal{O}(T^{\gamma})\,,
	\end{equation}
	where $Z$ is a standard normal random variable. 
\end{theorem}

\begin{example}\label{ex: gamma price holder}
This example demonstrates that the asymptotic convergence order $\mathcal{O}(T^{\gamma})$ in Theorem \ref{thm:option price holder} cannot be further improved.
Given any $K>0$ and $0<\gamma \le 1\,,$ define the payoff function $\Phi$ by
	\begin{equation}
	\Phi(x)=(x-K)_{+}^{\gamma}\,.
	\end{equation}
	If $S_0=K,$ then 
	\begin{equation}
	P_A(T)=\frac{1}{2}(S_0\sigma_A(T))^{\gamma}M(\gamma)T^{\frac{\gamma}{2}}+\mathcal{O}(T^{\gamma})\,,
	\end{equation}
	where $M(\gamma):=\mathbb{E}^\mathbb{Q}[\vert Z\vert^{\gamma}]$ for a standard normal variable $Z.$ If we replace $\sigma_A(T)$ with $\sigma_E(T)\,,$ we obtain the asymptotic result for the European option price $P_E(T)\,.$
\end{example}

	In Theorem \ref{thm:option price holder}, if $\Phi(S_0)\neq 0$, then the expectations
	$$\mathbb{E}^\mathbb{Q}[\Phi(S_0+S_0\sigma_A(T)\sqrt{T}Z)]\,\textnormal{ and }\;\mathbb{E}^\mathbb{Q}[\Phi(S_0+S_0\sigma_E(T)\sqrt{T}Z)]$$
	are of the order greater than $\mathcal{O}(T^\gamma).$
	If $\Phi(S_0)=0$, then the convergence order depends on the function $\Phi.$
	For the ATM (i.e., $S_0=K$) call and put options with $\Phi(x)=(x-K)_+^{\gamma}$ and $\Phi(x)=(K-x)_+^{\gamma}$, we have 
	$$\mathbb{E}^\mathbb{Q}[\Phi(S_0+S_0\sigma_A(T)\sqrt{T}Z)]=(S_0\sigma_A(T))^{\gamma}T^{\frac{\gamma}{2}}\mathbb{E}[Z^{\gamma}\mathbbm{1}_{\{Z\geq 0\}}]\,,$$
	$$\mathbb{E}^\mathbb{Q}[\Phi(S_0+S_0\sigma_E(T)\sqrt{T}Z)]=(S_0\sigma_E(T))^{\gamma}T^{\frac{\gamma}{2}}\mathbb{E}[Z^{\gamma}\mathbbm{1}_{\{Z\geq 0\}}]\,,$$
	which means that these are   of the order greater than $\mathcal{O}(T^\gamma).$

If $\Phi$ is Lipschitz (that is, $\gamma=1$), then we have the following corollary.
It is directly obtained by Theorem \ref{thm:option price holder} incorporating with 	$\mathbb{E}^\mathbb{Q} [\Phi (S_0+S_0\sigma_A(T)\sqrt{T}Z ) ]=\Phi(S_0)+\mathcal{O}(\sqrt{T})$ and $\mathbb{E}^\mathbb{Q} [\Phi (S_0+S_0\sigma_E(T)\sqrt{T}Z ) ]=\Phi(S_0)+\mathcal{O}(\sqrt{T})\,.$

\begin{corollary}\label{cor:option price limit same} Under Assumptions \ref{classical assumption}
	and \ref{Holder payoff assumption}, if $\gamma=1,$  
	the prices of both the Asian option and the European option share the same limit $\Phi(S_0)$ as $T\rightarrow{0}$ with the convergence order $\mathcal{O}(\sqrt{T}).$ More precisely,
	\begin{equation} P_{A}(T)=\Phi(S_0)+\mathcal{O}(\sqrt{T})\,, \quad P_{E}(T)=\Phi(S_0)+\mathcal{O}(\sqrt{T})\,.\end{equation}
\end{corollary}

We obtain asymptotic results for the call and put options as an example of Theorem \ref{thm:option price holder}. This generalizes the result in Theorem 6 of \cite{PirjolDan2016SMAO} for the ATM case.
\begin{example}\label{ex:call,put price}
	Let $P_A^{\textnormal{call}}$ and $P_A^{\textnormal{put}}$
	be the Asian call and put prices with  the strike $K,$ i.e., the payoff functions are   $\Phi(x)=(x-K)_+$ and $\Phi(x)=(K-x)_+,$ respectively.
	Then,  
	\begin{align}
	&P_A^{\textnormal{call}}(T)=\begin{cases}
	0+\mathcal{O}(\sqrt{T}), & \mbox{if  }S_0<K\,, \\
	\frac{S_0\sigma_A(T)}{\sqrt{2\pi}}\sqrt{T}+\mathcal{O}(T), & \mbox{if  }S_0=K \,,\\
	S_0-K+\mathcal{O}(\sqrt{T}), & \mbox{if  }S_0>K\,,
	\end{cases}\\
	&P_A^{\textnormal{put}}(T)=\begin{cases}
	K-S_0+\mathcal{O}(\sqrt{T}), & \mbox{if  }S_0<K \,,\\
	\frac{S_0\sigma_A(T)}{\sqrt{2\pi}}\sqrt{T}+\mathcal{O}(T), & \mbox{if  }S_0=K\,, \\
	0+\mathcal{O}(\sqrt{T}), & \mbox{if  }S_0>K\,.
	\end{cases}
	\end{align}
	The prices of the European call and put option are obtained by replacing $\sigma_A(T)$ in the above-mentioned expressions with $\sigma_E(T)$.
\end{example}

	The convergence rate of the expectations
	$$\mathbb{E}^\mathbb{Q}\left[\Phi\left(S_0+S_0\sigma_A(T)\sqrt{T} Z\right)\right]\;\text{ and }\; \mathbb{E}^\mathbb{Q}\left[\Phi\left(S_0+S_0\sigma_E(T)\sqrt{T}Z\right)\right]$$
	are important   because
	if these are of the order smaller than or equal to $\mathcal{O}(T^\gamma),$ then Theorem \ref{thm:option price holder} will not be informative.
	If $\Phi(S_0)\neq 0$, it is easy to show that these are of the order greater than  $\mathcal{O}(T)$.
	If $\Phi(S_0)=0$, then the convergence order depends on the function $\Phi.$ As an extreme case, if $\Phi\equiv0,$ it is evident that these expectations are   of the order smaller than $\mathcal{O}(T).$
	For the ATM (i.e., $S_0=K$) call and put options with    
	$\Phi(x)=(x-K)_+$ and $\Phi(x)=(K-x)_+,$ we have
	$$\mathbb{E}^\mathbb{Q}\left[\Phi\left(S_0+S_0\sigma_A(T)Z\right)\right]=\frac{S_0\sigma_A(T)}{\sqrt{2\pi}}\sqrt{T}$$
	and
	$$\mathbb{E}^\mathbb{Q}\left[\Phi\left(S_0+S_0\sigma_E(T)Z\right)\right]=\frac{S_0\sigma_E(T)}{\sqrt{2\pi}}\sqrt{T}\,,$$
	which means that these are   of the order greater than $\mathcal{O}(T^\gamma).$
 
In \cite{PirjolDan2016SMAO}, the authors derive an approximation formula for $P_A(T)$ of order $e^{-\frac{\mathcal{I}}{T}+o(\frac{1}{T})}$ using large deviations theory, where $\mathcal{I}$ denotes the rate function.
Their result provides a more detailed characterization, as the approximation error of order $O(T)$ is much larger than the exponential term $e^{-O(\frac{1}{T})}$.
Consequently, their short-maturity implied volatility formula differs from ours and offers a finer asymptotic expansion.
However, in the ATM case where $\sigma(t,x) \equiv \sigma(x)$, their short-maturity implied volatility coincides with ours:
$$
\sigma_A(T) = \sqrt{\frac{1}{T^3} \int_0^T \sigma^2(t,S_0)(T-t)^2\,dt} = \frac{\sigma(S_0)}{\sqrt{3}}.$$
For further details, see \cite[Proposition 18 (iii)]{PirjolDan2016SMAO}.

Even when $1\le \gamma <2\,,$ we can estimate specific type of option prices by using a similar method presented above.
\begin{example}\label{ex: gamma price lipschitz}
 	Given any $K,\delta>0$ and $1\le \gamma <2\,,$ define the payoff function $\Phi$ by
	\begin{align}
	\Phi(x)=(x-K)^{\gamma}\mathbbm{1}_{\{K\le x<K+\delta\}}+\delta^{\gamma}\mathbbm{1}_{\{K+\delta\le x\}}\,.
	\end{align}
	If $S_0=K,$ then we obtain the   asymptotic equation
	\begin{equation}
	P_A(T)=\frac{1}{2}(S_0\sigma_A(T))^{\gamma}\,M(\gamma)T^{\frac{\gamma}{2}}+\mathcal{O}(T)\,,
	\end{equation}
	where $M(\gamma):=\mathbb{E}^\mathbb{Q}[\vert Z\vert^{\gamma}]$ with a standard normal variable $Z\,.$ If we replace $\sigma_A(T)$ with $\sigma_E(T)\,,$ we obtain the asymptotic result for the European option price $P_E(T)\,.$
\end{example}

\section{Preliminary estimates for delta values}
\label{sec:Short maturity asymptotic for a sensitivity}

In this section, we derive preliminary estimates of the short-maturity deltas for Asian and European options. The Asian and European delta values are defined respectively by
$$ \Delta_A(T):=\frac{\partial}{\partial S_0}P_A(T)\,,\quad \Delta_E(T):=\frac{\partial}{\partial S_0}P_E(T)\,. $$ 
We introduce six processes that are used to approximate the stock price process $(S_t)_{t\ge0}$. 
Define a process $(X_t)_{t\ge0}$ as a solution to 
\begin{equation}\label{eq:X_t}
dX_t=\sigma(t,X_t)X_t\,dW_t\,,\quad X_0=S_0>0\,
\end{equation}
and its first variation process $(Y_t)_{t\ge0}$  as 
\begin{equation}\label{eq:Y_t}
dY_t=\nu(t,X_t)Y_t\,dW_t\,,\quad Y_0=1\,.
\end{equation}
We also define  two 
geometric Gaussian processes $(\tilde{X}_t)_{t\ge0}$ and $(\tilde{Y}_t)_{t\ge0}$ as solutions to
\begin{equation}\label{eq:tilde x,y}
d\tilde{X}_t=\sigma(t,S_0)\tilde{X}_t\,dW_t\,,\quad \tilde{X}_0=S_0\,, \qquad
d\tilde{Y}_t=\nu(t,S_0)\tilde{Y}_t\,dW_t\,, \quad \tilde{Y}_0=1\,,
\end{equation}
and two 
Gaussian processes $(\hat{X}_t)_{t\ge0}$ and $(\hat{Y}_t)_{t\ge0}$ as solutions to
\begin{equation}\label{eq:hat x,y}
d\hat{X}_t=\sigma(t,S_0)S_0\,dW_t\,,\quad \hat{X}_0=S_0\,, \qquad d\hat{Y}_t=\nu(t,S_0)\,dW_t\,, \quad \hat{Y}_0=1\,.
\end{equation}
It is evident that the processes $X,Y,\tilde{X}$ and $\tilde{Y}$ are positive.

\subsection{Asian option deltas }\label{subsec:asian delta comp}

By Malliavin calculus, the Asian delta value can be represented by the weighted sum of the payoffs. 

The representation is given in the following proposition, whose proof can be found in \cite{benhamou2000application}.
We refer to \cite{NualartDavid1995TMca,nualart2009malliavin} for the definitions and basic properties of the Malliavin derivative and the Skorokhod integral.

\begin{proposition}\label{prop:asian delta mal} 
Let Assumption \ref{classical assumption} hold and  let $X$ and $Y$ be solutions to \eqref{eq:X_t} and \eqref{eq:Y_t}, respectively. 
  Then,
	\begin{align}
		\frac{\partial}{\partial S_0}\mathbb{E}^\mathbb{Q}\left[\Phi\left(\frac{1}{T}\int_0^T X_t\,dt\right)\right]
		&=\mathbb{E}^\mathbb{Q}\left[\Phi\left(\frac{1}{T}\int_0^TX_t\,dt\right)\delta\left(\frac{2{Y_{\cdot}}^2}{\sigma(\cdot,X_{\cdot})X_{\cdot}\int_0^T Y_t\,dt}\right)\right] \\
		&=\mathbb{E}^\mathbb{Q}\left[\Phi\left(\frac{1}{T}\int_0^T X_t\,dt\right)\delta\left(\frac{2{Y_{\cdot}}^2}{\sigma(\cdot,X_{\cdot})X_{\cdot}}\right)\frac{1}{\int_0^T Y_t\,dt}\right] \\
		&\quad-\mathbb{E}^\mathbb{Q}\left[\Phi\left(\frac{1}{T}\int_0^T X_t\,dt\right)\int_0^T \frac{2{Y_s}^2}{\sigma(s,X_s)X_s}D_{s}\left(\frac{1}{\int_0^T Y_t\,dt}\right)\,ds\right]\,,
	\end{align}
	where $\delta(\cdot)$ is the Skorokhod integral in $[0,T]$ and $D_s(\cdot)$ is the Malliavin derivative.
\end{proposition}

We define a process $(u_s)_{0\le s\le T}$ and a random variable $F$ by
\begin{equation}
\label{eqn:uF}  u_{s}:=\frac{2Y_s^2}{\sigma(s,X_s)X_s}\,,\; F:=\frac{1}{\int_0^T Y_t\,dt}\,.
\end{equation}
Similarly, we introduce 
a process  $(\hat{u}_s)_{0\le s\le T}$ 
and a random variable 
 $\hat{F},$ defined as
\begin{equation}
\label{eqn:hat_uF}  \hat{u}_s:=\frac{2{\hat{Y}_s}^2}{\sigma(s,\hat{X}_s)\hat{X}_s}\mathbbm{1}_{\{\hat{X}_s\ge\frac{S_0}{2}\}} \,, \; \hat{F}:=\frac{1}{\int_0^T \hat{Y}_t\,dt}\mathbbm{1}_{\{\frac{1}{T}\int_0^T \hat{Y}_t\,dt\ge\frac{1}{2}\}}\,, 
\end{equation}
where $\mathbbm{1}_A$ denotes the indicator function of the set $A$.  
We estimate the two expectations
\begin{equation}\label{eq:asian delta mal cal 3}
\mathbb{E}^\mathbb{Q}\left[\Phi\left(\frac{1}{T}\int_0^T \hat{X}_t\,dt\right)\delta(\hat{u})\hat{F}\right]\;\textnormal{ and }\;
\mathbb{E}^\mathbb{Q}\left[\Phi\left(\frac{1}{T}\int_0^T \hat{X}_t\,dt\right)\int_0^T \hat{u}_{s}(D_{s}^{*}\hat{F})\,ds\right]
\end{equation}
where
$$D_s^{*}\hat{F}:=D_s\left(\frac{1}{\int_0^T\hat{Y}_t\,dt}\right)\mathbbm{1}_{\{\frac{1}{T}\int_0^T\hat{Y_t}\,dt\ge \frac{1}{2}\}}\,.$$

If $\gamma<\frac{1}{2},$ the estimates in the following proposition are meaningless; however, for $\frac{1}{2}<\gamma\le 1,$ they provide us with the short-maturity estimate with the convergence rate $\gamma-\frac{1}{2}.$

\begin{proposition}\label{prop:Asian delta app 3}
	Under Assumptions \ref{classical assumption}
	and \ref{Holder payoff assumption}, as $T\rightarrow{0}$, we have
	\begin{equation}\label{eq:Asian delta app 3.1}
	\begin{aligned}
	&\mathbb{E}^\mathbb{Q}\left[\Phi\left(\frac{1}{T}\int_0^T \hat{X}_t\,dt\right)\delta(\hat{u})\hat{F}\right] \\ &=\mathbb{E}^\mathbb{Q}\left[\frac{\Phi(S_0+S_0\sigma_A(T)\sqrt{T}Z)}{S_0\sigma_A(T)\sqrt{T}}Z\right]
	-2\frac{\Phi(S_0)}{S_0}\frac{1}{T^2}\int_0^T\frac{\nu(s,S_0)}{\sigma(s,S_0)}(T-s)\,ds +\mathcal{O}(T^{\gamma-\frac{1}{2}})\,
	\end{aligned}
	\end{equation}
	and
	\begin{equation}\label{eq:Asian delta app 3.2}
	\mathbb{E}^\mathbb{Q}\left[\Phi\left(\frac{1}{T}\int_0^T \hat{X}_t\,dt\right)\int_0^T \hat{u}_{s}(D_{s}^{*}\hat{F})\,ds\right]
	=-2\frac{\Phi(S_0)}{S_0}\frac{1}{T^2}\int_0^T\frac{\nu(s,S_0)}{\sigma(s,S_0)}(T-s)\,ds +\mathcal{O}(T^{\gamma-\frac{1}{2}})\,,
	\end{equation}
	where $Z$ is a standard normal random variable.
\end{proposition}

\begin{proof}
We first prove \eqref{eq:Asian delta app 3.1}.   Define
	\begin{equation}
	\Delta_A^{*}(T):=\mathbb{E}^\mathbb{Q}\left[\Phi\left(\frac{1}{T}\int_0^T \hat{X}_t\,dt\right)\frac{1}{T}\delta(\hat{u})\right]\,,
	\end{equation}
then we have
	\begin{equation}
	\Delta_A^{*}(T)=\mathbb{E}^\mathbb{Q}\left[\frac{\Phi(S_0+S_0\sigma_A(T)\sqrt{T}Z)}{S_0\sigma_A(T)\sqrt{T}}Z\right]+\mathcal{O}(T^{\gamma-\frac{1}{2}})\,.
	\end{equation} 
	This can be proven as follows.
	From $\mathbb{E}^\mathbb{Q}\left[\delta(\hat{u})\right]=\mathbb{E}^\mathbb{Q}[\delta (\frac{2}{\sigma(\cdot, S_0)S_0})]=0 $, Lemma \ref{lem:hat u app} and Theorem 3.4.3 of \cite{zhang2017backward}, we obtain
	\begin{align}
	&\left\vert \Delta_A^{*}(T)-\mathbb{E}^\mathbb{Q}\left[\Phi\left(\frac{1}{T} \int_0^T\hat{X}_t\,dt\right)\frac{1}{T}\delta\left(\frac{2}{\sigma(\cdot,S_0)S_0}\right)\right] -\mathbb{E}^\mathbb{Q}\left[\Phi(S_0)\frac{1}{T}\delta(g)\right]\right\vert \\
	&\le \mathbb{E}^\mathbb{Q}\left[\left\vert \Phi\left(\frac{1}{T}\int_0^T\hat{X}_t\,dt\right)-\Phi(S_0)\right\vert\frac{1}{T}\vert\delta(g)\vert\right]\le C_1T^{\gamma-\frac{1}{2}}
	\end{align}
for some constant $C_1>0$
	where $g_s:=\hat{u}_s-\frac{2}{\sigma(s,S_0)S_0}\,.$ 
	From the stochastic Fubini theorem,
   	\begin{align}
	&\quad\mathbb{E}^\mathbb{Q}\left[\Phi\left(\frac{1}{T}\int_0^T\hat{X}_t\,dt\right)\frac{1}{T}\delta\left(\frac{2}{\sigma(\cdot,S_0)S_0}\right)\right] \\
	&=\mathbb{E}^\mathbb{Q}\left[\Phi\left(S_0+\frac{S_0}{T}\int_0^T\sigma(s,S_0)(T-s)\,dW_s\right)\frac{1}{T}\int_0^T\frac{2}{\sigma(s,S_0)S_0}\,dW_s\right]\\
    &=\mathbb{E}^\mathbb{Q}\left[\Phi\left(S_0 + \frac{S_0}{T}U\right)\frac{1}{T}V\right]\,
	\end{align}
 where 
\begin{align}
   U:=\int_0^T\sigma(s,S_0)(T-s)\,dW_s\,,\;V:=\int_0^T\frac{2}{\sigma(s,S_0)S_0}\,dW_s\,.
\end{align}  
Two random variables  $U$ and $
V - \frac{T^2/S_0}{\int_0^T\sigma^2(s,S_0)(T-s)^2\,ds}U$ are independent since 
they are jointly normal and \begin{align}
&\quad\text{Cov}\left(U, V - \frac{T^2/S_0}{\int_0^T\sigma^2(s,S_0)(T-s)^2,ds}U \right) \\
&= \mathbb{E}[U\,V] - \frac{T^2/S_0}{\int_0^T\sigma^2(s,S_0)(T-s)^2\,ds} \mathbb{E}[U^2] \\
&= \int_0^T \sigma(s,S_0)(T-s) \frac{2}{\sigma(s,S_0)S_0} \,ds - \frac{T^2/S_0}{\int_0^T\sigma^2(s,S_0)(T-s)^2\,ds} \int_0^T \sigma^2(s,S_0)(T-s)^2 \,ds \\
&= \int_0^T \frac{2(T-s)}{S_0} \,ds - \frac{T^2}{S_0} = 0\,.
\end{align}
 Therefore,  the desired result follows from 
\begin{align}
&\quad\mathbb{E}^\mathbb{Q}\left[\Phi\left(S_0 + \frac{S_0}{T}U\right)\frac{1}{T}V\right] \\
&= \mathbb{E}^\mathbb{Q}\left[\Phi\left(S_0 + \frac{S_0}{T}U\right)\frac{1}{T}\left(\frac{T^2/S_0}{\int_0^T \sigma^2(s,S_0)(T - s)^2 ds}\right)U\right] \\
&= \mathbb{E}^\mathbb{Q}\left[\Phi\left(S_0 + S_0\sigma_A(T)\sqrt{T}Z\right)\frac{Z}{S_0\sigma_A(T)\sqrt{T}}\right]\,.
\end{align}

	We now show that 
	\begin{equation}\label{eqn:d}
	\Delta_A^{*}(T)-\mathbb{E}^\mathbb{Q}\left[\Phi\left(\frac{1}{T}\int_0^T \hat{X}_t\,dt\right)\delta(\hat{u})\hat{F}\right] 
	=\Phi(S_0)\mathbb{E}^\mathbb{Q}\left[\delta(\hat{u})\hat{F}\left(\frac{1}{T}\int_0^T\hat{Y}_t\,dt-1\right)\right]+\mathcal{O}(T^{\gamma-\frac{1}{2}})\,.
	\end{equation}
Observe that
	\begin{align}
	\Delta_A^{*}(T)&=\mathbb{E}^\mathbb{Q}\left[\Phi\left(\frac{1}{T}\int_0^T \hat{X}_t\,dt\right)\frac{1}{T}\delta(\hat{u})\mathbbm{1}_{\{\frac{1}{T}\int_0^T\hat{Y}_t\,dt\ge\frac{1}{2}\}}\right]+\mathcal{O}(T^{\gamma-\frac{1}{2}})\,.
	\end{align}
    This follows by assuming $\Phi(0)=0$ (otherwise replacing $\Phi$ with $\Phi(\cdot)-\Phi(0)$), and applying Lemma \ref{lem:hat u app}, \eqref{eq:hatY vanish}, together with the estimate $N(-\frac{\sqrt{3}}{2\alpha\sqrt{T}})=o(T^q)$ for any $q>0$ as $T\rightarrow{0}$.  
	Since
	\begin{align}
&\quad	\Delta_A^{*}(T)-\mathbb{E}^\mathbb{Q}\left[\Phi\left(\frac{1}{T}\int_0^T \hat{X}_t\,dt\right)\delta(\hat{u})\hat{F}\right]\\
    &=\mathbb{E}^\mathbb{Q}\left[\Phi\left(\frac{1}{T}\int_0^T \hat{X}_t\,dt\right)\delta(\hat{u})\hat{F}\left(\frac{1}{T}\int_0^T\hat{Y}_t\,dt-1\right)\right]+\mathcal{O}(T^{\gamma-\frac{1}{2}})\,,
	\end{align}
	it suffices to show that
	\begin{align}
	&\quad\mathbb{E}^\mathbb{Q}\left[\Phi\left(\frac{1}{T}\int_0^T \hat{X}_t\,dt\right)\delta(\hat{u})\hat{F}\left(\frac{1}{T}\int_0^T\hat{Y}_t\,dt-1\right)\right]\\
    &=\Phi(S_0)\mathbb{E}^\mathbb{Q}\left[\delta(\hat{u})\hat{F}\left(\frac{1}{T}\int_0^T\hat{Y}_t\,dt-1\right)\right]+\mathcal{O}(T^{\gamma-\frac{1}{2}})\,.
	\end{align}
From the inequality $T\hat{F}\le 2$, Lemma \ref{lem:hat u app} and Theorem 3.4.3 of \cite{zhang2017backward}, we have 
	\begin{align}
	\mathbb{E}^\mathbb{Q}\left[\delta(\hat{u})\hat{F}\left(\frac{1}{T}\int_0^T\hat{Y}_t\,dt-1\right)\right]
	=\mathbb{E}^\mathbb{Q}\left[\frac{1}{T}\delta(\hat{u})\left(\frac{1}{T}\int_0^T\hat{Y}_t\,dt-1\right)\right]+\mathcal{O}(T^{\gamma-\frac{1}{2}})\,.
	\end{align}
    Then 
	\begin{align}
	\mathbb{E}^\mathbb{Q}\left[\frac{1}{T}\delta(\hat{u})\left(\frac{1}{T}\int_0^T\hat{Y}_t\,dt-1\right)\right]
	=\frac{2}{S_0}\frac{1}{T^2}\int_0^T\frac{\nu(s,S_0)}{\sigma(s,S_0)}(T-s)\,ds+\mathcal{O}(T^{\gamma-\frac{1}{2}})\,,
	\end{align} 
which is obtained by Lemma \ref{lem:hat u app}, 
    	\begin{align}
	\mathbb{E}^\mathbb{Q}\left[\frac{1}{T}\delta(\hat{u})\left(\frac{1}{T}\int_0^T\hat{Y}_t\,dt-1\right)\right]
	=\mathbb{E}^\mathbb{Q}\left[\frac{1}{T}\delta\left(\frac{2}{\sigma(\cdot,S_0)S_0}\right)\left(\frac{1}{T}\int_0^T\hat{Y}_t\,dt-1\right)\right]+\mathcal{O}(T^{\gamma-\frac{1}{2}})\,.
	\end{align}
	and	\begin{align}
	&\mathbb{E}^\mathbb{Q}\left[\frac{1}{T}\delta\left(\frac{2}{\sigma(\cdot,S_0)S_0}\right)\left(\frac{1}{T}\int_0^T\hat{Y}_t\,dt-1\right)\right]\\
	&=\mathbb{E}^\mathbb{Q}\left[\left(\frac{1}{T}\int_0^T\frac{2}{\sigma(s,S_0)S_0}\,dW_s\right)\left(\frac{1}{T}\int_0^T\nu(s,S_0)(T-s)\,dW_s\right)\right]\\
	&=\frac{2}{S_0}\frac{1}{T^2}\int_0^T\frac{\nu(s,S_0)}{\sigma(s,S_0)}(T-s)\,ds\,.
	\end{align}
	This completes the proof of \eqref{eq:Asian delta app 3.1}.

 We now prove \eqref{eq:Asian delta app 3.2}. We divide the proof into several steps.
	From Lemma \ref{lem:Ds bound} and the definition of $\hat{u}_s$, we have
	\begin{align}
	&\left\vert\mathbb{E}^\mathbb{Q}\left[\Phi\left(\frac{1}{T}\int_0^T \hat{X}_t\,dt\right)\int_0^T \hat{u}_{s}(D_{s}^{*}\hat{F})\,ds\right]-\Phi(S_0)\mathbb{E}^\mathbb{Q}\left[\int_0^T \hat{u}_{s}(D_{s}^{*}\hat{F})\,ds\right]\right\vert \\
	&\le \beta\left(\frac{1}{T}\int_0^T\mathbb{E}^\mathbb{Q}[\vert\hat{X}_t-S_0\vert^2]\,dt\right)^{\frac{1}{2}}\left(\frac{1}{T}\int_0^T\mathbb{E}^\mathbb{Q}[\hat{u}_s^2( TD_s^{*}\hat{F})^2]\,ds\right)^{\frac{1}{2}}\le C_2T^{\gamma-\frac{1}{2}}
	\end{align}
for some  constant $C_2>0$.
	Then
	\begin{align}
	\mathbb{E}^\mathbb{Q}\left[\Phi\left(\frac{1}{T}\int_0^T \hat{X}_t\,dt\right)\int_0^T \hat{u}_{s}(D_{s}^{*}\hat{F})\,ds\right]
	=\Phi(S_0)\mathbb{E}^\mathbb{Q}\left[\int_0^T \hat{u}_{s}(D_{s}^{*}\hat{F})\,ds\right]+\mathcal{O}(T^{\gamma-\frac{1}{2}})\,.
	\end{align}		
	Similarly,
	\begin{align}
	&\left\vert\mathbb{E}^\mathbb{Q}\left[\int_0^T \hat{u}_{s}(D_{s}^{*}\hat{F})\,ds\right]
	-\mathbb{E}^\mathbb{Q}\left[\int_0^T \hat{u}_{s}(D_{s}^{*}\hat{F})\,ds\left(\frac{1}{T}\int_0^T\hat{Y}_t\,dt\right)^2\right]\right\vert \\
	&\le\left(\frac{1}{T}\int_0^T\mathbb{E}^\mathbb{Q}[\vert\hat{u}_s\vert^2\vert TD_s^{*}\hat{F}\vert^2]\,ds\right)^{\frac{1}{2}}\left(\mathbb{E}^\mathbb{Q}\left[\left(\left(\frac{1}{T}\int_0^T\hat{Y}_t\,dt\right)^2-1\right)^2\right]\right)^{\frac{1}{2}}\le C_3T^{\gamma-\frac{1}{2}}
	\end{align}
for some constant $C_3>0$.		
	Thus,
	\begin{align}
	\mathbb{E}^\mathbb{Q}\left[\int_0^T \hat{u}_{s}(D_{s}^{*}\hat{F})\,ds\right]
	=\mathbb{E}^\mathbb{Q}\left[\int_0^T \hat{u}_{s}(D_{s}^{*}\hat{F})\,ds\left(\frac{1}{T}\int_0^T\hat{Y}_t\,dt\right)^2\right]+\mathcal{O}(T^{\gamma-\frac{1}{2}})\,.
	\end{align}
	From \eqref{p-th moment of g} and Lemma \ref{lem:Ds bound}, we have
	\begin{align}
	&\left\vert\mathbb{E}^\mathbb{Q}\left[\int_0^T \hat{u}_{s}(D_{s}^{*}\hat{F})\,ds\left(\frac{1}{T}\int_0^T\hat{Y}_t\,dt\right)^2\right]
	-\mathbb{E}^\mathbb{Q}\left[\int_0^T\frac{2}{\sigma(s,S_0)S_0}(D_{s}^{*}\hat{F})\,ds\left(\frac{1}{T}\int_0^T\hat{Y}_t\,dt\right)^2\right]\right\vert \\
	&\le \left(\frac{1}{T}\int_0^T\mathbb{E}^\mathbb{Q}[g_s^2(TD_s^{*}\hat{F})^2]\,ds\right)^{\frac{1}{2}}\left(\frac{1}{T}\int_0^T\mathbb{E}^\mathbb{Q}[\hat{Y}_t^4]\,dt\right)^{\frac{1}{2}}\le C_4T^{\gamma-\frac{1}{2}}
	\end{align}
for some  constant $C_4>0$	where $g_s:=\hat{u}_s-\frac{2}{\sigma(s,S_0)S_0}.$
	Then 
	\begin{align}
	&\quad\mathbb{E}^\mathbb{Q}\left[\int_0^T \hat{u}_{s}(D_{s}^{*}\hat{F})\,ds\left(\frac{1}{T}\int_0^T\hat{Y}_t\,dt\right)^2\right] \\
    &=\mathbb{E}^\mathbb{Q}\left[\int_0^T\frac{2}{\sigma(s,S_0)S_0}(D_{s}^{*}\hat{F})\,ds\left(\frac{1}{T}\int_0^T\hat{Y}_t\,dt\right)^2\right]+\mathcal{O}(T^{\gamma-\frac{1}{2}})\,.
	\end{align}
	From the definition of $D_s^{*}\hat{F}$ and the estimation of $D_s\hat{Y}_t$ in \eqref{proof:DtildeY, DhatY}, it follows that
	\begin{equation}
	D_s^{*}\hat{F}=-\frac{\int_0^TD_s\hat{Y}_t\,dt}{\left(\int_0^T\hat{Y}_t\,dt\right)^2}\mathbbm{1}_{\{\frac{1}{T}\int_0^T\hat{Y}_t\,dt\ge\frac{1}{2}\}}
	=-\frac{1}{T^2}\frac{\nu(s,S_0)(T-s)}{\left(\frac{1}{T}\int_0^T\hat{Y}_t\,dt\right)^2}\mathbbm{1}_{\{\frac{1}{T}\int_0^T\hat{Y}_t\,dt\ge\frac{1}{2}\}}\,.
	\end{equation}
	Using this identity, and $\mathbb{Q}\{\frac{1}{T}\int_0^T\hat{Y}_t\,dt<\frac{1}{2}\}=o(T^q)$ for any $q>0\,$ as $T\rightarrow{0}\,,$ 
	we have 
	\begin{align}
	&\quad\mathbb{E}^\mathbb{Q}\left[\int_0^T\frac{2}{\sigma(s,S_0)S_0}(D_{s}^{*}\hat{F})\,ds\left(\frac{1}{T}\int_0^T\hat{Y}_t\,dt\right)^2\right]\\
	&=-\frac{2}{S_0}\frac{1}{T^2}\int_0^T\frac{\nu(s,S_0)}{\sigma(s,S_0)}(T-s)\,ds+\mathcal{O}(T^{\gamma-\frac{1}{2}})\,.
	\end{align}
	Combining this with \eqref{eqn:d}, we finally obtain \eqref{eq:Asian delta app 3.2}. This completes the proofs.
\end{proof}

Similarly, we estimate two expectations in the following proposition.
 
\begin{proposition}\label{prop:Asian delta app 1}
	Under Assumptions \ref{classical assumption}
	and \ref{Holder payoff assumption}, as $T\rightarrow{0}$, we have
	\begin{equation}\label{eq:asian delta mal cal 1}
		\mathbb{E}^\mathbb{Q}\left[\Phi\left(\frac{1}{T}\int_0^T X_t\,dt\right)\delta(u)F\right]=\mathbb{E}^\mathbb{Q}\left[\Phi\left(\frac{1}{T}\int_0^T \hat{X}_t\,dt\right)\delta(\hat{u})\hat{F}\right]+\mathcal{O}(T^{\gamma-\frac{1}{2}})
	\end{equation}
	and
		\begin{equation}\label{eq:Asian delta app 2.1}
	\begin{aligned}
	\mathbb{E}^\mathbb{Q}\left[\Phi\left(\frac{1}{T}\int_0^T X_t\,dt\right)\int_0^T u_{s}(D_{s}F)\,ds\right] =\mathbb{E}^\mathbb{Q}\left[\Phi\left(\frac{1}{T}\int_0^T \hat{X}_t\,dt\right)\int_0^T \hat{u}_{s}(D_{s}^{*}\hat{F})\,ds\right]
	+\mathcal{O}(T^{\gamma-\frac{1}{2}}).
	\end{aligned}
	\end{equation}
\end{proposition}

\begin{proof} Since the proof is similar to that of Proposition \ref{prop:Asian delta app 3}, we only provide a sketch of the main arguments.
	We may assume that $\Phi(0)=0$ by considering the translation $\Phi(\cdot)-\Phi(0)$  otherwise. Observe that
	\begin{align}
		&\quad\Phi\left(\frac{1}{T}\int_0^T X_t\,dt\right)\delta(u)F-\Phi\left(\frac{1}{T}\int_0^T \hat{X}_t\,dt\right)\delta(\hat{u})\hat{F} \\
		&= \frac{1}{T}\left(\Phi\left(\frac{1}{T}\int_0^T X_t\,dt\right)-\Phi\left(\frac{1}{T}\int_0^T \hat{X}_t\,dt\right)\right)\delta(u)TF \\
		&\quad+\frac{1}{T}\Phi\left(\frac{1}{T}\int_0^T \hat{X}_t\,dt\right)  \delta(u-\hat{u}) TF+\frac{1}{T}\Phi\left(\frac{1}{T}\int_0^T \hat{X}_t\,dt\right)\delta(\hat{u}) \left(TF-T\hat{F}\right) \,.
	\end{align}
	From Lemmas \ref{lem:x,y close}, \ref{lem:dummy}, \ref{lem:u,TF app}, and the fact that the Skorokhod integral of $u$ coincides with  the It\^o integral
	whenever $(u_s)_{0\le s\le T}$ is adapted to the Brownian filtration $(\mathcal{F}_s^W)_{0\le s\le T}$, we obtain \eqref{eq:asian delta mal cal 1}.

	Similarly, 	we have
	\begin{align}
	&\quad\Phi\left(\frac{1}{T}\int_0^T X_t\,dt\right)\int_0^T u_{s}(D_{s}F)\,ds -\Phi\left(\frac{1}{T}\int_0^T \hat{X}_t\,dt\right)\int_0^T \hat{u}_{s}(D_{s}^{*}\hat{F})\,ds \\
	&= \left(\Phi\left(\frac{1}{T}\int_0^T X_t\,dt\right)-\Phi\left(\frac{1}{T}\int_0^T \hat{X}_t\,dt\right)\right)\frac{1}{T}\int_0^T u_s \left(TD_sF\right) \,ds \\
	&\quad+\Phi\left(\frac{1}{T}\int_0^T \hat{X}_t\,dt\right)\frac{1}{T}\int_0^T \left(u_s-\hat{u}_s\right) TD_sF\, ds\\
    &\quad+\Phi\left(\frac{1}{T}\int_0^T\hat{X}_t\,dt\right)\frac{1}{T}\int_0^T \hat{u}_s\left( TD_sF-TD_{s}^{*}\hat{F}\right)\,ds.
	\end{align}
	Applying  Lemmas \ref{lem:x,y close}, \ref{lem:dummy}, \ref{lem:u,TF app}, \ref{lem:Ds bound}, \ref{lem:Ds close}, we obtain \eqref{eq:Asian delta app 2.1}. 
\end{proof}

\subsection{European option deltas } \label{subsec:european delta comp}

Using Malliavin calculus, we represent the European delta as a weighted average of the payoff function.
Refer to \cite{NualartDavid1995TMca,benhamou2000application} for the following proposition.

\begin{proposition} \label{prop:euro}
	Let Assumption \ref{classical assumption} hold. For the processes $X$ and $Y$
	defined in \eqref{eq:X_t} and \eqref{eq:Y_t} respectively, we have
	\begin{align}
		\frac{\partial}{\partial S_0}\mathbb{E}^\mathbb{Q}[\Phi(X_T)]&=\frac{1}{S_0T}\,\mathbb{E}^\mathbb{Q}\left[\Phi\left(X_T\right)\delta\left(\frac{Y_{\cdot}}{\sigma(\cdot,X_{\cdot})X_{\cdot}}\right)\frac{X_T}{Y_T}\right] -\frac{1}{S_0}\,\mathbb{E}^\mathbb{Q}[\Phi(X_T)] \\
		&+\frac{1}{S_0T}\,\mathbb{E}^\mathbb{Q}\left[\Phi(X_T)\int_0^T \frac{Y_s}{\sigma(s,X_s)X_s}\frac{X_T(D_{s}Y_T)}{{Y_T}^2}ds\right],
	\end{align}
	where $\delta(\cdot)$ denotes the Skorokhod integral in $[0,T]$ and $D_s(\cdot)$ is the Malliavin derivative.
\end{proposition}

Define stochastic processes $(h_s)_{0\le s\le T}$ and $(H_s)_{0\le s\le T}$ together with a random variable $G$ by
\begin{align}\label{eqn:orig}
	h_s:=\frac{Y_s}{\sigma(s,X_s)X_s}\,, \;
	H_s:=\frac{X_T(D_{s}Y_T)}{Y_T^2}\,, \;
	G:=\frac{X_T}{Y_T}\,.
\end{align}
In addition, we introduce auxiliary processes $(\hat{h}_s)_{0\le s\le T}$, $(\hat{H}_s)_{0\le s\le T}$ and an auxiliary random variable $\hat{G}$ as
\begin{align}\label{eqn:aux}
	\hat{h}_s:=\frac{\hat{Y}_s}{\sigma(s,\hat{X}_s)\hat{X}_s}\mathbbm{1}_{\{\hat{X}_s\ge\frac{S_0}{2}\}}\,, \;
	\hat{H}_s:=\frac{\hat{X}_T(D_{s}\hat{Y}_T)}{{\hat{Y}_T}^2}\mathbbm{1}_{\{\hat{Y}_T\ge\frac{1}{2}\}}\,,\;\hat{G}:=\frac{\hat{X}_T}{\hat{Y}_T}\mathbbm{1}_{\{\hat{Y}_T\ge\frac{1}{2}\}}\,.
\end{align}
Since the proof of the following proposition is similar to that of Proposition \ref{prop:Asian delta app 3}, we omit it.

\begin{proposition}\label{prop:european delta app 1}
	Under Assumptions \ref{classical assumption}
	and \ref{Holder payoff assumption}, as $T\rightarrow{0}$, we have 
	\begin{align}
	\frac{1}{S_0T}\,\mathbb{E}^\mathbb{Q}\Big[\Phi(X_T)\delta(h)G\Big]=\frac{1}{S_0T}\,\mathbb{E}^\mathbb{Q}\left[\Phi(\hat{X}_T)\delta(\hat{h})\hat{G}\right]+\mathcal{O}(T^{\gamma-\frac{1}{2}})
	\end{align}
	and 
	\begin{align}
		\frac{1}{S_0T}\,\mathbb{E}^\mathbb{Q}\left[\Phi(X_T)\int_0^T h_s H_s\,ds\right]=\frac{1}{S_0T}\,\mathbb{E}^\mathbb{Q}\left[\Phi(\hat{X}_T)\int_0^T \hat{h}_s \hat{H}_s\,ds\right]+\mathcal{O}(T^{\gamma-\frac{1}{2}}).
	\end{align}
	In particular, \begin{equation}\label{eq:european delta mal cal}
	\Delta_E(T)=\frac{1}{S_0T}\,\mathbb{E}^\mathbb{Q}\left[\Phi(\hat{X}_T)\delta(\hat{h})\hat{G}\right]-\frac{\Phi(S_0)}{S_0}+\frac{1}{S_0T}\,\mathbb{E}^\mathbb{Q}\left[\Phi(\hat{X}_T)\int_0^T \hat{h}_s \hat{H}_s\,ds\right]+\mathcal{O}(T^{\gamma-\frac{1}{2}})\,.
	\end{equation}
\end{proposition}

The following proposition provides estimates for the two expectations in \eqref{eq:european delta mal cal}.
\begin{proposition}\label{prop:european delta app 2}
	Under Assumptions \ref{classical assumption}
	and \ref{Holder payoff assumption}, as $T\rightarrow{0}$, we have
	\begin{equation}\label{eq:european delta app 2.1}
		\begin{aligned}
			&\frac{1}{S_0T}\mathbb{E}^\mathbb{Q}\left[\Phi(\hat{X}_T)\delta(\hat{h})\hat{G}\right] \\
			&=\mathbb{E}^\mathbb{Q}\left[\frac{\Phi(S_0+S_0\sigma_E(T)\sqrt{T}Z)}{S_0\sigma_E(T)\sqrt{T}}Z\right]
			+\frac{\Phi(S_0)}{S_0T}\int_0^T\frac{\sigma(s,S_0)-\nu(s,S_0)}{\sigma(s,S_0)}\,ds+\mathcal{O}(T^{\gamma-\frac{1}{2}})\,
		\end{aligned}
	\end{equation}
	and
	\begin{equation}\label{eq:european delta app 2.2}
		\frac{1}{S_0T}\,\mathbb{E}^\mathbb{Q}\left[\Phi(\hat{X}_T)\int_0^T \hat{h}_s \hat{H}_s\,ds\right]
		=\frac{\Phi(S_0)}{S_0T}\int_0^T\frac{\nu(s,S_0)}{\sigma(s,S_0)}\,ds+\mathcal{O}(T^{\gamma-\frac{1}{2}})\,,
	\end{equation}
	where $Z$ is a standard normal random variable.
\end{proposition}

\begin{proof}
	We first prove \eqref{eq:european delta app 2.1}. From  Lemma \ref{lem:hat h app} and $\mathbb{E}^\mathbb{Q}[\vert\hat{X}_T-S_0\vert^p]=\mathcal{O}(T^{\frac{p}{2}})$, $\mathbb{E}^\mathbb{Q}[\hat{G}^p]=\mathcal{O}(1)$,  $\mathbb{E}^\mathbb{Q}[\vert\hat{G}-S_0\vert^p]=\mathcal{O}(T^{\frac{p}{2}})$, $\mathbb{E}^\mathbb{Q}[\vert\delta(\frac{2}{\sigma(\cdot,S_0)S_0})\vert^p]=\mathcal{O}(T^{\frac{p}{2}})$ for $p>0$, we have 
	\begin{align}
		&\frac{1}{S_0T}\mathbb{E}^\mathbb{Q}\left[\Phi(\hat{X}_T)\delta(\hat{h})\hat{G}\right]
		-\frac{1}{S_0T}\mathbb{E}^\mathbb{Q}\left[\Phi(S_0)\delta(\hat{h})\hat{G}\right] \\
		&=\frac{1}{S_0T}\mathbb{E}^\mathbb{Q}\left[\left(\Phi(\hat{X}_T)-\Phi(S_0)\right)\delta\left(\frac{1}{\sigma(\cdot,S_0)S_0}\right)\hat{G}\right]+\mathcal{O}(T^{\gamma-\frac{1}{2}}) \label{proof:european delta 1} \\ 
		&=\frac{1}{S_0T}\mathbb{E}^\mathbb{Q}\left[\left(\Phi(\hat{X}_T)-\Phi(S_0)\right)\delta\left(\frac{1}{\sigma(\cdot,S_0)S_0}\right)S_0\right]+\mathcal{O}(T^{\gamma-\frac{1}{2}}) \label{proof:european delta 2}\\ 
		&=\mathbb{E}^\mathbb{Q}\left[\frac{\Phi(S_0+S_0\sigma_E(T)\sqrt{T}Z)}{S_0\sigma_E(T)\sqrt{T}}Z\right]+\mathcal{O}(T^{\gamma-\frac{1}{2}})\,. \label{proof:european delta 3}
	\end{align}
 Similarly, from $\mathbb{E}^\mathbb{Q}[\delta(\hat{h})]=0$, $\mathbbm{E}^\mathbb{Q}[\vert \hat{G}-S_0\vert^p]=\mathcal{O}(T^{\frac{p}{2}})$, $\mathbb{E}^\mathbb{Q}[\vert\hat{Y}_T-1\vert^p]=\mathcal{O}(T^{\frac{p}{2}})$  $\mathbb{Q}\{\hat{Y}_T<\frac{1}{2}\}=o(T^p)$ for  $p>0$, the It\^o isometry and Lemma \ref{lem:hat h app}, we have
	\begin{align}
		\frac{\Phi(S_0)}{S_0T}\mathbb{E}^\mathbb{Q}\left[\delta(\hat{h})\hat{G}\right]
		&=\frac{\Phi(S_0)}{S_0T}\mathbb{E}^\mathbb{Q}\left[\delta(\hat{h})(\hat{G}-S_0)\right] \label{proof:european delta 4}\\
		&=\frac{\Phi(S_0)}{S_0T}\mathbb{E}^\mathbb{Q}\left[\delta\left(\frac{1}{\sigma(\cdot,S_0)S_0}\right)(\hat{G}-S_0)\right]+\mathcal{O}(T^{\gamma-\frac{1}{2}}) \label{proof:european delta 5}\\
		&=\frac{\Phi(S_0)}{S_0T}\mathbb{E}^\mathbb{Q}\left[\delta\left(\frac{1}{\sigma(\cdot,S_0)S_0}\right)(\hat{G}-S_0)\hat{Y}_T\right]+\mathcal{O}(T^{\gamma-\frac{1}{2}}) \label{proof:european delta 6}\\
		&=\frac{\Phi(S_0)}{S_0T}\mathbb{E}^\mathbb{Q}\left[\delta\left(\frac{1}{\sigma(\cdot,S_0)S_0}\right)(\hat{X}_T-S_0\hat{Y}_T)\right]+\mathcal{O}(T^{\gamma-\frac{1}{2}}) \label{proof:european delta 7}\\
		&=\frac{\Phi(S_0)}{S_0T}\int_0^T\frac{\sigma(s,S_0)-\nu(s,S_0)}{\sigma(s,S_0)}\,ds+\mathcal{O}(T^{\gamma-\frac{1}{2}})\,. \label{proof:european delta 8}
	\end{align}
By combining \eqref{proof:european delta 3} and \eqref{proof:european delta 8}, we obtain \eqref{eq:european delta app 2.1}.

We now verify \eqref{eq:european delta app 2.2}. From $\underset{0\le s\le T}{\sup}\,\mathbb{E}^\mathbb{Q}[\vert\hat{H}_s\vert^p]=\mathcal{O}(1)$,  $\mathbb{E}^\mathbb{Q}[\vert\hat{h}_s-\frac{1}{\sigma(s,S_0)S_0}\vert^p]=\mathcal{O}(s^{\frac{p}{2}})$ as $T\rightarrow{0}$, we have
	\begin{align}
		\frac{1}{T}\,\mathbb{E}^\mathbb{Q}\left[\Phi(\hat{X}_T)\int_0^T \hat{h}_s \hat{H}_s\,ds\right]
		&=\frac{1}{T}\,\mathbb{E}^\mathbb{Q}\left[\Phi(\hat{X}_T)\int_0^T \frac{1}{\sigma(s,S_0)S_0} \hat{H}_s\,ds\right]+\mathcal{O}(T^{\gamma-\frac{1}{2}}) \label{proof:european delta 9}\\
		&=\frac{1}{T}\,\mathbb{E}^\mathbb{Q}\left[\Phi(\hat{X}_T)\int_0^T \frac{1}{\sigma(s,S_0)S_0} \hat{H}_s\,ds\,\hat{Y}_T^2\right]+\mathcal{O}(T^{\gamma-\frac{1}{2}}) \label{proof:european delta 10}\\
		&=\frac{1}{T}\,\mathbb{E}^\mathbb{Q}\left[\Phi(\hat{X}_T)\int_0^T \frac{\nu(s,S_0)}{\sigma(s,S_0)S_0}\,ds\,\hat{X}_T\right]+\mathcal{O}(T^{\gamma-\frac{1}{2}}) \label{proof:european delta 11} \\
		&=\frac{\Phi(S_0)}{T}\int_0^T\frac{\nu(s,S_0)}{\sigma(s,S_0)}\,ds+\mathcal{O}(T^{\gamma-\frac{1}{2}})\label{proof:european delta 12}\,.
	\end{align}
	The proof for \eqref{eq:european delta app 2.2} is thus complete.
\end{proof}

\section{Short-maturity delta values}
\label{sec:delta_main}

We now present the short-maturity asymptotics for the deltas of Asian and European options. Recall that the Asian and European delta values are defined by
$$ \Delta_A(T):=\frac{\partial}{\partial S_0}P_A(T)\,,\quad \Delta_E(T):=\frac{\partial}{\partial S_0}P_E(T)\,. $$
Our main objective is to derive the short-maturity asymptotics for $\Delta_A(T)$ and $\Delta_E(T)$.
Refer to \cite{NualartDavid1995TMca,benhamou2000application} for the following proposition.

\begin{proposition}\label{prop:asian delta mal holder}
Let Assumption \ref{classical assumption} hold. 	For the process $S$ defined in \eqref{eq:S_t},  we have
	\begin{align}
	\Delta_A(T)
	&=e^{-rT}\mathbb{E}^\mathbb{Q}\left[\Phi\left(\frac{1}{T}\int_0^T S_t\,dt\right)\delta\left(\frac{2{Z_{\cdot}}^2}{\sigma(\cdot,S_{\cdot})S_{\cdot}}\right)\frac{1}{\int_0^T Z_t\,dt}\right] \\
	&\quad-e^{-rT}\mathbb{E}^\mathbb{Q}\left[\Phi\left(\frac{1}{T}\int_0^T S_t\,dt\right)\int_0^T \frac{2{Z_u}^2}{\sigma(u,S_u)S_u}D_{u}\left(\frac{1}{\int_0^T Z_t\,dt}\right)\,du\right]\,,
	\end{align}
	where $\delta(\cdot)$ is the Skorokhod integral in $[0,T]$, $D_s(\cdot)$ is the Malliavin derivative, and  $Z$ is the unique solution to the SDE 
	\begin{equation}\label{eq:process z}
	dZ_t=(r-q)Z_t\,dt+\nu(t,S_t)Z_t\,dW_t\,,\; Z_0=1.\,
	\end{equation}    
\end{proposition}

\begin{theorem}\label{thm:delta formula holder}
	Under Assumptions \ref{classical assumption} and \ref{Holder payoff assumption}, as $T\rightarrow{0}$, we have
	\begin{equation}\label{eqn:delta}
	\Delta_A(T)=\mathbb{E}^\mathbb{Q}\left[\frac{\Phi(S_0+S_0\sigma_A(T)\sqrt{T}Z)}{S_0\sigma_A(T)\sqrt{T}}Z\right]+\mathcal{O}(T^{\gamma-\frac{1}{2}})\,
	\end{equation}
	and
	\begin{equation}\label{eqn:delta_e}
	\Delta_E(T)=\mathbb{E}^\mathbb{Q}\left[\frac{\Phi(S_0+S_0\sigma_E(T)\sqrt{T}Z)}{S_0\sigma_E(T)\sqrt{T}}Z\right]+\mathcal{O}(T^{\gamma-\frac{1}{2}}) \,.
	\end{equation} 
\end{theorem}

\begin{proof} 
We prove the asymptotic behavior only for  the Asian delta $\Delta_A(T)$.
The result for the European delta $\Delta_E(T)$ follows by the same argument.    
	For notational simplicity, define the process $(v_t)_{0\le t\le T}$ and the random variable $F_Z$  as
	\begin{equation}
	v_t:=\frac{2Z^2_t}{\sigma(t,S_t)S_t}\,,\quad F_Z:=\frac{1}{\frac{1}{T}\int_0^T Z_t\,dt}\,.
	\end{equation}
	Recall from \eqref{eqn:uF} that $u_t=\frac{2Y^2_t}{\sigma(t,X_t)X_t}$ and $TF=\frac{1}{\frac{1}{T}\int_0^T Y_t\,dt}$.
	From Lemma \ref{lem:x,y close}, we have $\mathbb{E}^\mathbb{Q}[\vert S_t-X_t\vert^p]=\mathcal{O}(t^p)$ and $\mathbb{E}^\mathbb{Q}[\vert Z_t-Y_t\vert^p]=\mathcal{O}(t^p)$ for any $p>0$. 
	By the uniform Lipschitz continuity and boundedness conditions in Assumption \ref{classical assumption}, combined with H\"older's inequality, an argument analogous to the proof of Lemma \ref{lem:u,TF app} yields
	\begin{equation}
	\mathbb{E}^\mathbb{Q}\left[\left\vert v_t-u_t\right\vert^p\right]=\mathcal{O}(t^p)\,, \quad \mathbb{E}^\mathbb{Q}\left[\left\vert F_Z-TF\right\vert^p\right]=\mathcal{O}(T^p)
	\end{equation} 
	as $t, T\rightarrow{0}$ for any $p>0$. 
	
	Furthermore, by comparing the Malliavin derivatives
	\begin{equation}
	D_uS_l=\frac{Z_l}{Z_u}\sigma(u,S_u)S_u\mathbbm{1}_{\{u\le l\}}
	\end{equation}
	and  
	\begin{equation}
	D_uZ_t=Z_t\left[\nu(u,S_u)-\int_0^t\nu(l,S_l)\rho(l,S_l)D_uS_l\,dl+\int_0^t\rho(l,S_l)D_uS_l\,dW_l\right]\mathbbm{1}_{\{u\le t\}}
	\end{equation}
	with $D_uX_l$ and $D_uY_t$ (see \eqref{eq:DsXt} and \eqref{proof:DY}), a similar argument as in Lemma \ref{lem:Ds close} gives $\sup_{u\ge 0}\mathbb{E}^\mathbb{Q}[\vert D_uZ_t - D_uY_t\vert^p] = \mathcal{O}(t^{\frac{p}{2}})$, which yields 
	\begin{equation}
	\sup_{s\ge 0}\mathbb{E}^\mathbb{Q}\left[\left\vert D_s\left(\frac{1}{\frac{1}{T}\int_0^T Z_t\,dt}\right) - TD_sF\right\vert^p\right]=\mathcal{O}(T^{\frac{p}{2}})\,.
	\end{equation}
	
	Observe that 
    \begin{align}
	&\left\vert\mathbb{E}^\mathbb{Q}\left[\Phi\left(\frac{1}{T}\int_0^T S_t\,dt\right)\frac{1}{T}\delta(v)F_Z\right] - \mathbb{E}^\mathbb{Q}\left[\Phi\left(\frac{1}{T}\int_0^T X_t\,dt\right)\frac{1}{T}\delta(u)TF\right]\right\vert \nonumber \\
	&\le \frac{1}{T}\mathbb{E}^\mathbb{Q}\left[\left\vert\Phi\left(\frac{1}{T}\int_0^T S_t\,dt\right) - \Phi\left(\frac{1}{T}\int_0^T X_t\,dt\right)\right\vert\vert\delta(v)\vert F_Z\right] \label{eq:delta_err1} \\
	&\quad + \frac{1}{T}\mathbb{E}^\mathbb{Q}\left[\left\vert\Phi\left(\frac{1}{T}\int_0^T X_t\,dt\right)\right\vert\vert\delta(v - u)\vert F_Z\right] \label{eq:delta_err2} \\
	&\quad + \frac{1}{T}\mathbb{E}^\mathbb{Q}\left[\left\vert\Phi\left(\frac{1}{T}\int_0^T X_t\,dt\right)\right\vert\vert\delta(u)\vert \vert F_Z - TF\vert\right]\,. \label{eq:delta_err3}
	\end{align}
	Since $\Phi$ is $\gamma$-H\"older continuous, we have $\mathbb{E}^\mathbb{Q}[\vert \Phi(\frac{1}{T}\int_0^T S_t\,dt) - \Phi(\frac{1}{T}\int_0^T X_t\,dt) \vert^p] = \mathcal{O}(T^{\gamma p})$. Using the bounds $\mathbb{E}^\mathbb{Q}[\vert \delta(v) \vert^q] = \mathcal{O}(T^{\frac{q}{2}})$ and $\mathbb{E}^\mathbb{Q}[\vert F_Z \vert^r] = \mathcal{O}(1)$, and applying H\"older's inequality, we obtain that the first term \eqref{eq:delta_err1} is bounded by $\frac{1}{T}\mathcal{O}(T^\gamma)\mathcal{O}(T^{\frac{1}{2}})\mathcal{O}(1) = \mathcal{O}(T^{\gamma-\frac{1}{2}})$. 
	For the second term \eqref{eq:delta_err2}, since the integrand difference is bounded by $\mathbb{E}^\mathbb{Q}[\vert v_t-u_t \vert^p] = \mathcal{O}(t^p)$, the Burkholder--Davis--Gundy inequality implies $\mathbb{E}^\mathbb{Q}[\vert \delta(v - u) \vert^p] = \mathcal{O}(T^{\frac{3p}{2}})$, leading to an error bound of $\frac{1}{T}\mathcal{O}(1)\mathcal{O}(T^{\frac{3}{2}})\mathcal{O}(1) = \mathcal{O}(T^{\frac{1}{2}})$. 
	Using $\mathbb{E}^\mathbb{Q}[\vert F_Z - TF \vert^p] = \mathcal{O}(T^p)$, we have that the third term \eqref{eq:delta_err3} is bounded by $\frac{1}{T}\mathcal{O}(1)\mathcal{O}(T^{\frac{1}{2}})\mathcal{O}(T) = \mathcal{O}(T^{\frac{1}{2}})$. 
	Since $\gamma \le 1$, the error of $\mathcal{O}(T^{\frac{1}{2}})$ is subsumed into $\mathcal{O}(T^{\gamma-\frac{1}{2}})$.
	
	Applying an analogous decomposition to the second expectation in Proposition \ref{prop:asian delta mal holder}, we have the 
    \begin{align}\label{eqn:aaa}
    &\quad\,\,\mathbb{E}^\mathbb{Q}\left[\Phi\left(\frac{1}{T}\int_0^T S_t\,dt\right)\int_0^T \frac{2{Z_u}^2}{\sigma(u,S_u)S_u}D_{u}\left(\frac{1}{\int_0^T Z_t\,dt}\right)\,du\right]\\
    &=
    \mathbb{E}^\mathbb{Q}\left[\Phi\left(\frac{1}{T}\int_0^T X_t\,dt\right)\int_0^T \frac{2{Y_s}^2}{\sigma(s,X_s)X_s}D_{s}\left(\frac{1}{\int_0^T Y_t\,dt}\right)\,ds\right]+\mathcal{O}(T^{\gamma-\frac{1}{2}})\,.
    \end{align}
    By combining \eqref{eq:delta_err3} and  \eqref{eqn:aaa}, we have
	\begin{equation}
	\Delta_A(T) = e^{-rT}\frac{\partial}{\partial S_0}\mathbb{E}^\mathbb{Q}\left[\Phi\left(\frac{1}{T}\int_0^T X_t\,dt\right)\right] + \mathcal{O}(T^{\gamma-\frac{1}{2}})\,.
	\end{equation}
	Combining this with Propositions \ref{prop:Asian delta app 3} and \ref{prop:Asian delta app 1}, we obtain the desired result  \eqref{eqn:delta} for $\Delta_A(T)$.
\end{proof}

The borderline is $\gamma=\frac{1}{2}$ in these formulas. If $\gamma<\frac{1}{2}\,,$ the estimates in Theorem \ref{thm:delta formula holder} are meaningless; however, for $\frac{1}{2}<\gamma\le 1\,,$ they provide us with the short-maturity estimate with the convergence rate $\gamma-\frac{1}{2}\le \frac{1}{2}.$

\begin{example}\label{ex: gamma delta holder}
This example demonstrates that the asymptotic convergence order $\mathcal{O}(T^{\gamma-\frac{1}{2}})$ in Theorem \ref{thm:delta formula holder} cannot be further improved.	
	Given any $K$ and $\frac{1}{2}<\gamma<1,$ define the payoff function $\Phi$ by
	\begin{equation}
	\Phi(x)=(x-K)_{+}^{\gamma}\,.
	\end{equation}
	If $S_0=K,$ then we obtain the asymptotic equation
	\begin{equation}
	\Delta_A(T)=\frac{M(\gamma+1)}{2(S_0\sigma_A(T))^{1-\gamma}}{T^{\frac{\gamma-1}{2}}}+\mathcal{O}(T^{\gamma-\frac{1}{2}})\,,
	\end{equation}
	where $M(\gamma+1):=\mathbb{E}^\mathbb{Q}[\vert Z\vert^{\gamma+1}]$ with a standard normal variable $Z.$
\end{example}

In Theorem \ref{thm:delta formula holder}, the convergence rate of the expectations
$$\mathbb{E}^\mathbb{Q}\left[\frac{\Phi(S_0+S_0\sigma_A(T)\sqrt{T}Z)}{S_0\sigma_A(T)\sqrt{T}}Z\right]\,\textnormal{ and }\;\mathbb{E}^\mathbb{Q}\left[\frac{\Phi(S_0+S_0\sigma_E(T)\sqrt{T}Z)}{S_0\sigma_E(T)\sqrt{T}}Z\right]$$
depends on the function $\Phi.$ As an extreme case, if $\Phi$ is constant, then these expectations are zero, which means that the order is smaller than $\mathcal{O}(T^{\gamma-\frac{1}{2}}).$ Consider the ATM (i.e., $S_0=K$) option with  $\Phi(x)=(x-K)_+^{\gamma}$ and $\Phi(x)=(K-x)_+^{\gamma}$. If $\frac{1}{2}<\gamma<1,$  
these are   of the order greater than $\mathcal{O}(T^{\gamma-\frac{1}{2}})$ as presented in Example \ref{ex: gamma delta holder}.

The following corollary states that if either $\Delta_E(T)$ or $\Delta_A(T)$ converges as $T \to 0$, then the other also converges, and the two limits are equal.
This is directly obtained by
\begin{equation}
\begin{aligned}
\lim_{T\rightarrow{0}}\Delta_A(T)
&=	\lim_{T\rightarrow{0}}\mathbb{E}^\mathbb{Q}\left[\frac{\Phi(S_0+S_0\sigma_A(T)\sqrt{T}Z)}{S_0\sigma_A(T)\sqrt{T}}Z\right] =\lim_{\epsilon\rightarrow{0}}\mathbb{E}^\mathbb{Q}\left[\frac{\Phi(S_0+\epsilon Z)}{\epsilon}Z\right] \\
&=	\lim_{T\rightarrow{0}}\mathbb{E}^\mathbb{Q}\left[\frac{\Phi(S_0+S_0\sigma_E(T)\sqrt{T}Z)}{S_0\sigma_E(T)\sqrt{T}}Z\right]=\lim_{T\rightarrow{0}}\Delta_E(T)
\end{aligned}
\end{equation}
\begin{corollary} Under Assumptions \ref{classical assumption}
	and \ref{Holder payoff assumption}, if $\Delta_E(T)$ converges as $T\rightarrow{0}$, then $\Delta_A(T)$ also converges and vice versa. Moreover, $\lim_{T\rightarrow{0}}\Delta_A(T)
	=\lim_{T\rightarrow{0}}\Delta_E(T)$.
\end{corollary}

If $\Phi$ is Lipschitz (that is, $\gamma=1$), then we have the following corollary.
It is directly obtained by using 	$\mathbb{E}^\mathbb{Q} [\Phi (S_0+S_0\sigma_A(T)\sqrt{T}Z ) ]=\Phi(S_0)+\mathcal{O}(\sqrt{T})$ and $\mathbb{E}^\mathbb{Q} [\Phi (S_0+S_0\sigma_E(T)\sqrt{T}Z ) ]=\Phi(S_0)+\mathcal{O}(\sqrt{T})\,.$

\begin{corollary}\label{cor:asian delta limit}
	Under Assumptions \ref{classical assumption}
	and \ref{Holder payoff assumption}, if $\gamma=1,$ then
	\begin{align}
	\lim_{T\rightarrow{0}}\Delta_A(T)
	=\frac{D\Phi(S_0+)+D\Phi(S_0-)}{2}
	\end{align}
	where $D\Phi(S_0+)$ and  $D\Phi(S_0-)$ are 
	the right  and   left derivatives, respectively.
\end{corollary}

\begin{example}\label{ex:call,put delta}
	Let $\Delta_A^{\textnormal{call}}$ and $\Delta_A^{\textnormal{put}}$
	be the Asian call and put delta value with  the strike $K,$ i.e., the payoff functions are $\Phi(x)=(x-K)_+$ and $\Phi(x)=(K-x)_+,$ respectively.
	Then,  
	\begin{align}
	\Delta_A^{\textnormal{call}}(T)=\begin{cases}
	0+\mathcal{O}(\sqrt{T}), & \mbox{if  }S_0<K\,, \\
	\frac{1}{2}+\mathcal{O}(\sqrt{T}), & \mbox{if  }S_0=K\,, \\
	1+\mathcal{O}(\sqrt{T}), & \mbox{if  }S_0>K\,,
	\end{cases}\quad
	\Delta_A^{\textnormal{put}}(T)=\begin{cases}
	-1+\mathcal{O}(\sqrt{T}), & \mbox{if  }S_0<K\,, \\
	\frac{1}{2}+\mathcal{O}(\sqrt{T}), & \mbox{if  }S_0=K\,, \\
	0+\mathcal{O}(\sqrt{T}), & \mbox{if  }S_0>K\,.
	\end{cases}
	\end{align}
\end{example}
\begin{example}\label{ex:convex payoff}
	Given any $K\,,\delta>0$, and $1\le \gamma < 2\,,$ define the payoff function $\Phi$ by
	\begin{align}
	\Phi(x)=(x-K)^{\gamma}\mathbbm{1}_{\{K\le x<K+\delta\}}+\delta^{\gamma}\mathbbm{1}_{\{K+\delta\le x\}}\,.
	\end{align}
	Suppose that $S_0=K\,.$ Then, we get the following asymptotic equation.
	\begin{equation}
	\Delta_A(T)=\frac{1}{2}(S_0\sigma_A(T))^{\gamma -1}\,M(\gamma+1)T^{\frac{\gamma -1}{2}}+\mathcal{O}(\sqrt{T})\,,
	\end{equation}
	where $M(\gamma+1):=\mathbb{E}^\mathbb{Q}[\vert Z\vert^{\gamma+1}]$ with a standard normal variable $Z\,.$ In this example, the leading order of $\Delta_A(T)$ is $T^{\frac{\gamma-1}{2}}\,$ as $T\rightarrow{0}\,.$
\end{example}

In Theorem \ref{thm:delta formula holder}, the convergence rate of the expectation 
$$\mathbb{E}^\mathbb{Q}\left[\frac{\Phi(S_0+S_0\sigma_A(T)\sqrt{T}Z)}{S_0\sigma_A(T)\sqrt{T}}Z\right]$$ 
depends on the function $\Phi.$ As an extreme case, if $\Phi$ is constant, then this expectation is zero, which means that the order is smaller than $\mathcal{O}(T^{\gamma-\frac{1}{2}}).$
If $\gamma=1$ and
$D\Phi(S_0+)+D\Phi(S_0-)\neq 0,$
then the convergence rate 		
is greater than $\mathcal{O}(T^{\gamma-\frac{1}{2}})$ as presented in Corollary \ref{cor:asian delta limit}.   
In particular, for the ATM (i.e., $S_0=K$) call and put options with  $\Phi(x)=(x-K)_+$ and $\Phi(K-x)_+$, we have
$$\mathbb{E}^\mathbb{Q}\left[\frac{\Phi(S_0+S_0\sigma_A(T)\sqrt{T}Z)}{S_0\sigma_A(T)\sqrt{T}}Z\right]=\frac{1}{2} \,.$$

\section{Numerical tests}\label{sec:numerical}

\subsection{Black-Scholes model}

In this section, we conduct a numerical test for Asian call option prices for $\Phi(x)=(x-K)_{+}$ under the Black-Scholes model
$$dS_t=(r-q)S_t\,dt+\sigma S_t\,dW_t\,,\;S_0>0\,.$$
Table \ref{table: call_BS} compares  numerical results obtained from 
our method (Asymptotics),
the Monte Carlo simulation (MC) and the method proposed by \cite{PirjolDan2016SMAO} (PZ).
This test assumes the model parameters are as follows:
$$r=q=0\,,\;S_0=100\,,\;\sigma=0.3\,.$$
Table \ref{table: call_BS_error}   presents  the relative errors of our method and the PZ approach compared to the MC results.

Our method demonstrates strong performance for short-maturity ATM options, producing results similar to those of the MC and PZ methods. 
For example, the relative error of our method is approximately 0.059\% in the ATM case with maturity $T = 0.5$.
However, it struggles with long-maturity and OTM options, with particularly significant errors observed in deep OTM cases.
For the OTM case with strike $K=130$ and maturity $T = 0.5$, the relative error rises to approximately $68.809\%$.
Therefore, while our method is effective for short-maturity ATM options, it is not suitable for other option types, particularly those with longer maturities and that are deeply OTM.

This phenomenon is related to the rate of convergence. The convergence order of the error for call options is $\mathcal{O}(T)$ as presented in Theorem \ref{thm:option price holder}. In contrast, the convergence order of the PZ method is
$e^{-\frac{\mathcal{I}}{T}+o(\frac{1}{T})}$
where $\mathcal{I}$ is the associated rate function. Since the approximation error of order $\mathcal{O}(T)$ is much larger than the exponentially decaying order of the PZ method, our method performs poorly for long-maturity and deep OTM options  relative to the PZ method.


\begin{table*}
	\centering
	\scalebox{0.8}{
		\begin{tabular}{c c c c c c c c c c c c}
			\toprule
			\midrule
			\multirow{2}[4]{*}{value of $K$} & \multicolumn{2}{c}{$T=0.5$} & \multicolumn{4}{c}{$T=1$} & \multicolumn{2}{c}{$T=2$}\\ 
			\cmidrule(rl){2-4}\cmidrule(rl){5-7}\cmidrule(rl){8-10}
			& Asymptotics & MC & PZ & Asymptotics & MC & PZ & Asymptotics & MC & PZ \\
			\cmidrule(r){1-1}\cmidrule(l){2-4}\cmidrule(l){5-7}\cmidrule(l){8-10}
			\multicolumn{1}{l}{$K=100$}& 4.8842 & 4.8871 &  4.8830 & 6.9020 &6.9037 & 6.9013 & 9.7661 & 9.7417 &9.7477  \\
			\multicolumn{1}{l}{$K=105$}& 2.7980 & 2
   9205 &2.9188  & 4.7009 &4.8848 & 4.8847 & 7.5055 & 7.7268 & 7.7382\\
			\multicolumn{1}{l}{$K=110$}& 1.4322& 1.6372  & 1.6388 & 3.0413& 3.3689 & 3.3715 & 5.5800 & 6.0737 & 6.0826\\
			\multicolumn{1}{l}{$K=115$} & 0.6551& 0.8650  & 0.8671 & 1.8598 &2.2698  & 2.2745 & 4.0487& 4.7370 & 4.7505\\
			\multicolumn{1}{l}{$K=120$}& 0.2659 & 0.4336  & 0.4351 &1.0701& 1.4980 & 1.5033  & 2.8727& 3.6692 & 3.6835\\
			\multicolumn{1}{l}{$K=125$}& 0.0941& 0.2075 & 0.2081 & 0.5781& 0.9715 & 0.9758 &1.9662&2.8254  & 2.8414\\
			\multicolumn{1}{l}{$K=130$}& 0.0296 & 0.0949 & 0.0953 & 0.2920 & 0.6201 & 0.6234 & 1.3125 &2.1657 & 2.1790\\
			\midrule
			\bottomrule
	\end{tabular}}
	\caption{Asian call option prices for $\Phi(x)=(x-K)_{+}$ under the Black-Scholes model.} \label{table: call_BS}
\end{table*}

\begin{table*}
	\centering
	\scalebox{0.8}{
		\begin{tabular}{c c c c c c c c c c }
			\toprule
			\midrule
			\multirow{2}[4]{*}{value of $K$} & \multicolumn{2}{c}{$T=0.5$} & \multicolumn{2}{c}{$T=1$} & \multicolumn{2}{c}{$T=2$}\\ 
			\cmidrule(rl){2-3}\cmidrule(rl){4-5}\cmidrule(rl){6-7}
			& Asymptotics &   PZ & Asymptotics &   PZ & Asymptotics &  PZ \\
			\cmidrule(r){1-1}\cmidrule(l){2-4}\cmidrule(l){5-7}\cmidrule(l){8-10}
			\multicolumn{1}{l}{$K=100$}& 0.059\% &   0.084\% & 0.025\% &0.035\% & 0.250\% & 0.062\%  \\
			\multicolumn{1}{l}{$K=105$}& 4.194\% & 0.058\% &3.768\%	  & 0.002\% &2.864\% & 0.148\% \\
			\multicolumn{1}{l}{$K=110$}& 12.521\%& 0.098\%  & 9.724\% & 0.077\% & 8.128\% & 0.147\%
			  \\
			\multicolumn{1}{l}{$K=115$} & 24.266\% &0.248\%  & 18.063\% & 0.207\% &14.530\% &0.285\%  \\
			\multicolumn{1}{l}{$K=120$}& 38.676\% & 0.346\%  & 28.565\% &0.354\%& 21.708\% & 0.390\%  \\
			\multicolumn{1}{l}{$K=125$}& 54.651\%& 0.289\% & 40.494\% & 0.443\%& 30.410\% & 0.566\%  \\
			\multicolumn{1}{l}{$K=130$}& 68.809\% &0.421\% &52.911\% & 0.532\% & 39.396\% & 0.614\% \\
			\midrule
			\bottomrule
	\end{tabular}}
	\caption{Relative errors compared to the MC method presented in Table \ref{table: call_BS}}\label{table: call_BS_error}
\end{table*}

\subsection{Local volatility models}
\label{sec:LV}

In this section, we present numerical experiments for the asymptotic formulas in Theorems \ref{thm:option price holder} and \ref{thm:delta formula holder}. 
We consider two local volatility models: the CEV model and the quadratic model. 
For the volatility functions \(v\) given in \eqref{eqn:cev} (the CEV model) and \eqref{eqn:quad} (the quadratic model), the standard local volatility model
\begin{align}
	dS_t=(r-q)S_t\,dt+v(t,S_t)S_t\,dW_t
\end{align}
does not satisfy Assumption \ref{classical assumption}. 
To overcome this issue, we consider truncated versions of these models.
For each \(\ell\in\mathbb{N}\), let \(\eta_\ell:\mathbb{R}\to\mathbb{R}\) be a positive twice continuously differentiable function satisfying
$\eta_\ell(x)=x$ for $\frac{1}{\ell}\le x\le \ell$, $\eta_\ell(x)=\ell+1$ for $x>\ell+1,$ 
 $\eta_\ell(x)=\frac{1}{2\ell}$ for $x<\frac{1}{2\ell}$ and $\frac{1}{2\ell}\le \eta_\ell(x)\leq \ell+1$ for $x\in \mathbb{R}$.
Define the truncated volatility function \(\sigma_\ell\) with truncation level \(\ell\) by
\begin{equation}
\label{eqn:trun}
\sigma_\ell
=
\eta_\ell \circ v
:
[0,T]\times\mathbb{R}
\to
\mathbb{R}\,.
\end{equation}
Then the truncated local volatility model
\begin{align}
	dS_t=(r-q)S_t\,dt+\sigma_\ell(t,S_t)S_t\,dW_t
\end{align}
satisfies Assumption \ref{classical assumption}.
Moreover, this truncated model provides an accurate approximation to the original model when the truncation level \(\ell\) is sufficiently large since \(\sigma_\ell\) remains close to \(v\).
A similar approach was adopted by \cite{SACHSEKKEHARDW2014RMFT} to address the mismatch between technical assumptions and CEV model for their estimations of implied volatility.

\begin{enumerate}
	\item \textbf{Extended constant elasticity of variance (CEV) model:} The volatility function is given as $\sigma_\ell=\eta_\ell \circ v$ for
\begin{equation}
    \label{eqn:cev}
    v(t,x)=e^{-\lambda t}\xi x^{\theta-1}\,.
\end{equation} 
Our choices of parameters are  $r=q=0$, $S_0=100$,  $\lambda=1$, $\xi=0.2$, $\theta=0.5$ and $\ell=10^9.$ \cite{GatheralJim2012AOIV,SACHSEKKEHARDW2014RMFT} studied this CEV model without truncation. The same set of parameters except for $\lambda=0$ was used to test estimations of implied volatility.
	\item \textbf{Quadratic model:}
The volatility function is given as $\sigma_\ell=\eta_\ell \circ v$ for
	\begin{align}     \label{eqn:quad}
	v(t,x)=\frac{e^{-\lambda t}\sigma}{x}\left(\psi x+(1-\psi) S_0+\frac{\eta}{2}\frac{(x-S_0)^2}{S_0}\right).
	\end{align}
Our choices of parameters are 
$r=q=0$, $S_0=100$,
$\lambda=1$, $\sigma=0.2$, $\psi=0.5$, $\eta=10$ and $\ell=10^9.$ With the same choice of $(r,q,S_0,\sigma,\psi,\eta)$  without truncation, \cite{AndersenLeif2011Opwq} investigated implied volatility smile for time-independent cases $(\lambda=0)$. Here, we put $\lambda=1$ to study time-dependent cases as proposed in \cite{GatheralJim2012AOIV}.
\end{enumerate}

From a Monte Carlo perspective, the truncated-volatility model provides an effective approximation of the original CEV and quadratic models when \(\ell\) is sufficiently large. 
In practice, for large \(\ell\), the sample paths \((S_t)_{0\le t\le T}\) generated by the Euler scheme rarely reach the truncation boundaries \(\ell\) or \(1/\ell\). Consequently, the sample paths produced by the original and truncated models are practically indistinguishable.
Although the paths are not theoretically identical with probability one, since there remains a positive probability of hitting the truncation boundaries, this probability becomes negligibly small as \(\ell\) increases. Moreover, in the short-maturity regime considered in this paper, it is extremely unlikely that a sample path reaches the truncation boundaries before maturity.

\subsubsection{Powers of call options}\label{subsection: call option power}

\begin{table*}
	\centering
	\scalebox{0.8}{
	\begin{tabular}{c c c c c c c c c}
		\toprule
		\midrule
		\multirow{2}[4]{*}{value of $\gamma$} & \multicolumn{2}{c}{$T=1/365$} & \multicolumn{2}{c}{$T=1/10$} & \multicolumn{2}{c}{$T=1$}\\ 
		\cmidrule(rl){2-3}\cmidrule(rl){4-5}\cmidrule(rl){6-7}
		& MC & Asymptotics & MC & Asymptotics & MC & Asymptotics\\
		\cmidrule(r){1-1}\cmidrule(l){2-3}\cmidrule(l){4-5}\cmidrule(l){6-7}
		\multicolumn{1}{l}{$\gamma=0.1$}& 0.35586 & 0.35647 & 0.42815 & 0.4257
		& 0.46778 & 0.46856\\
		\multicolumn{1}{l}{$\gamma=0.6$}& 0.074969 & 0.075051 & 0.21755 & 0.21767 & 0.38484 & 0.38708 \\
		\multicolumn{1}{l}{$\gamma=1$}& 0.02389 & 0.024095 & 0.14123 & 0.14213 & 0.36948 & 0.37097 \\
		\multicolumn{1}{l}{$\gamma=1.5$}& 0.0063799 & 0.006383 & 0.091161 & 0.091443 & 0.38319 & 0.38559\\
		\multicolumn{1}{l}{$\gamma=1.9$}& 0.0023342 & 0.0023313 & 0.067727 & 0.067922 & 0.42306 & 0.42038\\
		\midrule
		\bottomrule
	\end{tabular}}
	\caption{ATM Asian option prices for $\Phi(x)=(x-K)^{\gamma}$ under the CEV model.}
	\label{table: cev price}
\end{table*}

\begin{table*}
	\centering
	\scalebox{0.8}{
	\begin{tabular}{c c c c c c c c c}
		\toprule
		\midrule
		\multirow{2}[4]{*}{value of $\gamma$} & \multicolumn{2}{c}{$T=1/365$} & \multicolumn{2}{c}{$T=1/10$} & \multicolumn{2}{c}{$T=1$}\\ 
		\cmidrule(rl){2-3}\cmidrule(rl){4-5}\cmidrule(rl){6-7}
		& MC & Asymptotics & MC & Asymptotics & MC & Asymptotics\\
		\cmidrule(r){1-1}\cmidrule(l){2-3}\cmidrule(l){4-5}\cmidrule(l){6-7}
		\multicolumn{1}{l}{$\gamma=0.1$}& 5.0116 & 5.0278 & 1.0197 & 1.0179 & 0.42619 & 0.42926\\
		\multicolumn{1}{l}{$\gamma=0.6$}& 1.3541 & 1.3547 & 0.67115 & 0.66609 & 0.44313 & 0.45381 \\
		\multicolumn{1}{l}{$\gamma=1$}& 0.4939 & 0.5 & 0.49631 & 0.5 & 0.50188 & 0.5 \\
		\multicolumn{1}{l}{$\gamma=1.5$}& 0.1518 & 0.15154 & 0.36411 & 0.36806 & 0.59254 & 0.59462\\
		\multicolumn{1}{l}{$\gamma=1.9$}& 0.060072 & 0.060393 & 0.2991 & 0.2983 & 0.71432 & 0.70736\\
		\midrule
		\bottomrule
	\end{tabular}}
	\caption{ATM Asian option deltas for $\Phi(x)=(x-K)^{\gamma}$ under the CEV model.}
	\label{table: cev delta}
\end{table*}

\begin{table*}
	\centering
	\scalebox{0.8}{
	\begin{tabular}{c c c c c c c c c}
		\toprule
		\midrule
		\multirow{2}[4]{*}{value of $\gamma$} & \multicolumn{2}{c}{$T=1/365$} & \multicolumn{2}{c}{$T=1/10$} & \multicolumn{2}{c}{$T=1$}\\ 
		\cmidrule(rl){2-3}\cmidrule(rl){4-5}\cmidrule(rl){6-7}
		& MC & Asymptotics & MC & Asymptotics & MC & Asymptotics\\
		\cmidrule(r){1-1}\cmidrule(l){2-3}\cmidrule(l){4-5}\cmidrule(l){6-7}
		\multicolumn{1}{l}{$\gamma=0.1$}& 0.44697 & 0.44877 & 0.52977 & 0.53592 & 0.57678 & 0.58988\\
		\multicolumn{1}{l}{$\gamma=0.6$}& 0.29916 & 0.29878 & 0.8577 & 0.86658 & 1.5252 & 1.541 \\
		\multicolumn{1}{l}{$\gamma=1$}& 0.23978 & 0.24095 & 1.428 & 1.4213 & 3.7326 & 3.7097\\
		\multicolumn{1}{l}{$\gamma=1.5$}& 0.20237 & 0.20185 & 2.9437 & 2.8917 & 12.7076 & 12.1934\\
		\multicolumn{1}{l}{$\gamma=1.9$}& 0.18516 & 0.18518 & 5.487 & 5.3952 & 36.3354 & 33.3923\\
		\midrule
		\bottomrule
	\end{tabular}}
	\caption{ATM Asian option prices for $\Phi(x)=(x-K)^{\gamma}$ under the quadratic model.}
	\label{table: quad price}
\end{table*}

\begin{table*}
	\centering
	\scalebox{0.8}{
	\begin{tabular}{c c c c c c c c c}
		\toprule
		\midrule
		\multirow{2}[4]{*}{value of $\gamma$} & \multicolumn{2}{c}{$T=1/365$} & \multicolumn{2}{c}{$T=1/10$} & \multicolumn{2}{c}{$T=1$}\\ 
		\cmidrule(rl){2-3}\cmidrule(rl){4-5}\cmidrule(rl){6-7}
		& MC & Asymptotics & MC & Asymptotics & MC & Asymptotics\\
		\cmidrule(r){1-1}\cmidrule(l){2-3}\cmidrule(l){4-5}\cmidrule(l){6-7}
		\multicolumn{1}{l}{$\gamma=0.1$}& 0.63037 & 0.63296 & 0.12679 & 0.12815 & 0.05282 & 0.05404\\
		\multicolumn{1}{l}{$\gamma=0.6$}& 0.53762 & 0.5393 & 0.26232 & 0.26517 & 0.1791 & 0.18066 \\
		\multicolumn{1}{l}{$\gamma=1$}& 0.49713 & 0.5 & 0.50838 & 0.5 & 0.51115 & 0.5\\
		\multicolumn{1}{l}{$\gamma=1.5$}& 0.48055 & 0.47923 & 1.1917 & 1.1639 & 1.9573 & 1.8804\\
		\multicolumn{1}{l}{$\gamma=1.9$}& 0.47943 & 0.47972 & 2.4184 & 2.3695 & 6.2038 & 5.6188\\
		\midrule
		\bottomrule
	\end{tabular}}
	\caption{ATM Asian option deltas for $\Phi(x)=(x-K)^{\gamma}$ under the quadratic model.}
	\label{table: quad delta}
\end{table*}

Under these two models, we will first compare asymptotic formulas for ATM $(K=S_0)$ Asian option prices and deltas having the payoff $\Phi(x)=(x-K)_{+}^{\gamma}$ presented in Examples \ref{ex: gamma price lipschitz}, \ref{ex:convex payoff}, \ref{ex: gamma price holder} and \ref{ex: gamma delta holder} with results from the Monte Carlo simulation. While these options are not traded in a real market, numerical tests could show how accurate our asymptotic formulas are.\\
\indent In Tables \ref{table: cev price}, \ref{table: cev delta}, \ref{table: quad price} and \ref{table: quad delta}, numerical results from asymptotic formulas presented in this paper are given in ``Asymptotics" columns. Results from the Monte Carlo simulation are given in ``MC" columns. We consider $M=10^5$ paths and $N=10^3$ time steps during simulations. For simulating deltas, we use the Malliavin representation in Proposition \ref{prop:asian delta mal holder}.

Tables \ref{table: cev price}, \ref{table: cev delta}, \ref{table: quad price} and \ref{table: quad delta} show that asymptotic formulas are being more accurate as the maturity $T$ gets shorter. 
This trend holds across a range of values for $\gamma$. 
For ATM options with $\gamma = 1$ and $T = 1/10$, the relative error for the price estimated by our method is approximately $0.637\%$ under the CEV model, while the relative error for the delta is about $0.743\%$.
Under the same conditions in the Quadratic model, the relative error for the option price is approximately $0.469\%$, and the relative error for the delta is about $1.648\%.$
It is also noteworthy that the asymptotic formulas tend to be more accurate under the CEV model than under the quadratic model.
These differences are expected, as the Quadratic model implies relatively higher local volatilities compared to the CEV model.

\subsubsection{Approximations of digital options}\label{subsection: logistic}

\begin{figure}
	\centering
	\includegraphics[scale = 0.4]{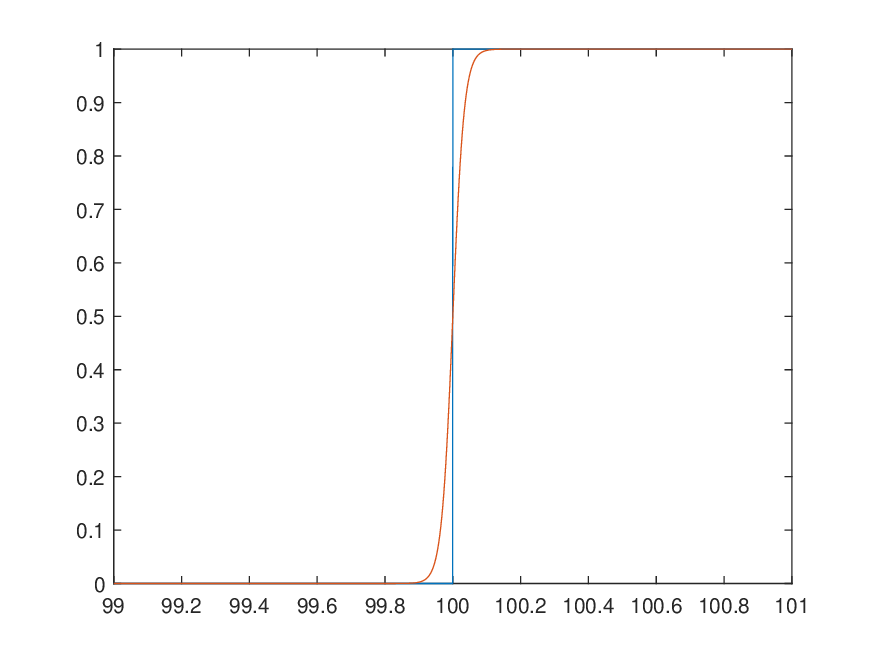}
	\caption{Plots of $\Phi_{\textnormal{binary}}$ and $\Phi_{\textnormal{logistic}}(\cdot:\kappa)$ with $K=100$ and $\kappa=60$.}
	\label{figure:logistic}
\end{figure}

\begin{table*}
	\centering
	\scalebox{0.8}{
	\begin{tabular}{c c c c}
		\toprule
		\midrule
		Value of $T$ & Digital option(MC) & Logistic option(MC)\\
		\cmidrule(r){1-1}\cmidrule(l){2-2}\cmidrule(l){3-3}
		\multicolumn{1}{l}{$T=1/10^6$}& 0.50081 & 0.50004\\
		\multicolumn{1}{l}{$T=1/10^4$}& 0.49976 & 0.49987 \\
		\multicolumn{1}{l}{$T=1/10^2$}& 0.49813 & 0.49827 \\
		\midrule
		\bottomrule
	\end{tabular}}
	\caption{Price approximation of digital options by logistic options under the CEV model.}
	\label{table:logistic price cev}
\end{table*}

\begin{table*}
	\centering
	\scalebox{0.8}{
	\begin{tabular}{c c c c}
		\toprule
		\midrule
		\multirow{2}[4]{*}{Value of $T$} & \multirow{2}[4]{*}{Digital option(MC)} & \multicolumn{2}{c}{Logistic option}\\
		\cmidrule(rl){3-4}&  & MC & Asymptotics \\
		\cmidrule(r){1-1}\cmidrule(l){2-2}\cmidrule(l){3-3}\cmidrule(l){4-4}
		\multicolumn{1}{l}{$T=1/10^6$}& 343.3846 & 14.4961 & 14.982\\
		\multicolumn{1}{l}{$T=1/10^4$}& 34.5661 & 13.5341 & 13.5279 \\
		\multicolumn{1}{l}{$T=1/10^2$}& 3.4577 & 3.3438 & 3.3521 \\
		\midrule
		\bottomrule
	\end{tabular}}
	\caption{Delta approximation of digital options by logistic options under the CEV model.}
	\label{table:logistic delta cev}
\end{table*}

\begin{table*}
	\centering
	\scalebox{0.8}{
	\begin{tabular}{c c c c}
		\toprule
		\midrule
		Value of $T$ & Digital option(MC) & Logistic option(MC)\\
		\cmidrule(r){1-1}\cmidrule(l){2-2}\cmidrule(l){3-3}
		\multicolumn{1}{l}{$T=1/10^8$}& 0.4996 & 0.50002 \\
		\multicolumn{1}{l}{$T=1/10^6$}& 0.49968 & 0.49996 \\
		\multicolumn{1}{l}{$T=1/10^4$}& 0.50112 & 0.50044 \\
		\midrule
		\bottomrule
	\end{tabular}}
	\caption{Price approximation of digital options by logistic options under the Quadratic model.}
	\label{table:logistic price quad}
\end{table*}

\begin{table*}
	\centering
	\scalebox{0.8}{
	\begin{tabular}{c c c c}
		\toprule
		\midrule
		\multirow{2}[4]{*}{Value of $T$} & \multirow{2}[4]{*}{Digital option(MC)} & \multicolumn{2}{c}{Logistic option}\\
		\cmidrule(rl){3-4}&  & MC & Asymptotics \\
		\cmidrule(r){1-1}\cmidrule(l){2-2}\cmidrule(l){3-3}\cmidrule(l){4-4}
		\multicolumn{1}{l}{$T=1/10^8$}& 344.38 & 14.6368 & 14.982 \\
		\multicolumn{1}{l}{$T=1/10^6$}& 34.5449 & 13.4539 & 13.5279 \\
		\multicolumn{1}{l}{$T=1/10^4$}& 3.4775 & 3.3679 & 3.3443 \\
		\midrule
		\bottomrule
	\end{tabular}}
	\caption{Delta Approximation of digital options by logistic options under the Quadratic model.}
	\label{table:logistic delta quad}
\end{table*}

Next, we consider the way to hedge \textit{Digital options} which are defined as having the terminal payoff $\Phi_{\textnormal{binary}}(x)=\mathbbm{1}_{\{x \ge K\}}$. The main technical issue regarding digital options is that their delta values are being unrealistically large at small $T$, especially when a spot price $S_0$ is close to a strike $K$.\\
\indent One way to deviate this problem is to replace digital options by more manageable options for hedging purposes. Consider options having logistic functions $\Phi_{\textnormal{logistic}}(x;\kappa)=1/(1+e^{-\kappa(x-K)})$, $\kappa>0$ as the terminal payoff. (Say these options as \textit{Logistic options}.) Logistic functions have following properties:
\begin{enumerate}
	\item They are Lipschitz continuous and $\Phi_{\textnormal{logistic}}(x;\kappa)\rightarrow(1/2)\mathbbm{1}_{\{x=K\}}+\mathbbm{1}_{\{x>K\}}$ as $\kappa\rightarrow\infty$.
	\item $0<\Phi'_{\textnormal{logistic}}(\cdot;\kappa)\le \kappa/4$ and $\Phi'_{\textnormal{logistic}}(K,\kappa)=\kappa/4$.
\end{enumerate}
These properties indicate two advantages of replacing binary options by logistic options for hedging purposes. First, prices of logistic options are close to prices of binary options for sufficiently large $\kappa$. Second, hedging logistic options is feasible in the sense that delta values do not explode at small $T$. This feature makes hedging logistic options in replace of binary options attractive. Corollary \ref{cor:asian delta limit} implies that the upper bounded of delta values of logistic options is close to $\kappa/4$ when $K=S_0$. Therefore by controlling $\kappa$, practitioners can manage the cost of hedging logistic options.\\
\indent We perform numerical test for $\kappa=60$ and $K=S_0$ under the CEV model and the quadratic model. In Tables \ref{table:logistic price cev} and \ref{table:logistic price quad}, computations of digital option prices and logistic options prices by the Monte Carlo simulation are listed in \textit{Digital option(MC)} and \textit{Logistic option(MC)} columns, respectively. Test results justify that risk-neutral valuations of logistic options are similar to that of binary options. Similarly in Tables \ref{table:logistic delta cev} and \ref{table:logistic delta quad}, delta values of digital options and logistic options from the Monte Carlo method are respectively given in the first and second columns. The third columns contain estimated delta values of logistic options obtained from Theorem \ref{thm:delta formula holder}. As one can see, delta values of binary options explode at small $T$ where as those of logistic options seem to be bounded above by $\kappa/4=15$. Also, estimations from Theorem \ref{thm:delta formula holder} and the Monte Carlo simulation are in good agreement. Therefore, estimations of delta values of logistic options by Theorem \ref{thm:delta formula holder} provide us with the new way to hedge digital options.

\section{Special case: Approximation for call and put options}
\label{sec:Special case}
This section   considers only the Asian call option, i.e., $\Phi(x)=(x-K)_{+}$, and the Asian put option, i.e., $\Phi(x)=(K-x)_{+}\,.$ The meanings of the following notations are self-explanatory:
\begin{equation}
P^{\textnormal{call}}_A(T)\,,\,\, P^{\textnormal{put}}_A(T)\,,\,\, \Delta^{\textnormal{call}}_A(T)\,,\,\, \Delta^{\textnormal{put}}_A(T)\,.
\end{equation}
The short-maturity behaviors of these four quantities have already been analyzed in Examples \ref{ex:call,put price} and \ref{ex:call,put delta}.
We apply the large deviation principle to gain further insight into their asymptotic properties, following an approach similar to that in \cite{PirjolDan2016SMAO} and \cite{shoshi2025some}.
See Appendix \ref{proof:otm delta} for the proofs of the following theorem and corollary.

\begin{theorem}\label{thm:otm delta}
Let Assumptions \ref{classical assumption} hold. 	Suppose that  the volatility function  $\sigma(t,x)$ is independent of $t$, that is, $\sigma(t,x)\equiv\sigma(x)$.
Then the following statements hold for the rate function $\mathcal{I}$   defined as
\begin{equation}\label{eq:rate function}
\mathcal{I}(x, y):=\underset{\substack{\int_0^1e^{g(t)}\,dt=x, \\ g(0)=\log y,\, g\in\mathcal{AC}[0,1]}}{\inf}\frac{1}{2}\int_0^1\,\left(\frac{g'(t)}{\sigma(e^{g(t)})}\right)^2dt
\end{equation}
for $x,y>0$
where $\mathcal{AC}[0,1]$ is the space of absolutely continuous functions on $[0,1]$.
	
	\begin{enumerate}\renewcommand{\labelenumi}{(\roman{enumi})}
			\renewcommand{\theenumi}{\roman{enumi}} 
		\item For an OTM Asian call option, i.e., $K>S_0$,
		\begin{equation}\label{eq:otm cdelta}
		\lim_{T\rightarrow0}T\log(\Delta_A^{\textnormal{call}}(T))=-\mathcal{I}(K,S_0)\,.
		\end{equation}
		\item For an OTM Asian put option, i.e., $S_0>K$,
		\begin{equation}\label{eq:otm pdelta}
		\lim_{T\rightarrow0}T\log(-\Delta_A^{\textnormal{put}}(T))=-\mathcal{I}(K,S_0)\,.
		\end{equation}
	\end{enumerate}
\end{theorem}

\begin{corollary}\label{cor:itm delta}
Let Assumptions \ref{classical assumption} hold. 	Suppose that  the volatility function  $\sigma(t,x)$ is independent of $t$, that is, $\sigma(t,x)\equiv\sigma(x)$.
Then, the following asymptotic relations hold.
	\begin{enumerate}\renewcommand{\labelenumi}{(\roman{enumi})}
			\renewcommand{\theenumi}{\roman{enumi}} 
		\item For an ITM Asian call option, i.e., $S_0>K$,
		\begin{align}
		\Delta_A^{\textnormal{call}}(T)=1-\frac{1}{2}(r+q)T+\left(\frac{r^2+rq+q^2}{6}\right)T^2+\mathcal{O}(T^3)\,.
		\end{align}
		\item For an ITM Asian put option, i.e., $K>S_0$,
		\begin{align}
		\Delta_A^{\textnormal{put}}(T)=-1+\frac{1}{2}(r+q)T-\left(\frac{r^2+rq+q^2}{6}\right)T^2+\mathcal{O}(T^3)\,.
		\end{align}
	\end{enumerate}
\end{corollary}

Observe that Corollary \ref{cor:itm delta} extends the result presented in Example \ref{ex:call,put delta}. The short rate 
$r$
and dividend rate 
$q$ determine the asymptotic order beyond $\sqrt{T}.$
Corollary \ref{cor:itm delta} provides results for option delta values, which were not addressed in \cite{PirjolDan2018SOAO} or \cite{shoshi2025some}.

\section{Conclusion}\label{sec:conclusion}
This paper presents a short-maturity asymptotic analysis of Asian options with an arbitrary H\"older continuous payoff function under a local volatility model.
Our main focus is on the asymptotic behavior of the Asian option price and its delta.
Both quantities are expressed in terms of the Asian volatility
\begin{equation}
\sigma_{A}(T)=\sqrt{\frac{1}{T^3}\int_0^T\sigma^2(t,S_0)(T-t)^2\,dt}\,.
\end{equation}
For sufficiently small $T>0,$ we establish that
\begin{align}
&P_A(T)=\mathbb{E}^\mathbb{Q}[\Phi(S_0+S_0\sigma_A(T)\sqrt{T}Z)]+\mathcal{O}(T^{\gamma})\,, \\ &\Delta_A(T)=\mathbb{E}^\mathbb{Q}\left[\frac{\Phi(S_0+S_0\sigma_A(T)\sqrt{T}Z)}{S_0\sigma_A(T)\sqrt{T}}Z\right]+\mathcal{O}(T^{\gamma-\frac{1}{2}})\,,
\end{align}
where 
$Z$ is a standard normal random variable and 
$\gamma$ denotes the H\"older exponent of the payoff function $\Phi.$



For future research, it would be of interest to extend the present analysis from local volatility models to local-stochastic volatility (LSV) models for Asian options.
The Asian volatility,
used as an effective approximation in this paper, suggests that many of our results could be generalized to settings where the volatility dynamics themselves evolve stochastically.
Another promising direction is the study of Asian options under jump-diffusion models.
Recently, \cite{pirjol2024asymptotics} analyzed short-maturity Asian option prices within such models; however, to the best of our knowledge, the corresponding delta hedging problem has not yet been investigated.
Developing a rigorous delta hedging framework for Asian options in the presence of jumps would therefore represent a valuable contribution to the literature.
In a related study,  \cite{pirjol2025asian} derived short-maturity asymptotics for Asian option prices in LSV models using large deviations theory.
For out-of-the-money options, the asymptotics were characterized by a rate function expressed as a variational problem.
They also provided analytical expressions for the at-the-money  implied volatility level, skew, and convexity of Asian options under a general LSV framework.
 As another direction for future research, it would be interesting to study short-maturity VIX options and options on realized variance. There have recently been several developments in this area, and the methodology developed in this paper may be applicable to analyze these products in the at-the-money regime.

\section*{Acknowledgement(s)}

\noindent  Financial support from the Institute for Research in Finance and Economics of Seoul National University is gratefully acknowledged.

\section*{Disclosure statement}

\noindent No potential conflict of interest was reported by the authors.

\section*{Funding}

\noindent Hyungbin Park was supported by the National Research Foundation of
Korea (NRF) grants funded by the Ministry of Science and ICT (Nos. 2021R1C1C1011675, 2022R1A5A6000840, RS-2026-25488333).


\bibliographystyle{plain}
\bibliography{Asian2026}

\appendix

\section{Approximation scheme}
\label{sec:Approximation scheme}

We recall the six processes 
$X,Y,\tilde{X},\tilde{Y},\hat{X},\hat{Y}$
introduced at the beginning of Section~\ref{sec:Short maturity asymptotic for a sensitivity}.
\begin{lemma} \label{lem:x,y close}
Let Assumption \ref{classical assumption} hold. For any $p>0,$ there exists a positive constant $B_p$ depending only on $p$ such that the following inequalities hold.
	\begin{enumerate}\renewcommand{\labelenumi}{(\roman{enumi})}
			\renewcommand{\theenumi}{\roman{enumi}} 
		\item For $0\le t\le1,$  
		\begin{align} \label{eq:x close}
		\mathbb{E}^\mathbb{Q}[\vert X_t-\tilde{X}_t\vert^p] \le B_pt^p, \quad \mathbb{E}^\mathbb{Q}[\vert \tilde{X}_t-\hat{X}_t\vert^p] \le B_pt^p. 
		\end{align}
		\item  For $0\le t\le1,$  
		\begin{align} \label{eq:y close}
		\mathbb{E}^\mathbb{Q}[\vert Y_t-\tilde{Y}_t\vert^p] \le B_pt^p, \quad \mathbb{E}^\mathbb{Q}[\vert \tilde{Y}_t-\hat{Y}_t\vert^p] \le B_pt^p.
		\end{align}
	\item As $t\to 0,$ 
		\begin{align}\label{eqn:t}
		\mathbb{E}^\mathbb{Q}[\vert S_t-X_t\vert^p]=\mathcal{O}(t^p)\,, \quad \mathbb{E}^\mathbb{Q}[\vert Z_t-Y_t\vert^p]=\mathcal{O}(t^p)
		\end{align}
        where $Z$ is the unique solution to the SDE \eqref{eq:process z}.
	\end{enumerate}
\end{lemma}

\begin{proof}
We first prove the second inequality in \eqref{eq:x close} for $p \ge 2$. Once this is shown, the case $0 < p < 2$ follows from Jensen’s inequality, which yields
\begin{equation}\label{eqn:jensen}
\big(\mathbb{E}^{\mathbb{Q}}\big[|\tilde{X}_t-\hat{X}_t|^p \big]\big)^{\frac{2}{p}}
\le \mathbb{E}^{\mathbb{Q}}\big[|\tilde{X}_t-\hat{X}_t|^2 \big]
\le B_2 t^2
\end{equation}
for some constant $B_2>0$.	For $p\ge2$, observe that  
	\begin{align}
	\mathbb{E}^\mathbb{Q}[\vert \tilde{X}_t-\hat{X}_t \vert^p]
	&\le C_p\mathbb{E}^\mathbb{Q}\left[\left(\int_0^t\vert \sigma(s,S_0)\tilde{X}_s-\sigma(s,S_0)S_0 \vert^2 \,ds\right)^{\frac{p}{2}}\right]\\
	&\le C_p\overline{\sigma}^pt^{\frac{p}{2}-1}\int_0^t \mathbb{E}^\mathbb{Q}[ \vert \tilde{X}_s-S_0 \vert^p]\,ds \label{proof:x close 1.1}
	\end{align}
	for some constant $C_p>0$. For these inequalities, we have used the Burkholder--Davis--Gundy inequality, Assumption \ref{classical assumption}, and  Jensen's inequality. 
	Using Jensen's inequality and Theorem 3.4.3 of \cite{zhang2017backward}, it follows that for $t\le 1$,
	\begin{align}
	t^{\frac{p}{2}-1}\int_0^t \mathbb{E}^\mathbb{Q}[ \vert \tilde{X}_s-S_0 \vert^p]\,ds \le\,& 2^{p-1}\int_0^t\mathbb{E}^\mathbb{Q}[\vert \tilde{X}_s-\hat{X}_s \vert^p]\,ds  +2^{p-1}t^{\frac{p}{2}-1}\int_0^t\mathbb{E}^\mathbb{Q}[ \vert \hat{X}_s-S_0 \vert^p]\,ds \label{proof:x close 1.2}\\
	\le\,& 2^{p-1}\int_0^t\mathbb{E}^\mathbb{Q}[\vert \tilde{X}_s-\hat{X}_s \vert^p]\,ds +
	2^{p-1}\overline{\sigma}^{p}{S_0}^p\tilde{C}_p\frac{1}{\frac{p}{2}+1}t^p\label{proof:x close 1.3}
	\end{align}
	for some constant $\tilde{C}_p>0$.
	Hence, from \eqref{proof:x close 1.1} and \eqref{proof:x close 1.3}, we obtain
	\begin{align}
	\mathbb{E}^\mathbb{Q}[\vert \tilde{X}_t-\hat{X}_t \vert^p]
	\le f_p(t)+A_p\int_0^t\mathbb{E}^\mathbb{Q}[\vert \tilde{X}_s-\hat{X}_s \vert^p]\,ds\,,
	\end{align}
	where $f_p(t):=C_p\overline{\sigma}^p2^{p-1}\overline{\sigma}^{p}{S_0}^p\tilde{C}_p\frac{1}{\frac{p}{2}+1}t^p$ and $A_p:=C_p\overline{\sigma}^p2^{p-1}.$ Then, by  Gronwall's inequality, there is a constant $B_p>0$ such that $\mathbb{E}^\mathbb{Q}[\vert \tilde{X}_t-\hat{X}_t\vert^p] \le B_pt^p$  for all $0\le t\le 1$.  
	
	For the first inequality of \eqref{eq:x close}, we also provide the proof for $p\ge2$. By the Burkholder--Davis--Gundy inequality and Jensen's inequality, we have
	\begin{align}
	\mathbb{E}^\mathbb{Q}[\vert X_t-\tilde{X}_t \vert^p] \le C_pt^{\frac{p}{2}-1}\int_0^t\mathbb{E}^\mathbb{Q}[\vert \sigma(s,X_s)X_s-\sigma(s,S_0)\tilde{X}_s \vert^p]\,ds
	\end{align}
	for some constant $C_p>0$. It follows that
	\begin{align}
	&\quad t^{\frac{p}{2}-1}\int_0^t\mathbb{E}^\mathbb{Q}[\vert \sigma(s,X_s)X_s-\sigma(s,S_0)\tilde{X}_s \vert^p]\,ds \\
    &\le 3^{p-1}t^{\frac{p}{2}-1}\int_0^t\mathbb{E}^\mathbb{Q}[\vert \sigma(s,X_s)X_s-\sigma(s,\tilde{X}_s)\tilde{X}_s \vert^p]\,ds \\
	&+3^{p-1}t^{\frac{p}{2}-1}\int_0^t\mathbb{E}^\mathbb{Q}[\vert \sigma(s,\tilde{X}_s)\tilde{X}_s-\sigma(s,\hat{X}_s)\tilde{X}_s \vert^p]\,ds \\
	&+3^{p-1}t^{\frac{p}{2}-1}\int_0^t\mathbb{E}^\mathbb{Q}[\vert \sigma(s,\hat{X}_s)\tilde{X}_s-\sigma(s,S_0)\tilde{X}_s \vert^p]\,ds\,.
	\end{align}
	Observe   that for $0\le t\le1$,
	\begin{align}
	t^{\frac{p}{2}-1}\int_0^t\mathbb{E}^\mathbb{Q}[\vert \sigma(s,X_s)X_s-\sigma(s,\tilde{X}_s)\tilde{X}_s \vert^p]\,ds
	\le \alpha^p\int_0^t\mathbb{E}^\mathbb{Q}[\vert X_s-\tilde{X}_s \vert^p]\,ds\,.
	\end{align}
	By Assumption \ref{classical assumption},  H\"older's inequality and the second inequality of \eqref{eq:x close}, for $0\le t\le1$,
	\begin{equation} 
	\begin{aligned} \label{proof:x close 1.6}
	&\quad t^{\frac{p}{2}-1}\int_0^t\mathbb{E}^\mathbb{Q}[\vert \sigma(s,\tilde{X}_s)\tilde{X}_s-\sigma(s,\hat{X}_s)\tilde{X}_s \vert^p]\,ds \\
    &\le t^{\frac{p}{2}-1}\int_0^t(\mathbb{E}^\mathbb{Q}[\vert \sigma(s,\tilde{X}_s)-\sigma(s,\hat{X}_s) \vert^{2p}])^{\frac{1}{2}}(\mathbb{E}^\mathbb{Q}[\vert\tilde{X}_s\vert^{2p}])^{\frac{1}{2}} \,ds \\
	&\le t^{\frac{p}{2}-1}\alpha^p\int_0^t(\mathbb{E}^\mathbb{Q}[\vert \tilde{X}_s-\hat{X}_s\vert^{2p}])^{\frac{1}{2}}{S_0}^p e^{\frac{p(2p-1)\overline{\sigma}^2}{2}s}\,ds\\
	&\le \alpha^p(B_{2p})^{\frac{1}{2}}{S_0}^p t^{\frac{p}{2}-1}\int_0^t s^p e^{\frac{p(2p-1)\overline{\sigma}^2}{2}s}\,ds\\
	&\le \alpha^p(B_{2p})^{\frac{1}{2}}{S_0}^p t^{p}\int_0^1 e^{\frac{p(2p-1)\overline{\sigma}^2}{2}s}\,ds\,. 
	\end{aligned}	
	\end{equation}
Using $0\le t\le 1$ and $p\ge2$, we have
	\begin{align}
	t^{\frac{p}{2}-1}\int_0^t\mathbb{E}^\mathbb{Q}[\vert \sigma(s,\hat{X}_s)\tilde{X}_s-\sigma(s,S_0)\tilde{X}_s \vert^p]\,ds &\le \alpha^p\overline{\sigma}^p{S_0}^{2p}(\tilde{C}_{2p})^{\frac{1}{2}} t^{\frac{p}{2}-1}\int_0^t s^{\frac{p}{2}} e^{\frac{p(2p-1)\overline{\sigma}^2}{2}s}\,ds\,\label{proof:x close 1.7} \\
	&\le \alpha^p\overline{\sigma}^p{S_0}^{2p}(\tilde{C}_{2p})^{\frac{1}{2}}\frac{1}{\frac{p}{2}+1} t^{p} e^{\frac{p(2p-1)\overline{\sigma}^2}{2}\vee 0}\,.\label{proof:x close 1.8}
	\end{align}
 By combining the above three inequalities with Gronwall's inequality,  there exists a constant $B_p'>0$ such that $\mathbb{E}^\mathbb{Q}[\vert X_t-\tilde{X}_t \vert^p]\le B_p't^p$ for $t\le1.$
 
The proof of the second inequality in \eqref{eq:y close} is similar to that of the second inequality in \eqref{eq:x close} and is therefore omitted. We examine only the first inequality of \eqref{eq:y close}. 
	Observe that
	\begin{align}
	Y_t-\tilde{Y}_t=\int_0^t (\nu(s,X_s)Y_s-\nu(s,X_s)\tilde{Y}_s) \,ds &+\int_0^t  (\nu(s,X_s)\tilde{Y}_s-\nu(s,\hat{X}_s)\tilde{Y}_s)\,ds\\
	&+\int_0^t  (\nu(s,\hat{X}_s)\tilde{Y}_s-\nu(s,S_0)\tilde{Y}_s)\,ds\,.
	\end{align}
	Following the proof of \eqref{eq:x close}, we obtain the desired results.
	The asymptotics in 
	\eqref{eqn:t} also can be shown similarly.	
\end{proof}


The proof of the following lemma is standard and is therefore omitted.

\begin{lemma}\label{lem:dummy general}
Let $(W_t)_{t\ge0}$ be a Brownian motion on a   probability space $(\Omega, \mathcal{F}, \mathbb{Q})$.
Suppose $(\theta_t)_{t\ge0}$ is a process adapted to the Brownian filtration $(\mathcal{F}_t^W)_{t\ge 0}$ and is uniformly bounded.  Define a continuous martingale process $(M_t)_{t\ge0}$ as
	\begin{equation}
	M_t:=M_0e^{-\frac{1}{2}\int_0^t \theta_s^2\,ds+\int_0^t\theta_s\,dW_s}\,, \quad M_0>0\,.
	\end{equation}
	Then, for any $\xi\in\mathbb{R}$, the following three statements hold.
	\begin{enumerate}\renewcommand{\labelenumi}{(\roman{enumi})}
			\renewcommand{\theenumi}{\roman{enumi}} 
		\item $\begin{aligned}\lim_{T\rightarrow{0}}\mathbb{E}^\mathbb{Q}[M_T^{\xi}]=M_0^{\xi}\,.\end{aligned}$
		\item $\begin{aligned} \mathbb{E}^\mathbb{Q}\Big[\max_{0\le t\le T}M_t^{\xi}\Big]<\infty\,, \end{aligned}$ for any $T>0\,.$ Furthermore, $\begin{aligned} \lim_{T\rightarrow{0}}\mathbb{E}^\mathbb{Q}\Big[\max_{0\le t\le T}M_t^{\xi}\Big]=M_0^{\xi}\,. \end{aligned}$
		\item $\begin{aligned} \lim_{T\rightarrow{0}}\mathbb{E}^\mathbb{Q}\left[\left(\frac{1}{T}\int_0^T M_t\,dt\right)^\xi\right]=M_0^{\xi}\,. \end{aligned}$
	\end{enumerate}
\end{lemma}

We now present the short-time behavior of the four processes
$(X_t)_{t\ge0}$, $(\tilde{X}_t)_{t\ge0}$, $(Y_t)_{t\ge0}$, $(\tilde{Y}_t)_{t\ge0}$ in the following lemma. All moments of the random variables $X_T, \tilde{X}_T, Y_T, \tilde{Y}_T$ and their integrals over $[0,T]$ converge to constants as $T\rightarrow{0}.$ We formalize this observation in the following technical statement for later use.

\begin{lemma}\label{lem:dummy} Suppose that Assumption \ref{classical assumption} holds.  Consider the processes $X, \tilde{X}, Y, \tilde{Y}$ introduced in \eqref{eq:X_t},\eqref{eq:Y_t},\eqref{eq:tilde x,y}. 
For any $p_i\in\mathbb{R},$ $i\in\{1,2,3,4\}$, we define
$Z^{p_1,p_2,p_3,p_4}:=X^{p_1}\tilde{X}^{p_2}Y^{p_3}\tilde{Y}^{p_4}$. 
Then the following  properties hold.

	\begin{enumerate}\renewcommand{\labelenumi}{(\roman{enumi})}
			\renewcommand{\theenumi}{\roman{enumi}} 
		\item $\begin{aligned}\mathbb{E}^\mathbb{Q}\Big[\max_{0\le t\le T}Z_t^{p_1,p_2,p_3,p_4}\Big]<\infty\,\end{aligned}$ for any $T>0\,.$\,\, Furthermore, 
        \begin{equation}\begin{aligned}		\lim_{T\rightarrow{0}}\mathbb{E}^\mathbb{Q}\Big[\max_{0\le t\le T}Z_t^{p_1,p_2,p_3,p_4}\Big]=S_0^{p_1+p_2}\,.
		\end{aligned}            
        \end{equation}
		\item Moreover, for any $q_j\in\mathbb{R}\,,$ $j\in\{1,2,\cdots\,8\}\,,$
		\begin{align}
		&\lim_{T\rightarrow{0}}\mathbb{E}^\mathbb{Q}\Bigg[Z_T^{q_1,q_2,q_3,q_4}\left(\frac{1}{T}\int_0^T X_t\,dt\right)^{q_5}\left(\frac{1}{T}\int_0^T \tilde{X}_t\,dt\right)^{q_6}
		\left(\frac{1}{T}\int_0^TY_t\,dt\right)^{q_7}\left(\frac{1}{T}\int_0^T\tilde{Y}_t\,dt\right)^{q_8}\Bigg] \\
		&=S_0^{q_1+q_2+q_5+q_6}\,.\label{eq:dummy}
		\end{align}
	\end{enumerate}
\end{lemma}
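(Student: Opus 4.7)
My plan is to handle the lemma in two stages: first secure uniform-in-$T$ moment bounds on the random variables involved, then pass to the limit by dominated convergence, exploiting path continuity of the four processes at $t=0$.

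For the moment bounds, observe that each of $X,\tilde X,Y,\tilde Y$ solves an SDE of the form $dM_t=\mu_t M_t\,dW_t$ with $M$ strictly positive, so $M_t$ admits a stochastic exponential representation and any real power $M_t^p$ can be written as a stochastic exponential times a bounded-variation factor. For $\tilde X$ and $\tilde Y$ the coefficients $\sigma(t,S_0)$ and $\nu(t,S_0)$ are deterministic and uniformly bounded---the latter because Lipschitzness of $\nu$ together with $\nu(t,0)=0$ yields $|\nu(t,S_0)|\le\alpha S_0$---so Novikov's criterion gives moments of every real order uniformly on $[0,1]$. For $X$ the same argument works because $\sigma(t,X_t)$ is bounded by $\overline\sigma$. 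For $Y$ the coefficient $\nu(t,X_t)$ is only pathwise bounded by $\alpha X_t$, but the identity $Y_t^p=\mathcal E(p\nu(\cdot,X))_t\cdot\exp\!\big(\tfrac{p(p-1)}{2}\int_0^t\nu(s,X_s)^2\,ds\big)$ combined with Cauchy--Schwarz reduces any moment estimate on $Y_t^p$ to an exponential-moment estimate on $\int_0^T X_s^2\,ds$, which is finite for $T$ small enough because $X$ has bounded volatility (via a Dambis--Dubins--Schwarz time change and standard Gaussian tail bounds). H\"older's inequality across the four factors then yields $\mathbb E^{\mathbb Q}\big[\sup_{0\le t\le 1}|Z_t^{p_1,p_2,p_3,p_4}|^q\big]<\infty$ for every real $q$, which already proves the finiteness assertion in (i) and provides the uniform envelopes needed in the sequel.

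For part (i), a.s.\ continuity of paths at the origin forces $\max_{0\le t\le T}Z_t^{p_1,p_2,p_3,p_4}\to S_0^{p_1+p_2}$ almost surely as $T\to 0$, and the supremum moment bound from the first stage delivers uniform integrability, so dominated convergence yields the limit. For part (ii), the time averages $\tfrac1T\int_0^T X_t\,dt$, $\tfrac1T\int_0^T\tilde X_t\,dt$, $\tfrac1T\int_0^T Y_t\,dt$, $\tfrac1T\int_0^T\tilde Y_t\,dt$ each converge almost surely to the common initial value of their integrand ($S_0$, $S_0$, $1$, $1$ respectively), so the full integrand in \eqref{eq:dummy} tends a.s.\ to $S_0^{q_1+q_2+q_5+q_6}$. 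Each power $\big(\tfrac1T\int_0^T X_t\,dt\big)^{q_i}$ is pointwise dominated by $\sup_{0\le t\le 1}X_t^{q_i}$ when $q_i\ge 0$ and by $\big(\inf_{0\le t\le 1}X_t\big)^{q_i}$ when $q_i<0$, and analogously for the other three factors; by the first stage each such dominating variable lies in every $L^r$, so H\"older's inequality produces a uniform $L^1$ envelope and dominated convergence finishes the proof.

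The main obstacle is the moment estimate on $Y_t^p$ for $p<0$: because the coefficient $\nu(t,X_t)$ is random and only linearly controlled by $X_t$ rather than by a deterministic constant, a direct application of Novikov's condition fails. The fix is the chained Cauchy--Schwarz/Girsanov manipulation sketched above, which ultimately rests on the finiteness of $\mathbb E^{\mathbb Q}\!\big[\exp\big(c\int_0^T X_s^2\,ds\big)\big]$ for small $T$; once this single technical input is in place, the rest of the argument is a standard dominated-convergence routine.
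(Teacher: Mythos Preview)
Your overall strategy---establish uniform moment bounds on the four stochastic exponentials and then pass to the limit by dominated convergence---is exactly the paper's. The gap is in your treatment of $Y$. Since $Y$ is the first variation process of $X$, the coefficient must be $\nu(t,x)=\partial_x[\sigma(t,x)x]=\sigma(t,x)+x\,\partial_x\sigma(t,x)$, so $\nu(t,0)=\sigma(t,0)\ge\underline\sigma>0$; your premise $\nu(t,0)=0$ is false, and with it the bound $|\nu(t,X_t)|\le\alpha X_t$. The correct observation, which the paper uses, is that Assumption~\ref{classical assumption}(iii) also contains the Lipschitz condition $|\sigma(t,x)x-\sigma(t,y)y|\le\alpha|x-y|$, and since $\nu$ is precisely the $x$-derivative of this $C^1$ map one gets $|\nu(t,x)|\le\alpha$ \emph{uniformly} in $(t,x)$. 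With this bound $Y$ has uniformly bounded volatility just like $X,\tilde X,\tilde Y$, the general stochastic-exponential lemma applies to all four processes simultaneously, and no extra work is needed.

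Your proposed workaround---reducing moments of $Y_t^p$ to the finiteness of $\mathbb{E}^\mathbb{Q}\big[\exp\big(c\int_0^T X_s^2\,ds\big)\big]$ for small $T$---does not rescue the argument, because that exponential moment is infinite for every $c>0$ and every $T>0$, already in the Black--Scholes case $\sigma\equiv\mathrm{const}$. There $\int_0^T X_s^2\,ds$ is a constant multiple of $\int_0^T e^{2\sigma W_s}\,ds$; on the event $\{\inf_{T/2\le s\le T}W_s\ge x\}$, which has probability at least $c(T)e^{-(x+1)^2/T}$, the integral exceeds $\tfrac{T}{2}e^{2\sigma x}$, and hence
\[
\mathbb{E}^\mathbb{Q}\Big[\exp\Big(c\int_0^T X_s^2\,ds\Big)\Big]\;\ge\; c(T)\exp\!\Big(c'\,e^{2\sigma x}-\tfrac{(x+1)^2}{T}\Big)\longrightarrow\infty\quad(x\to\infty).
\]
The double exponential always beats the Gaussian tail, so the Novikov/Girsanov route you sketch cannot close. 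The uniform bound $|\nu|\le\alpha$ is therefore not a convenience but the essential input.
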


\begin{proof}
 Observe that all processes $X, \tilde{X}, Y, \tilde{Y}$ satisfy the assumptions of Lemma~\ref{lem:dummy general}. Hence, from Lemma \ref{lem:dummy general} and  H\"older's inequality,
	we have $\mathbb{E}^\mathbb{Q}[\underset{0\le t\le T}{\max}Z_t^{p_1,p_2,p_3,p_4}]<\infty$  and $\limsup_{T\rightarrow{0}}\mathbb{E}^\mathbb{Q}[\underset{0\le t\le T}{\max}Z_t^{p_1,p_2,p_3,p_4}]\le S_0^{p_1+p_2}$. Moreover, by  Fatou's lemma,   $S_0^{p_1+p_2}\le \liminf_{T\rightarrow{0}}\mathbb{E}^\mathbb{Q}[\underset{0\le t\le T}{\max}Z_t^{p_1,p_2,p_3,p_4}].$ The inequality~\eqref{eq:dummy} can be proved in a similar manner.
\end{proof}

\section{Short-maturity limit of option prices}
\label{sec:Short maturity limit of an option price}




\begin{lemma}\label{lem:price drift 0}
	Under Assumptions \ref{classical assumption}
	and \ref{Holder payoff assumption}, as $T\rightarrow{0}$, we have
	\begin{enumerate}\renewcommand{\labelenumi}{(\roman{enumi})}
			\renewcommand{\theenumi}{\roman{enumi}} 
		\item $ \begin{aligned} P_A(T)=e^{-rT}\,\mathbb{E}^\mathbb{Q}\left[\Phi\left(\frac{1}{T}\int_0^T X_t\,dt\right)\right]+\mathcal{O}(T^\gamma)\,,\end{aligned}$
		\item $ \begin{aligned} P_E(T)=e^{-rT}\,\mathbb{E}^\mathbb{Q}[\Phi(X_T)]+\mathcal{O}(T^\gamma)\,. \end{aligned}$
	\end{enumerate}
\end{lemma}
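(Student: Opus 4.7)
The strategy is to exploit the Lipschitz continuity of $\Phi$ to bound both differences by an $L^1$ distance between $(S_t)_{0\le t\le T}$ and $(X_t)_{0\le t\le T}$, and then to establish the key moment estimate $\mathbb{E}^\mathbb{Q}[|S_t-X_t|]\le Ct$ for $t\in[0,T]$ with $T\le 1$ by standard SDE techniques. The only mechanism causing $S$ and $X$ to diverge is the drift $(r-q)S_t\,dt$, which is $\mathcal{O}(t)$ in integrated size, so the claimed $\mathcal{O}(T)$ bound is what one expects.

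First, by Jensen's inequality and Assumption \ref{payoff assumption},
\begin{equation}
\Bigl| P_A(T) - e^{-rT}\mathbb{E}^\mathbb{Q}\Bigl[\Phi\Bigl(\tfrac{1}{T}\int_0^T X_t\,dt\Bigr)\Bigr]\Bigr|
\le \beta e^{-rT}\,\frac{1}{T}\int_0^T \mathbb{E}^\mathbb{Q}[|S_t-X_t|]\,dt,
\end{equation}
and directly,
\begin{equation}
|P_E(T)-e^{-rT}\mathbb{E}^\mathbb{Q}[\Phi(X_T)]| \le \beta e^{-rT}\,\mathbb{E}^\mathbb{Q}[|S_T-X_T|].
\end{equation}
Both (i) and (ii) therefore follow at once from the pointwise bound $\mathbb{E}^\mathbb{Q}[|S_t-X_t|]\le Ct$ for $t\in[0,T]$.

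To establish this, subtract \eqref{eq:S_t} from \eqref{eq:X_t} to get
\begin{equation}
S_t - X_t = (r-q)\int_0^t S_s\,ds + \int_0^t \bigl[\sigma(s,S_s)S_s - \sigma(s,X_s)X_s\bigr]\,dW_s.
\end{equation}
Squaring, then applying Cauchy--Schwarz to the drift integral and the It\^o isometry to the martingale integral, and finally invoking the uniform Lipschitz estimate for $x\mapsto \sigma(t,x)x$ from Assumption \ref{classical assumption} (iii), yields
\begin{equation}
\mathbb{E}^\mathbb{Q}[(S_t-X_t)^2] \le C_1 t\int_0^t \mathbb{E}^\mathbb{Q}[S_s^2]\,ds + C_2\int_0^t \mathbb{E}^\mathbb{Q}[(S_s-X_s)^2]\,ds.
\end{equation}
The boundedness of $\sigma$ together with standard moment estimates for linear-growth SDEs gives $\sup_{0\le s\le 1}\mathbb{E}^\mathbb{Q}[S_s^2]<\infty$, so the first term is $\mathcal{O}(t^2)$ uniformly. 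Gronwall's lemma then yields $\mathbb{E}^\mathbb{Q}[(S_t-X_t)^2]\le C t^2$, and Cauchy--Schwarz gives $\mathbb{E}^\mathbb{Q}[|S_t-X_t|]\le Ct$, completing the reduction.

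The main obstacle is essentially bookkeeping: each constant appearing in the BDG/It\^o and moment bounds must be tracked to ensure it depends only on the fixed model data $r,q,\underline{\sigma},\overline{\sigma},\alpha,S_0$ and not on $t$ or $T$, so that the estimate is genuinely uniform as $T\to 0$. No delicate analytical step is anticipated beyond this.
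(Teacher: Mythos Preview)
Your proof is correct and follows essentially the same strategy as the paper: reduce via the Lipschitz bound on $\Phi$ to an $L^1$ estimate on $S_t-X_t$, establish $\mathbb{E}^\mathbb{Q}[(S_t-X_t)^2]=\mathcal{O}(t^2)$ by a Gronwall argument, and conclude via Cauchy--Schwarz. The only cosmetic difference is that the paper first passes through the auxiliary driftless process $\tilde S_t:=e^{-(r-q)t}S_t$ and bounds $\mathbb{E}^\mathbb{Q}[|\tilde S_t-X_t|^2]$ and $\mathbb{E}^\mathbb{Q}[|S_t-\tilde S_t|^2]$ separately, whereas you work directly with $S_t-X_t$; your route is slightly more economical and yields the same $\mathcal{O}(t^2)$ bound.
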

\begin{proof} Observe that
	\begin{equation}
	\begin{aligned}
	\mathbb{E}^\mathbb{Q}[|S_t-X_t|^2]
	&=2	\mathbb{E}^\mathbb{Q}\left[\Big|\int_0^t(r-q)S_u\,du\Big|^2\right]+2	\mathbb{E}^\mathbb{Q}\left[\Big|\int_0^t\sigma(u,S_u)S_u-\sigma(u,X_u)X_u\,dW_u \Big|^2\right]\\
	&\le 2t	\mathbb{E}^\mathbb{Q}\left[ \int_0^t(r-q)^2S_u^2\,du \right]+2	\mathbb{E}^\mathbb{Q}\left[ \int_0^t(\sigma(u,S_u)S_u-\sigma(u,X_u)X_u)^2\,du\right]\\
	&\le 2t	\mathbb{E}^\mathbb{Q}  \int_0^t(r-q)^2\mathbb{E}^\mathbb{Q}[S_u^2]\,du +2\alpha^2	 \int_0^t\mathbb{E}^\mathbb{Q}[|S_u-X_u|^2]\,du
\end{aligned}
	\end{equation}
	where $\alpha$ is the Lipschitz constant in Assumption \ref{classical assumption}. 
	By Gronwall's Inequality, we have 
	$$	\mathbb{E}^\mathbb{Q}[|S_t-X_t|^2]\leq 2e^{2\alpha^2t}	 t\int_0^t(r-q)^2\mathbb{E}^\mathbb{Q}[S_u^2]\,du \leq ct^2$$
	for some positive constant $c$ independent of $t\in [0,T].$ 
	For $0<\gamma\le1,$
	$$	\mathbb{E}^\mathbb{Q}[|X_t-S_t|^\gamma]^{\frac{2}{\gamma}}\leq 	\mathbb{E}^\mathbb{Q}[|X_t-S_t|^2]\le ct^2\,.$$
	It follows that $\mathbb{E}^\mathbb{Q}[|X_t-S_t|^\gamma]\leq Ct^\gamma$ for some positive constant $C.$ 	
	From this inequality and Lemma \ref{lem:dummy}, we obtain the desired proof.
\end{proof}

As can be seen in \eqref{eq:S_t} and \eqref{eq:X_t}, the processes $S$ and $X$ have the same diffusion terms; however, the drift term of $X$ is zero.
Thus, this lemma implies that 
while estimating the Asian and European option prices,
the drift of the underlying stock becomes negligible at small $T>0$.  In \cite{PirjolDan2016SMAO, PirjolDan2019SMAO}, the rate function that governs the short-maturity behavior of the Asian call and put option was shown to be independent of the drift term.

The following is the proof of Theorem \ref{thm:option price holder}.

\begin{proof} 
	Choose $q>1$ such that $\gamma q>1\,.$ Then, by Lemma \ref{lem:price drift 0}, as $T\rightarrow{0}$, we have $$P_A(T)=\mathbb{E}^\mathbb{Q}\left[\Phi\left(\frac{1}{T}\int_0^T X_t\,dt\right)\right]+\mathcal{O}(T^{\gamma})\,.$$ 
By  Lemma \ref{lem:x,y close},   
\begin{align}
\quad\left\vert \mathbb{E}^\mathbb{Q}\left[\Phi\left(\frac{1}{T}\int_0^T X_t\,dt\right)\right]-\mathbb{E}^\mathbb{Q}\left[\Phi\left(\frac{1}{T}\int_0^T \hat{X}_t\,dt\right)\right]\right\vert^{\frac{1}{\gamma}}
&\le \beta \left\vert \mathbb{E}^\mathbb{Q}\left[\left|\frac{1}{T}\int_0^TX_t-\hat{X}_t\,dt\right|^{\gamma}\right]\right\vert^{\frac{1}{\gamma}} \\
&\le \beta  \mathbb{E}^\mathbb{Q}\left[\left|\frac{1}{T}\int_0^TX_t-\hat{X}_t\,dt\right|\right]\\
&\le   \frac{\beta}{T}\int_0^T\mathbb{E}^\mathbb{Q}[\vert X_t-\hat{X}_t\vert]\,dt\le C T\,
\end{align}
for some positive constant $C$.
This yields  	$$P_A(T)=\mathbb{E}^\mathbb{Q}\left[\Phi\left(\frac{1}{T}\int_0^T \hat{X}_t\,dt\right)\right]+\mathcal{O}(T^{\gamma})=\mathbb{E}^\mathbb{Q}[\Phi(S_0+S_0\sigma_A(T)\sqrt{T}Z)]+\mathcal{O}(T^{\gamma})\,.$$
For the last equality,  
	we have used
	\begin{align}
	\mathbb{E}^\mathbb{Q}\left[\Phi\left(\frac{1}{T}\int_0^T \hat{X}_t\,dt\right)\right]
	&=\mathbb{E}^\mathbb{Q}\left[\Phi\left(S_0+\frac{S_0}{T}\int_0^T \int_0^t\sigma(s,S_0)\,dW_s\,dt\right)\right] \\
	&=\mathbb{E}^\mathbb{Q}\left[\Phi\left(S_0+\frac{S_0}{T}\int_0^T \sigma(s,S_0)(T-s)\,dW_s\right)\right]\,,
	\end{align}
which is obtained from 	the stochastic Fubini theorem.  	
Similarly, using Lemma~\ref{lem:price drift 0}, we obtain $$	P_E(T)=\mathbb{E}^\mathbb{Q}[\Phi(S_0+S_0\sigma_E(T)\sqrt{T}Z)]+\mathcal{O}(T^{\gamma})\,.$$  
This completes the proof.	
\end{proof}

\section{Technical lemmas}


We recall the processes $(u_s)_{0\le s\le T}$, $(\hat{u}_s)_{0\le s\le T}$ and the random variables $F$, $\hat{F}$ defined in \eqref{eqn:uF} and \eqref{eqn:hat_uF}.
Furthermore, we define
$$ \tilde{u}_s:=\frac{2{\tilde{Y}_s}^2}{\sigma(s,\tilde{X}_s)\tilde{X}_s}\,, \; \tilde{F}:=\frac{1}{\int_0^T \tilde{Y}_t\,dt}\,.$$

\begin{lemma}\label{lem:u,TF app}
	Let Assumption \ref{classical assumption} hold. Then, for any $p>0,$ there exists a positive constant $D_p$ depending only on $p$ such that the following inequalities hold. 
	\begin{enumerate}\renewcommand{\labelenumi}{(\roman{enumi})}
			\renewcommand{\theenumi}{\roman{enumi}} 
		\item For $0\le t\le 1$,
		\begin{align}\label{eq:u app}
		\mathbb{E}^\mathbb{Q}[\left\vert u_t-\tilde{u}_t \right\vert^p] \le D_pt^p, \quad \mathbb{E}^\mathbb{Q}[\left\vert \tilde{u}_t-\hat{u}_t \right\vert^p]\le D_pt^p.
		\end{align}
		\item For $0\le T\le 1$,
		\begin{align}\label{eq:TF app}
		\mathbb{E}^\mathbb{Q}[\vert TF-T\tilde{F} \vert^p] \le D_pT^p, \quad \mathbb{E}^\mathbb{Q}[\vert T\tilde{F}-T\hat{F} \vert^p] \le D_pT^p.
		\end{align}
	\end{enumerate}
\end{lemma}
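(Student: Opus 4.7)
The plan is to reduce every statement in the lemma to a difference of ratios, write each ratio difference via the identity $\frac{a}{b}-\frac{c}{d}=\frac{(a-c)d+c(d-b)}{bd}$, and then bound the $L^p$ norms by Hölder's inequality, using Lemma \ref{lem:x,y close} for the numerators, the Assumption \ref{classical assumption} Lipschitz bounds, and Lemma \ref{lem:dummy} together with lower bounds of the denominators for the rest. The key ellipticity estimate $\sigma(s,x)x\ge\underline{\sigma}x$ together with the fact that $X,\tilde{X},Y,\tilde{Y}$ are exponential martingales driven by bounded integrands (recall that Lipschitzness of $\sigma(t,\cdot)\cdot$ forces $|\nu(t,x)|\le\alpha$) gives uniform negative-moment control for $X_t,\tilde{X}_t,Y_t,\tilde{Y}_t$ on $[0,1]$. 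All the resulting deterministic factors are uniformly bounded in $t\in[0,1]$, so only the $\mathcal{O}(t)$ factor from Lemma \ref{lem:x,y close} survives in the final estimate, yielding the claimed $t^p$ bound.

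For part (i), the first estimate $\mathbb{E}^\mathbb{Q}[|u_t-\tilde{u}_t|^p]\le D_pt^p$ follows from
\[
 u_t-\tilde{u}_t = \frac{2\bigl(Y_t^2-\tilde{Y}_t^2\bigr)\sigma(t,\tilde{X}_t)\tilde{X}_t + 2\tilde{Y}_t^2\bigl(\sigma(t,\tilde{X}_t)\tilde{X}_t-\sigma(t,X_t)X_t\bigr)}{\sigma(t,X_t)X_t\,\sigma(t,\tilde{X}_t)\tilde{X}_t},
\]
applying Lipschitz continuity of $x\mapsto\sigma(t,x)x$, the factorization $Y_t^2-\tilde{Y}_t^2=(Y_t-\tilde{Y}_t)(Y_t+\tilde{Y}_t)$, then Hölder, Lemma \ref{lem:x,y close}, and the negative-moment bounds above. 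For the second estimate $\mathbb{E}^\mathbb{Q}[|\tilde{u}_t-\hat{u}_t|^p]\le D_pt^p$, the indicator $\mathbbm{1}_{\{\hat{X}_t\ge S_0/2\}}$ in $\hat{u}_t$ forces a split. On the set $A_t:=\{\hat{X}_t\ge S_0/2\}$ the same algebraic identity applies with a deterministic lower bound $\sigma(t,\hat{X}_t)\hat{X}_t\ge\underline{\sigma}S_0/2$ and with $\tilde{X}_t$ staying positive with all moments, reducing the task to Lemma \ref{lem:x,y close}. On the complement $A_t^c$ one uses the Hölder split $\|\tilde{u}_t\mathbbm{1}_{A_t^c}\|_{L^p}\le\|\tilde{u}_t\|_{L^{2p}}\mathbb{Q}(A_t^c)^{1/(2p)}$; since $\hat{X}_t-S_0$ is Gaussian with variance $\mathcal{O}(t)$, a Gaussian tail estimate gives $\mathbb{Q}(A_t^c)=\mathbb{Q}(\hat{X}_t-S_0\le -S_0/2)\le C\exp(-c/t)$, which is $o(t^p)$ for every $p$.

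Part (ii) is handled identically after noting that $TF=\bigl(\tfrac{1}{T}\int_0^T Y_t\,dt\bigr)^{-1}$ and likewise for $T\tilde{F}, T\hat{F}$. Write
\[
 TF-T\tilde{F}=\frac{\tfrac{1}{T}\int_0^T(\tilde{Y}_t-Y_t)\,dt}{\bigl(\tfrac{1}{T}\int_0^T Y_t\,dt\bigr)\bigl(\tfrac{1}{T}\int_0^T \tilde{Y}_t\,dt\bigr)},
\]
bound the numerator's $L^p$ norm by $\tfrac{1}{T}\int_0^T\|Y_t-\tilde{Y}_t\|_{L^p}\,dt=\mathcal{O}(T)$ via Lemma \ref{lem:x,y close}, and control the denominator through its negative moments — which are uniformly finite on $T\in[0,1]$ since $Y$ and $\tilde{Y}$ are exponential martingales with bounded integrands. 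For $T\tilde{F}-T\hat{F}$, the indicator in $T\hat{F}$ is treated exactly as in the $\hat{u}$ case: on $\{\tfrac{1}{T}\int_0^T\hat{Y}_t\,dt\ge 1/2\}$ the denominator is uniformly bounded below by $1/2$, while its complement is a Gaussian-tail event of super-polynomially small probability since $\tfrac{1}{T}\int_0^T\hat{Y}_t\,dt-1$ is Gaussian with variance $\mathcal{O}(T)$.

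The main obstacle is the passage from tilded to hatted processes, because $\hat{X}$ and $\hat{Y}$ are Gaussian rather than lognormal and their reciprocals are not integrable without the indicator cutoffs; the whole argument depends on quantifying that the truncation is essentially invisible at short maturity, which is where the Gaussian tail estimate for $\{\hat{X}_t<S_0/2\}$ and $\{\tfrac{1}{T}\int_0^T\hat{Y}_t\,dt<1/2\}$ becomes crucial. The routine parts are the ratio decompositions and repeated applications of Hölder, Lipschitzness, and Lemmas \ref{lem:x,y close}--\ref{lem:dummy}.
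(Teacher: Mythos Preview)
Your proposal is correct and follows essentially the same approach as the paper's proof: the same ratio decomposition $\frac{a}{b}-\frac{c}{d}=\frac{(a-c)d+c(d-b)}{bd}$, the same use of H\"older together with Lemma~\ref{lem:x,y close} and the moment bounds of Lemma~\ref{lem:dummy}, and the same split on the indicator events with the Gaussian tail estimate $\mathbb{Q}\{\hat{X}_t<S_0/2\}\le N(-1/(2\overline{\sigma}\sqrt{t}))$ (and analogously for $\tfrac{1}{T}\int_0^T\hat{Y}_t\,dt$) to show the truncation is negligible. The only cosmetic difference is that the paper writes the two pieces of $u_t-\tilde u_t$ over separate denominators rather than a common one.
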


\begin{proof}
It suffices to establish the result for $p \ge 1$. The case $0<p<1$ follows by applying Jensen’s inequality in the same manner as in \eqref{eqn:jensen}. For the first inequality in \eqref{eq:u app}, observe that
	\begin{align}
	\vert u_t-\tilde{u}_t \vert^p &=\left\vert \frac{2(Y_t+\tilde{Y}_t)(Y_t-\tilde{Y}_t)}{\sigma(t,X_t)X_t}+\frac{2{\tilde{Y}_t}^2(\sigma(t,\tilde{X}_t)\tilde{X}_t-\sigma(t,X_t)X_t)}{\sigma(t,X_t)X_t\sigma(t,\tilde{X}_t)\tilde{X}_t} \right\vert^p\,.
	\end{align}
	From Lemmas \ref{lem:x,y close}, \ref{lem:dummy} and  H\"older's inequality, we obtain
	$\mathbb{E}^\mathbb{Q}[\vert u_t-\tilde{u}_t \vert^p]\le D_pt^p$.
	The second inequality of \eqref{eq:u app} can be proven as follows. We have
	\begin{align}
	&\mathbb{E}^\mathbb{Q}[\vert\tilde{u}_t-\hat{u}_t\vert^p] \\
	&\le 2^{p-1}\mathbb{E}^\mathbb{Q}\left[\left\vert \frac{2\tilde{Y}_t^2}{\sigma(t,\tilde{X}_t)\tilde{X}_t}\right\vert^p\mathbbm{1}_{\{\hat{X}_t<\frac{S_0}{2}\}}\right] 
	+2^{p-1}\mathbb{E}^\mathbb{Q}\left[\left\vert\frac{2\tilde{Y}_t^2}{\sigma(t,\tilde{X}_t)\tilde{X}_t}-\frac{2\hat{Y}_t^2}{\sigma(t,\hat{X}_t)\hat{X}_t}\right\vert^p\mathbbm{1}_{\{\hat{X}_t\ge \frac{S_0}{2}\}}\right].
	\end{align}
	Following the same argument as above, there exists a positive constant $D_p$ such that
	\begin{align}
	\mathbb{E}^\mathbb{Q}\left[\left\vert\frac{2\tilde{Y}_t^2}{\sigma(t,\tilde{X}_t)\tilde{X}_t}-\frac{2\hat{Y}_t^2}{\sigma(t,\hat{X}_t)\hat{X}_t}\right\vert^p\mathbbm{1}_{\{\hat{X}_t\ge \frac{S_0}{2}\}}\right]\le D_pt^p\,
	\end{align}
	for all $0\le t\le 1.$ From Assumption \ref{classical assumption} and  H\"older's inequality, it follows that
	\begin{align}
	\mathbb{E}^\mathbb{Q}\left[\left\vert \frac{2\tilde{Y}_t^2}{\sigma(t,\tilde{X}_t)\tilde{X}_t}\right\vert^p\mathbbm{1}_{\{\hat{X}_t<\frac{S_0}{2}\}}\right] \le \frac{2^p}{\underline{\sigma}^p}\left(\mathbb{E}^\mathbb{Q}\Big[\max_{0\le s\le 1}\left(\tilde{Y}_s^{4p}\tilde{X}_s^{-2p}\right)\Big]\right)^{\frac{1}{2}}\left(\mathbb{Q}\left[\hat{X}_t<\frac{S_0}{2}\right]\right)^{\frac{1}{2}}\,.\label{proof:u close 2.3}
	\end{align}
Let $Z$ denote a standard normal random variable under the measure $\mathbb{Q}$, and let $N(\cdot)$ denote its cumulative distribution function.
 Since $\hat{X}_t$ is a normal random variable, 
	\begin{equation}\label{eq:hatx vanish}
	\mathbb{Q}\left[\hat{X}_t<\frac{S_0}{2}\right]
	=N\Bigg(-\frac{1}{2\sqrt{\int_0^t\sigma(u,S_0)^2\,du}}\Bigg)
	\le N\left(-\frac{1}{2\overline{\sigma}\sqrt{t}}\right)\,.
	\end{equation}
	It is evident that $N(-\frac{1}{2\overline{\sigma}\sqrt{t}})$ $=o(t^{q})$ as $t\rightarrow{0}$  for any $q>0$. Thus, from Lemma \ref{lem:dummy},
	we obtain \eqref{eq:u app}.

  We now prove the second inequality of \eqref{eq:TF app}. Observe that
	\begin{align}
	&\quad\mathbb{E}^\mathbb{Q}[\vert T\tilde{F}-T\hat{F} \vert^p]\\
    &\le 2^{p-1}\mathbb{E}^\mathbb{Q}\left[\left(\frac{1}{T}\int_0^T \tilde{Y}_t\,dt\right)^{-p}\mathbbm{1}_{\{\frac{1}{T}\int_0^T \hat{Y}_t\,dt<\frac{1}{2}\}}\right] \\
	&+2^{p-1}\mathbb{E}^\mathbb{Q}\left[\left(\frac{1}{T}\int_0^T \vert\tilde{Y}_t-\hat{Y}_t\vert\,dt\right)^{p}\left(\frac{1}{T}\int_0^T \tilde{Y}_t\,dt\right)^{-p}\left(\frac{1}{T}\int_0^T\hat{Y}_t\,dt\right)^{-p}\mathbbm{1}_{\{\frac{1}{T}\int_0^T \hat{Y}_t\,dt\ge \frac{1}{2}\}}\right]. 
	\end{align}
	From Lemmas \ref{lem:x,y close},   \ref{lem:dummy} and H\"older's inequality, for $0\le t \le T$, we have
	\begin{align}
	\mathbb{E}^\mathbb{Q}\left[\left(\frac{1}{T}\int_0^T \vert\tilde{Y}_t-\hat{Y}_t\vert\,dt\right)^{p}\left(\frac{1}{T}\int_0^T \tilde{Y}_t\,dt\right)^{-p}\left(\frac{1}{T}\int_0^T\hat{Y}_t\,dt\right)^{-p}\mathbbm{1}_{\{\frac{1}{T}\int_0^T \hat{Y}_t\,dt\ge \frac{1}{2}\}}\right]\le D_pT^p.
	\end{align}
	By the stochastic Fubini theorem  and Assumption \ref{classical assumption}, it follows that
	\begin{align}
	\mathbb{Q}\left[\frac{1}{T}\int_0^T \hat{Y}_t\,dt<\frac{1}{2}\right]
	&=\mathbb{Q}\left[1+\frac{1}{T}\int_0^T\int_0^T \nu(s,S_0)\mathbbm{1}_{\{s\le t\}}\,dW_s\,dt<\frac{1}{2}\right] \\
	&=\mathbb{Q}\left[\frac{1}{T}\int_0^T \nu(s,S_0)(T-s)\,dW_s<-\frac{1}{2}\right] \\
	&=\mathbb{Q}\left[\frac{1}{T}\sqrt{\int_0^T \nu^2(s,S_0)(T-s)^2\,ds}\,Z<-\frac{1}{2}\right] \\
	&\le N\bigg(-\frac{\sqrt{3}}{2\alpha\sqrt{T}}\bigg)\,.\label{eq:hatY vanish}
	\end{align}
  Since $N(-\frac{\sqrt{3}}{2\alpha\sqrt{T}})=o(T^{q})$ as $T\rightarrow{0}$ for any $q>0$, the second inequality in \eqref{eq:TF app} follows.
  The first inequality in \eqref{eq:TF app} can be shown similarly by applying Lemmas~\ref{lem:x,y close}, \ref{lem:dummy}, and H\"older’s inequality.
\end{proof}


The following lemma shows that the moments of $D_sX_t$, $D_sY_t$, $D_s\tilde{Y}_t$ and $D_s\hat{Y}_t$ are bounded. 

\begin{lemma}\label{lem:Ds bound}
	Let Assumption \ref{classical assumption} hold. Then, for any $p>0$, there exists a positive constant $E_p$ depending only on $p$ such that the following inequalities hold.
	\begin{enumerate}\renewcommand{\labelenumi}{(\roman{enumi})}
			\renewcommand{\theenumi}{\roman{enumi}} 
		\item For $0\le t\le 1$,
		\begin{equation}\label{eq:Dsx bound}
		\underset{s\ge0}{\sup}\,\mathbb{E}^\mathbb{Q}[\vert D_{s}X_t \vert^p] \le E_p.
		\end{equation}
		\item For $0\le t\le 1$,
		\begin{equation}\label{eq:DsY bound}
		\underset{s\ge0}{\sup}\,\mathbb{E}^\mathbb{Q}[\vert D_{s}Y_t \vert^p] \le E_p, \quad
		\underset{s\ge0}{\sup}\,\mathbb{E}^\mathbb{Q}[\vert D_{s}\tilde{Y}_t \vert^p] \le E_p,\quad
		\underset{s\ge0}{\sup}\,\mathbb{E}^\mathbb{Q}[\vert D_{s}\hat{Y}_t \vert^p] \le E_p.
		\end{equation}
		\item For $0\le T\le 1$,
		\begin{equation}\label{eq:TDsF bound}
		\underset{s\ge0}{\sup}\,\mathbb{E}^\mathbb{Q}[\vert TD_{s}F\vert^p] \le E_p,\quad
		\underset{s\ge0}{\sup}\,\mathbb{E}^\mathbb{Q}[\vert TD_{s}\tilde{F}\vert^p] \le E_p,\quad
		\underset{s\ge0}{\sup}\,\mathbb{E}^\mathbb{Q}[\vert TD_{s}^{*}\hat{F}\vert^p] \le E_p.
		\end{equation}
	\end{enumerate}
\end{lemma}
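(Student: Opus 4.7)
The plan is to address parts (i), (ii), (iii) sequentially, deriving in each case a closed-form SDE for the Malliavin derivative and extracting moment bounds via standard tools. Throughout, for $s>t$ the derivatives $D_sX_t$, $D_sY_t$, $D_s\tilde Y_t$, $D_s\hat Y_t$ vanish by adaptedness, so one restricts to $s\le t$. For (i), the Malliavin chain rule applied to $X$ yields the linear SDE
\begin{equation}
D_sX_t=\sigma(s,X_s)X_s+\int_s^t\nu(u,X_u)\,D_sX_u\,dW_u,
\end{equation}
whose explicit solution is $D_sX_t=\sigma(s,X_s)X_s\cdot\mathcal{E}_s^t$, with $\mathcal{E}_s^t$ the stochastic exponential of $u\mapsto\int_s^u\nu(r,X_r)\,dW_r$. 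Since $\sigma\le\overline{\sigma}$, since $\max_{u\le 1}X_u$ has all polynomial moments by Lemma \ref{lem:dummy}, and since $\nu$ grows at most linearly, Cauchy-Schwarz applied to $\sigma(s,X_s)X_s\cdot\mathcal{E}_s^t$ delivers a uniform $L^p$ bound $E_p$.

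The argument for (ii) follows the same template. For $\hat Y$, the deterministic coefficient gives $D_s\hat Y_t=\nu(s,S_0)$ on $s\le t$, uniformly bounded by $\overline{\sigma}+\alpha S_0$. For $\tilde Y$, the SDE for $D_s\tilde Y_t$ has deterministic coefficient $\nu(\cdot,S_0)$, so its solution is an explicit lognormal and all moments are immediate. For $Y$, the product rule produces
\begin{equation}
D_sY_t=\nu(s,X_s)Y_s+\int_s^t\left[\rho(u,X_u)(D_sX_u)Y_u+\nu(u,X_u)D_sY_u\right]dW_u,
\end{equation}
a linear inhomogeneous SDE whose source term is controlled via (i) together with Lemma \ref{lem:dummy}; Gronwall combined with the Burkholder-Davis-Gundy inequality closes the estimate. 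For (iii), the chain rule gives
\begin{equation}
TD_sF=-\frac{\frac{1}{T}\int_0^T D_sY_t\,dt}{\bigl(\frac{1}{T}\int_0^T Y_t\,dt\bigr)^2},
\end{equation}
with analogous formulas for $TD_s\tilde F$ and $TD_s^{*}\hat F$. H\"older splits the bound into a numerator, controlled by (ii) and Jensen (after passing to a sufficiently large $L^q$ moment with $q\ge 1$), and a denominator, requiring negative moments of $\frac{1}{T}\int_0^T Y_t\,dt$ and its analogues. For $\hat F$ the indicator $\mathbbm{1}_{\{\frac{1}{T}\int_0^T\hat Y_t\,dt\ge 1/2\}}$ pins the denominator at most $4$ and the bound is immediate. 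For $F$ and $\tilde F$, Jensen on the convex map $x\mapsto x^{-p}$ reduces the task to bounding $\sup_{t\in[0,1]}\mathbb{E}^{\mathbb{Q}}[Y_t^{-p}]$ and the analogue for $\tilde Y_t$.

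The main obstacle is this negative-moment estimate. Writing $Y_t=\exp(M_t-\frac{1}{2}\langle M\rangle_t)$ with $M_t=\int_0^t\nu(u,X_u)\,dW_u$ produces the identity $Y_t^{-p}=\mathcal{E}(-pM)_t\cdot\exp(\frac{p(p+1)}{2}\langle M\rangle_t)$; Cauchy-Schwarz peels off the Dol\'eans-Dade exponential (a positive supermartingale with expectation at most $1$) from $\mathbb{E}^{\mathbb{Q}}[\exp(p(p+1)\langle M\rangle_t)]$. Finiteness of the latter on $[0,1]$ follows from the decomposition $\langle M\rangle_t\le 2t(\overline{\sigma}+\alpha S_0)^2+2\alpha^2\int_0^t(X_u-S_0)^2\,du$ combined with the sub-Gaussian behavior $(X_u-S_0)^2=\mathcal{O}(u)$ in the short-time regime, which yields sufficient exponential moments when $t\le 1$. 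The $\tilde Y$ case is strictly easier because $\nu(\cdot,S_0)$ is bounded. Once these negative moments are in hand, H\"older closes (iii) and yields the uniform constant $E_p$.
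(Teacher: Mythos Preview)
Your architecture for (i), (ii) and the chain-rule reduction in (iii) matches the paper's, but the negative-moment step in (iii) has a genuine gap. You bound $\langle M\rangle_t=\int_0^t\nu^2(u,X_u)\,du$ by a constant plus $2\alpha^2\int_0^t(X_u-S_0)^2\,du$ and then claim this has the exponential moments you need. It does not: already under Black--Scholes, $X_u-S_0$ has lognormal-type tails, and for any $c>0$ the moment $\mathbb{E}^{\mathbb{Q}}\bigl[\exp\bigl(c\int_0^1(X_u-S_0)^2\,du\bigr)\bigr]$ is infinite (on the event $\{\min_{u\in[3/4,1]}W_u\ge M\}$ the integral exceeds a constant times $e^{2\sigma M}$, while the event has probability only of order $e^{-c'M^2}$). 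Your Cauchy--Schwarz step does not rescue this either, since $\mathbb{E}^{\mathbb{Q}}[\mathcal{E}(-pM)_t^2]$ itself contains a factor $e^{p^2\langle M\rangle_t}$ and so requires the same unbounded exponential moment.

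The missing observation --- and the point on which the paper's argument turns --- is that Assumption~\ref{classical assumption} forces $\nu$ to be \emph{uniformly bounded}, not merely of linear growth: the Lipschitz condition on $x\mapsto\sigma(t,x)x$ bounds its derivative $\nu(t,x)+\sigma(t,x)$ by $\alpha$, whence $|\nu|\le\alpha+\overline\sigma$. Then $\langle M\rangle_t\le(\alpha+\overline\sigma)^2 t$ deterministically, and your own identity $Y_t^{-p}=\mathcal{E}(-pM)_t\cdot e^{\frac{p(p+1)}{2}\langle M\rangle_t}$ yields $\mathbb{E}^{\mathbb{Q}}[Y_t^{-p}]\le e^{\frac{p(p+1)}{2}(\alpha+\overline\sigma)^2 t}$ at once, with no exponential-moment detour. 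The paper packages exactly this as Lemma~\ref{lem:dummy}, stated for exponential martingales with bounded integrand, and controls the denominator in (iii) via the pointwise inequality $\bigl(\tfrac{1}{T}\int_0^T Y_t\,dt\bigr)^{-q}\le\max_{0\le t\le T}Y_t^{-q}$.
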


\begin{proof}
	We first provide the proof of \eqref{eq:Dsx bound}. By \cite{benhamou2000application}, the Malliavin derivative $D_sX_t$ is expressed as
	\begin{equation}\label{eq:DsXt}
	D_sX_t=\frac{Y_t}{Y_s}\sigma(s,X_s)X_s\mathbbm{1}_{\{s\le t\}}\,.
	\end{equation}
	Hence, for any $p>0$ and $0\le t\le 1\,$ by Assumption \ref{classical assumption}, Lemma \ref{lem:dummy}, and H\"older's inequality, \eqref{eq:Dsx bound} follows.
	
	We now prove the second and third inequalities in \eqref{eq:DsY bound}. From  \cite[Proposition 1.5.1]{NualartDavid1995TMca}, the Malliavin derivatives $D_s\tilde{Y}_t$ and $D_s\hat{Y}_t$ can be expressed as
	\begin{align}
	D_s\tilde{Y}_t=\nu(s,S_0)\tilde{Y}_t\mathbbm{1}_{\{s\le t\}}\,, \quad D_s\hat{Y}_t=\nu(s,S_0)\mathbbm{1}_{\{s\le t\}}\,.\label{proof:DtildeY, DhatY}
	\end{align}
	Hence, from Assumption \ref{classical assumption}, Lemma \ref{lem:dummy} and H\"older's inequality,  it follows that $\sup_{s\ge 0}\,\mathbb{E}^\mathbb{Q}[\vert D_{s}\tilde{Y}_t \vert^p]$ and $\sup_{s\ge 0}\,\mathbb{E}^\mathbb{Q}[\vert D_{s}\hat{Y}_t \vert^p]$ are bounded by some constant $E_p>0$. To prove the first inequality in \eqref{eq:DsY bound}, we observe that 
	\begin{equation}\label{proof:DY}
	D_sY_t=Y_t\left[\nu(s,X_s)-\int_0^t \nu(u,X_u)\rho(u,X_u)D_sX_u\,du+\int_0^t \rho(u,X_u)D_sX_u\,dW_u\right]\mathbbm{1}_{\{s\le t\}}\,,
	\end{equation}
	which is obtained by  \cite[Proposition 1.5.1]{NualartDavid1995TMca}.
	By Jensen's inequality and the inequalities $\vert\nu\vert\le\alpha$ and $\vert\rho\vert\le\alpha$,
	 we have
	\begin{align}
	\mathbb{E}^\mathbb{Q}[\vert D_sY_t\vert^p]
	&\le 3^{p-1}\alpha^p\mathbb{E}^\mathbb{Q}[Y_t^p] +3^{p-1}\alpha^{2p}\mathbb{E}^\mathbb{Q}\left[Y_t^p\int_0^t \vert D_sX_u\vert^p\,du\right]t^{p-1} \\
	&+3^{p-1}\mathbb{E}^\mathbb{Q}\left[Y_t^p\left\vert\int_0^t \rho(u,X_u)D_sX_u\,dW_u\right\vert^p\right]\,.
	\end{align}
	By Lemma \ref{lem:dummy}, H\"older's inequality
	 and Eq.\eqref{eq:Dsx bound}, there exists a constant $C>0$ such that
	\begin{align}
	\mathbb{E}^\mathbb{Q}\left[Y_t^p\int_0^t \vert D_sX_u\vert^p\,du\right]t^{p-1} \le\left(\mathbb{E}^\mathbb{Q}[\vert Y_t\vert^{2p}]\right)^{\frac{1}{2}}(E_{2p})^{\frac{1}{2}}t^p
	\le C\,\label{proof:DsY bound 1}
	\end{align}
	and
	\begin{align}
	\mathbb{E}^\mathbb{Q}\left[Y_t^p\left\vert\int_0^t \rho(u,X_u)D_sX_u\,dW_u\right\vert^p\right] \le \left(\mathbb{E}^\mathbb{Q}[\vert Y_t\vert^{2p}]\right)^{\frac{1}{2}}(C_p\alpha^{2p}E_{2p}t^p)^{\frac{1}{2}}
	\le C\, \label{proof:DsY bound 3}
	\end{align}
	for all $0\le t\le 1$ and $s\ge0\,.$ This completes the proof of \eqref{eq:DsY bound}.
	
	Finally, we provide the proof of \eqref{eq:TDsF bound}. Similarly to the above, the Malliavin derivative $TD_sF$ can be written as $$TD_sF=-\frac{1}{T}\int_0^T D_sY_t\,dt\left(\frac{1}{T}\int_0^T Y_t\,dt\right)^{-2}\,.$$
 From Lemma \ref{lem:dummy}, H\"older's inequality
  and \eqref{eq:DsY bound}, we obtain \eqref{eq:TDsF bound}.
\end{proof}

\begin{lemma}\label{lem:Ds close}
	Let Assumption \ref{classical assumption} hold. Then, for any $p>0,$ there exists a positive constant $F_p$ depending only on $p$ such that the following inequalities hold.
	\begin{enumerate}\renewcommand{\labelenumi}{(\roman{enumi})}
			\renewcommand{\theenumi}{\roman{enumi}} 
		\item For $0\le t\le 1$,
		\begin{equation}\label{eq:DsY close}
		\underset{s\ge0}{\sup}\,\mathbb{E}^\mathbb{Q}[\vert D_{s}Y_t-D_{s}\tilde{Y}_t \vert^p] \le F_pt^{\frac{p}{2}}, \quad
		\underset{s\ge0}{\sup}\,\mathbb{E}^\mathbb{Q}[\vert D_{s}\tilde{Y}_t-D_{s}\hat{Y}_t \vert^p] \le F_pt^{\frac{p}{2}}.
		\end{equation}
		\item For $0\le T\le 1$,
		\begin{equation}\label{eq:TDsF close}
		\underset{s\ge0}{\sup}\,\mathbb{E}^\mathbb{Q}[\vert TD_{s}F-TD_{s}\tilde{F} \vert^p] \le F_pT^{\frac{p}{2}}, \quad
		\underset{s\ge0}{\sup}\,\mathbb{E}^\mathbb{Q}[\vert TD_{s}\tilde{F}-TD_{s}^{*}\hat{F} \vert^p] \le F_pT^{\frac{p}{2}}.
		\end{equation}
	\end{enumerate}
\end{lemma}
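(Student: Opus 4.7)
The plan is to exploit semi-explicit Malliavin representations of $Y,\tilde{Y},\hat{Y}$ and transfer the resulting $L^p$-bounds to $F,\tilde{F},\hat{F}$ via a quotient telescope. For $s\le t$ one has the closed forms $D_s\tilde{Y}_t = \nu(s,S_0)\tilde{Y}_t$ and $D_s\hat{Y}_t = \nu(s,S_0)$, while $D_sY_t$ satisfies the linear SDE
\begin{equation}
D_sY_t = \nu(s,X_s)Y_s + \int_s^t \nu(u,X_u)D_sY_u\,dW_u + \int_s^t \rho(u,X_u)D_sX_u\,Y_u\,dW_u\,,
\end{equation}
obtained by differentiating Eq.\eqref{eq:Y_t} in the Malliavin sense. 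All three derivatives vanish for $s>t$.

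For part (i), the second inequality reduces to $|D_s\tilde{Y}_t - D_s\hat{Y}_t| = |\nu(s,S_0)|\,|\tilde{Y}_t - 1|$; since $\tilde{Y}_t-1 = \int_0^t \nu(u,S_0)\tilde{Y}_u\,dW_u$ with $\nu(\cdot,S_0)$ bounded, Burkholder--Davis--Gundy together with the moment bound on $\tilde{Y}$ from Lemma \ref{lem:dummy} yields the $t^{p/2}$ rate uniformly in $s$. For the first inequality I would set $\Delta_t^{(s)} := D_sY_t - D_s\tilde{Y}_t$ and subtract the SDEs to obtain
\begin{equation}
\Delta_t^{(s)} = \bigl[\nu(s,X_s)Y_s - \nu(s,S_0)\tilde{Y}_s\bigr] + \int_s^t \nu(u,X_u)\Delta_u^{(s)}\,dW_u + \int_s^t R_u\,dW_u\,,
\end{equation}
with forcing $R_u := \rho(u,X_u)D_sX_u\,Y_u + (\nu(u,X_u)-\nu(u,S_0))D_s\tilde{Y}_u$. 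Splitting the initial term as $(\nu(s,X_s)-\nu(s,S_0))Y_s + \nu(s,S_0)(Y_s-\tilde{Y}_s)$ shows it is $O(s^{1/2})$ in $L^p$ by Lipschitzness of $\nu$, the standard estimate $\mathbb{E}^\mathbb{Q}[|X_s-S_0|^p]\le Cs^{p/2}$, and Lemma \ref{lem:x,y close}; similarly $\mathbb{E}^\mathbb{Q}[|R_u|^p]^{1/p}$ is $O(u^{1/2})$ by Lemma \ref{lem:Ds bound} together with $\mathbb{E}^\mathbb{Q}[|X_u-S_0|^{2p}]\le C u^p$. Applying BDG to the stochastic integrals and Gr\"onwall to $t\mapsto\mathbb{E}^\mathbb{Q}[|\Delta_t^{(s)}|^p]$ on $[s,T]$ then delivers the $F_pt^{p/2}$ estimate uniformly in $s$.

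For part (ii), I would set $J_T := \frac{1}{T}\int_0^T Y_t\,dt$ and $K_s := \frac{1}{T}\int_s^T D_sY_t\,dt$, and analogously $\tilde{J}_T,\tilde{K}_s,\hat{J}_T,\hat{K}_s$, so that $TD_sF = -K_s/J_T^2$ and $TD_s\tilde{F} = -\tilde{K}_s/\tilde{J}_T^2$. The telescope
\begin{equation}
\frac{K_s}{J_T^2} - \frac{\tilde{K}_s}{\tilde{J}_T^2} = \frac{K_s - \tilde{K}_s}{J_T^2} + \tilde{K}_s\cdot\frac{(\tilde{J}_T-J_T)(\tilde{J}_T+J_T)}{J_T^2\tilde{J}_T^2}
\end{equation}
isolates two controllable pieces: the numerator difference is $O(T^{1/2})$ in $L^p$ by Minkowski in time combined with part (i), and $\tilde{J}_T - J_T$ is $O(T)$ in $L^p$ by Lemma \ref{lem:x,y close}. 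The inverse-denominator factors are handled by showing that $\mathbb{E}^\mathbb{Q}[J_T^{-k}]$ and $\mathbb{E}^\mathbb{Q}[\tilde{J}_T^{-k}]$ are uniformly bounded for $T\in[0,1]$ and any $k>0$, which follows from Jensen's inequality $\log J_T \ge \frac{1}{T}\int_0^T \log Y_t\,dt$ combined with the exponential integrability of the bounded log-martingale defining $Y$. For the comparison with $TD_s^{*}\hat{F}$, the indicator $\mathbbm{1}_{\{\hat{J}_T\ge 1/2\}}$ supplies the deterministic upper bound $1/\hat{J}_T\le 2$ on the good event, where the telescope argument proceeds identically; on the complement, Chebyshev together with $\hat{J}_T\to 1$ in $L^2$ shows $\mathbb{Q}(\hat{J}_T<1/2)$ decays fast enough in $T$ to absorb the extra $TD_s\tilde{F}$ contribution via H\"older. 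The main obstacles I anticipate are, first, verifying the uniform-in-$T$ inverse moment bounds on $J_T$ and $\tilde{J}_T$, which are essential for making the nonlinear dependence of $F$ on $\int_0^T Y_t\,dt$ tractable, and second, carefully marshaling the indicator function in $\hat{F}$ so that the target rate $T^{p/2}$ survives on both the good and bad events.
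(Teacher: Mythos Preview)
Your overall architecture is sound and close to the paper's. For part (i), the paper takes a slightly different route: it uses the closed-form exponential-martingale representation
\[
D_sY_t = Y_t\Bigl[\nu(s,X_s) - \int_0^t \nu(u,X_u)\rho(u,X_u)D_sX_u\,du + \int_0^t \rho(u,X_u)D_sX_u\,dW_u\Bigr]\mathbbm{1}_{\{s\le t\}}
\]
and bounds each piece directly, with no Gr\"onwall step; you instead subtract the SDEs and close by Gr\"onwall. Both work. One slip in your version: the piece $\rho(u,X_u)D_sX_u\,Y_u$ of $R_u$ is only $O(1)$ in $L^p$, not $O(u^{1/2})$, since none of $\rho$, $D_sX_u$, $Y_u$ carries a small factor. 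This is harmless because BDG on $\int_s^t R_u\,dW_u$ already supplies the $(t-s)^{1/2}$ you need, but the justification you wrote is off.

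There is one genuine gap in part (ii). Chebyshev with $L^2$-convergence of $\hat{J}_T$ gives only $\mathbb{Q}(\hat{J}_T<\tfrac12) = O(T)$, so H\"older on the bad event yields at best $(\mathbb{E}[|TD_s\tilde{F}|^{2p}])^{1/2}\,\mathbb{Q}(\hat{J}_T<\tfrac12)^{1/2} = O(T^{1/2})$, which falls short of $O(T^{p/2})$ once $p>1$. The paper closes this by observing that $\hat{J}_T - 1 = \frac{1}{T}\int_0^T \nu(s,S_0)(T-s)\,dW_s$ is a centered Gaussian with variance at most $\alpha^2 T/3$, so the normal tail gives $\mathbb{Q}(\hat{J}_T<\tfrac12)\le N\bigl(-\tfrac{\sqrt{3}}{2\alpha\sqrt{T}}\bigr) = o(T^q)$ for every $q>0$; alternatively, Markov with the $2k$-th moment, $k\ge p$, would suffice. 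Your Jensen argument for the inverse moments of $J_T$ is valid, though the paper simply invokes Lemma~\ref{lem:dummy} for this.
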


\begin{proof}
It suffices to establish the result for $p \ge 1$. The case $0<p<1$ follows by applying Jensen’s inequality in the same manner as in \eqref{eqn:jensen}. For the first inequality in \eqref{eq:DsY close}, observe that  
	\begin{align}
	&\mathbb{E}^\mathbb{Q}[\vert D_sY_t-D_s\tilde{Y}_t\vert^p]\\
	&\le 3^{p-1}\mathbb{E}^\mathbb{Q}\left[\big\vert Y_t\nu(s,X_s)-\tilde{Y_t}\nu(s,S_0)\big\vert^p\right]\mathbbm{1}_{\{s\le t\}}
	+3^{p-1}\alpha^{2p}\mathbb{E}^\mathbb{Q}\left[Y_t^p\int_0^t \vert D_sX_u\vert^p\,du\right]t^{p-1}\\
	&+3^{p-1}\mathbb{E}^\mathbb{Q}\left[Y_t^p\left\vert\int_0^t \rho(u,X_u)D_sX_u\,dW_u\right\vert^p\right]\,.
	\end{align}
This follows directly from \eqref{proof:DtildeY, DhatY} and \eqref{proof:DY}.
Moreover, by \eqref{proof:DsY bound 1} and \eqref{proof:DsY bound 3}, there exists a constant $F_p>0$ such that $$\mathbb{E}^\mathbb{Q}\left[Y_t^p\int_0^t \vert D_sX_u\vert^p\,du\right]t^{p-1}+\mathbb{E}^\mathbb{Q}\left[Y_t^p\left\vert\int_0^t \rho(u,X_u)D_sX_u\,dW_u\right\vert^p\right]\le F_pt^{\frac{p}{2}}$$ for all $0\le t\le 1$.
Using Lemmas \ref{lem:x,y close} and \ref{lem:dummy}, the inequality
	\begin{align}
	\vert Y_t\nu(s,X_s)-\tilde{Y_t}\nu(s,S_0)\vert\mathbbm{1}_{\{s\le t\}}\le Y_t\vert \nu(s,X_s)-\nu(s,S_0)\vert\mathbbm{1}_{\{s\le t\}}
	+\left\vert \nu(s,S_0)\right\vert \vert Y_t-\tilde{Y_t}\vert\,, 
	\end{align}
 and Theorem 3.4.3 of \cite{zhang2017backward}, we obtain the first inequality in \eqref{eq:DsY close}. The second inequality in \eqref{eq:DsY close} is similarly obtained  from \eqref{proof:DtildeY, DhatY}.

Finally, we  prove \eqref{eq:TDsF close}. From Proposition 1.5.1 in \cite{NualartDavid1995TMca}, it follows that
	\begin{align}
	TD_sF=\frac{-\frac{1}{T}\int_0^TD_sY_t\,dt}{\left(\frac{1}{T}\int_0^TY_t\,dt\right)^2}\,, \quad
	TD_s\tilde{F}=\frac{-\frac{1}{T}\int_0^TD_s\tilde{Y}_t\,dt}{\left(\frac{1}{T}\int_0^T\tilde{Y}_t\,dt\right)^2}\,.
	\end{align}
By Jensen's inequality, we have $\mathbb{E}^\mathbb{Q}[\vert TD_sF-TD_s\tilde{F}\vert^p] \le 2^{p-1}\mathbb{E}^\mathbb{Q}[L_T^p]+2^{p-1}\mathbb{E}^\mathbb{Q}[R_T^p],$
	where $L_T$ and $R_T$ are defined as
	\begin{align}
	L_T
	:=\left\vert \frac{\frac{1}{T}\int_0^TD_sY_t\,dt}{\left(\frac{1}{T}\int_0^TY_t\,dt\right)^2}-\frac{\frac{1}{T}\int_0^TD_sY_t\,dt}{\left(\frac{1}{T}\int_0^T\tilde{Y}_t\,dt\right)^2}\right\vert\,, \quad
	R_T
	:=\left\vert\frac{\frac{1}{T}\int_0^TD_sY_t\,dt}{\left(\frac{1}{T}\int_0^T\tilde{Y}_t\,dt\right)^2}-\frac{\frac{1}{T}\int_0^TD_s\tilde{Y}_t\,dt}{\left(\frac{1}{T}\int_0^T\tilde{Y}_t\,dt\right)^2} \right\vert\,.
	\end{align}
	Using Lemmas \ref{lem:x,y close}, \ref{lem:dummy}, \ref{lem:Ds bound} and H\"older's inequality, we obtain the first inequality in \eqref{eq:TDsF close}.
	
The second inequality in \eqref{eq:TDsF close} can be established in a similar manner, except that we must additionally handle the indicator function $\mathbbm{1}_{\{\frac{1}{T}\int_0^T\hat{Y}_t\,dt\ge\frac{1}{2}\}}$	
appearing in  $D_s^{*}\hat{F}.$ This issue can be addressed using the same technique as in the proof of Lemma~\ref{lem:u,TF app}.
\end{proof}

In Lemma \ref{lem:hat u app}, $\delta(\hat{u})$ is approximated by a normal random variable $\delta(\frac{2}{\sigma(\cdot,S_0)S_0})$.
By applying this lemma, the expectations in \eqref{eq:asian delta mal cal 3} can be directly estimated in terms of multivariate normal random variables.
Consequently, we establish the  asymptotic relations in Proposition~\ref{prop:Asian delta app 3}.

\begin{lemma}\label{lem:hat u app}
	Let Assumption \ref{classical assumption} hold. Then, for any $p>0,$ there exists a positive constant $G_p$ depending only on $p$ such that 
	for all $0\le T\le 1$,
	\begin{equation}\label{eq:hat u app}
	\mathbb{E}^\mathbb{Q}[\vert\delta(\hat{u})\vert^p]\le G_pT^{\frac{p}{2}}\,,\;	\mathbb{E}^\mathbb{Q}\left[\left\vert \delta(\hat{u})-\delta\left(\frac{2}{\sigma(\cdot,S_0)S_0}\right) \right\vert^p\right]\le G_pT^p\,.
	\end{equation}
\end{lemma}

\begin{proof}
	Observe that the Skorokhod integral $\delta(\hat{u})$ coincides with the It\^o integral of $\hat{u}$. By Doob's $L^p$ inequality, the Burkholder--Davis--Gundy inequality, and H\"older's inequality, there is a constant $C_p>0$ such that
	\begin{align}
	\mathbb{E}^\mathbb{Q}[\vert \delta(\hat{u})\vert^p]
	\le C_p\frac{4^p}{\underline{\sigma}^pS_0^p}\left(\frac{2p}{2p-1}\right)^{2p}T^{\frac{p}{2}}
	\end{align} 
	for all $0\le T\le 1.$
	This proves the first inequality in \eqref{eq:hat u app}.
	
We now prove the second inequality in \eqref{eq:hat u app}. For simplicity, we define $g_s:=\hat{u}_s-\frac{2}{\sigma(s,S_0)S_0}$ for $s\in [0,T].$  The process $(g_s)_{0\le s\le T}$ can be written as 
	\begin{align}
	g_s=\left(\hat{u}_s-\frac{2\hat{Y}_s^2}{\sigma(s,S_0)S_0}\mathbbm{1}_{\{\hat{X}_s\ge\frac{S_0}{2}\}}\right)-\frac{2\hat{Y}_s^2}{\sigma(s,S_0)S_0}\mathbbm{1}_{\{\hat{X}_s\le\frac{S_0}{2}\}}+\left(\frac{2\hat{Y}_s^2}{\sigma(s,S_0)S_0}-\frac{2}{\sigma(s,S_0)S_0}\right). 
	\end{align}
Since $\hat{Y}_s$ and $\hat{X}_s$ are normal random variables and $\mathbb{Q}[\hat{X}_s<\frac{S_0}{2}]=o(s^q)$ as $s\rightarrow{0}$  for any $q>0$, from  \cite[Theorem 3.4.3]{zhang2017backward} and H\"older's inequality, we have
	\begin{align}\label{p-th moment of g}
	\mathbb{E}^\mathbb{Q}\left[\left\vert g_s\right\vert ^p \right]\le G_ps^{\frac{p}{2}}.
	\end{align}
	for some positive constant $G_p>0$.
	This completes the proof.
\end{proof}

We recall the processes
$(h_s)_{0\le s\le T}$, $(\hat{h}_s)_{0\le s\le T}$, $(H_s)_{0\le s\le T}$, $(\hat{H}_s)_{0\le s\le T}$ and the random variables $G$, $\hat{G}$ from \eqref{eqn:orig}
and \eqref{eqn:aux}.
Furthermore, we define  
\begin{align}
\tilde{h}_s:=\frac{\tilde{Y}_s}{\sigma(s,\tilde{X}_s)\tilde{X}_s}\,, \; \tilde{H}_s:=\frac{\tilde{X}_T(D_{s}\tilde{Y}_T)}{{\tilde{Y}_T}^2}\,,\;\tilde{G}:=\frac{\tilde{X}_T}{\tilde{Y}_T}\,.
\end{align}
In Lemma \ref{lem:h,G,H app}, the processes $(\tilde{h}_s)_{0\le s\le T}$, $(\hat{h}_s)_{0\le s\le T}$  serve as approximations of $(h_s)_{0\le s\le T}$.
Similarly, $(\tilde{H}_s)_{0\le s\le T}$, $(\hat{H}_s)_{0\le s\le T}$ are employed to approximate $(H_s)_{0\le s\le T}$.
The random variables $\tilde{G}$, $\hat{G}$  are likewise used as approximations of  $G$.
As the proof of the following lemma proceeds analogously to that of Lemma \ref{lem:u,TF app}, it is omitted.
\begin{lemma}\label{lem:h,G,H app}
	Let Assumption \ref{classical assumption} hold. Then, for any $p>0,$ there exists a positive constant $I_p$ depending only on $p$ such that the following inequalities hold. 
	\begin{enumerate}\renewcommand{\labelenumi}{(\roman{enumi})}
			\renewcommand{\theenumi}{\roman{enumi}} 
		\item For  $0\le t\le 1$,
		\begin{equation}
		\mathbb{E}^\mathbb{Q}[\vert h_t-\tilde{h}_t\vert^p] \le I_pt^p\,, \quad
		\mathbb{E}^\mathbb{Q}[\vert \tilde{h}_t-\hat{h}_t\vert^p] \le I_pt^p\,.
		\end{equation}
		\item For $0\le T\le 1$,
		\begin{equation}
		\underset{s\ge0}{\sup}\,\mathbb{E}^\mathbb{Q}[\vert H_s-\tilde{H}_s\vert^p] \le I_pT^{\frac{p}{2}}\,, \quad
		\underset{s\ge0}{\sup}\,\mathbb{E}^\mathbb{Q}[\vert \tilde{H}_s-\hat{H}_s\vert^p] \le I_pT^{\frac{p}{2}}\,.
		\end{equation}
		\item For $0\le T\le 1$,
		\begin{equation}
		\mathbb{E}^\mathbb{Q}[\vert G-\tilde{G}\vert^p] \le I_pT^p\,, \quad
		\mathbb{E}^\mathbb{Q}[\vert \tilde{G}-\hat{G}\vert^p] \le I_pT^p\,.
		\end{equation}
	\end{enumerate}
\end{lemma}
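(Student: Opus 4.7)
The strategy for all three parts is the same: write each difference as a telescoping sum whose summands are products of an $L^p$-small ingredient (governed by Lemmas \ref{lem:x,y close} and \ref{lem:Ds close}) with $L^p$-bounded factors (governed by Lemmas \ref{lem:dummy} and \ref{lem:Ds bound}), then apply Hölder's inequality with exponents tuned so that all moments appearing, including negative powers of $X_t$, $\tilde X_t$, $Y_t$, $\tilde Y_t$, are finite. For the hatted versions, one further splits the expectation over the event on which the indicator is active (denominators bounded below) and its complement (handled via Gaussian concentration of $\hat X_t$ near $S_0$ and $\hat Y_T$ near $1$).

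For part (i), decompose
\[
h_t - \tilde{h}_t \;=\; \frac{Y_t - \tilde{Y}_t}{\sigma(t,X_t)X_t} \;+\; \tilde{Y}_t\cdot\frac{\sigma(t,\tilde{X}_t)\tilde{X}_t - \sigma(t,X_t)X_t}{\sigma(t,X_t)X_t\,\sigma(t,\tilde{X}_t)\tilde{X}_t}.
\]
The Lipschitz bound $|\sigma(t,x)x - \sigma(t,y)y|\le\alpha|x-y|$ from Assumption \ref{classical assumption} and the lower bound $\sigma\ge\underline{\sigma}$ reduce everything to Lemma \ref{lem:x,y close} (yielding $|Y_t - \tilde Y_t|^{2p}, |X_t - \tilde X_t|^{2p} = O(t^{2p})$) together with the negative-moment bounds from Lemma \ref{lem:dummy}; Hölder delivers the required $O(t^p)$ rate. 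Part (iii) is analogous: write
\[
G - \tilde G \;=\; \frac{X_T - \tilde X_T}{\tilde Y_T} \;+\; X_T\cdot\frac{\tilde Y_T - Y_T}{Y_T\tilde Y_T}
\]
and apply the same tools at time $T$, obtaining $O(T^p)$.

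Part (ii) introduces Malliavin derivatives. Telescope
\[
H_s - \tilde H_s \;=\; \frac{(X_T - \tilde X_T)\,D_s Y_T}{Y_T^2} + \frac{\tilde X_T(D_s Y_T - D_s\tilde Y_T)}{Y_T^2} + \tilde X_T\, D_s\tilde Y_T\cdot\frac{\tilde Y_T^2 - Y_T^2}{Y_T^2\tilde Y_T^2}.
\]
The first and third terms are $O(T^p)$ via Lemma \ref{lem:x,y close} combined with Lemma \ref{lem:Ds bound} and negative moments of $Y_T,\tilde Y_T$; the middle term is $O(T^{p/2})$ via Lemma \ref{lem:Ds close}, and this slower rate dominates. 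Uniformity in $s$ is free because the relevant estimates in Lemmas \ref{lem:Ds bound} and \ref{lem:Ds close} are uniform in $s\ge0$. For the hatted differences, write for instance
\[
\tilde h_t - \hat h_t \;=\; \Bigl(\tilde h_t - \tfrac{\hat Y_t}{\sigma(t,\hat X_t)\hat X_t}\Bigr)\mathbbm{1}_{\{\hat X_t\ge S_0/2\}} + \tilde h_t\,\mathbbm{1}_{\{\hat X_t < S_0/2\}},
\]
and similarly for $\tilde H - \hat H$ and $\tilde G - \hat G$. On the active event the denominator is bounded below by $\underline{\sigma}S_0/2$, so the telescoping argument above, now using the second halves of Lemmas \ref{lem:x,y close} and \ref{lem:Ds close}, delivers the target rate. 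On the complementary event, Markov's inequality together with the Gaussian law of $\hat X_t - S_0 = \int_0^t\sigma(s,S_0)S_0\,dW_s$ gives $\mathbb{P}(\hat X_t < S_0/2)=O(t^q)$ for every $q>0$; an $L^{p'}$-bound on $\tilde h_t$ via Hölder then makes this contribution negligible at any polynomial order, and the analogous estimate $\mathbb{P}(\hat Y_T < 1/2)=O(T^q)$ handles $\hat H$ and $\hat G$.

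The main obstacle I expect is the $\tilde H_s - \hat H_s$ term, since it simultaneously mixes (a) the slower $T^{p/2}$ rate coming from the Malliavin-derivative approximation, (b) a quotient with $\hat Y_T^2$ in the denominator whose integrability is exactly why the indicator $\mathbbm{1}_{\{\hat Y_T\ge 1/2\}}$ was introduced, and (c) the requirement of uniformity in $s\in[0,T]$. The delicate step is to choose the Hölder exponents so that the negative moments of $\tilde Y_T$, the $T^{p/2}$ estimate of Lemma \ref{lem:Ds close}, and the Gaussian-tail bound $\mathbb{P}(\hat Y_T<1/2)=O(T^q)$ cooperate without degrading the target $T^{p/2}$ rate.
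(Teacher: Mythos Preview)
Your proposal is correct and follows exactly the approach the paper intends: it explicitly states that the proof of Lemma~\ref{lem:h,G,H app} is ``obtained by merely duplicating'' that of Lemma~\ref{lem:u,TF app} (together with the ideas behind Lemmas~\ref{lem:Ds bound} and~\ref{lem:Ds close} for the Malliavin-derivative pieces), and omits the details. Your telescoping decompositions, use of H\"older with Lemmas~\ref{lem:x,y close}, \ref{lem:dummy}, \ref{lem:Ds bound}, \ref{lem:Ds close}, and the indicator-splitting plus Gaussian-tail argument for the hatted quantities are precisely the ingredients of the proof of Lemma~\ref{lem:u,TF app} in Appendix~\ref{proof: u,TF app}, adapted to the new functionals $h$, $H$, $G$.
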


We approximate $\delta(\hat{h})$ by the normal random variable  $\delta(\frac{1}{\sigma(\cdot,S_0)S_0})$ in the following lemma. The proof is similar to that of Lemma \ref{lem:hat u app}; hence, it is omitted.
\begin{lemma}\label{lem:hat h app}
Let Assumption \ref{classical assumption} hold. 
Then, for any $p>0,$ there exists a positive constant $J_p$ depending only on $p$ such that for all $0\le T\le 1$,
	\begin{equation}
	\mathbb{E}^\mathbb{Q}[\vert\delta(\hat{h})\vert^p]\le J_pT^{\frac{p}{2}}\,,\; \mathbb{E}^\mathbb{Q}\left[\left\vert\delta(\hat{h})-\delta\left(\frac{1}{\sigma(\cdot,S_0)S_0}\right)\right\vert^p\right]\le J_pT^p\,.
	\end{equation}
\end{lemma}

\section{The proof of main results in Section \ref{sec:Special case} }
\label{proof:otm delta}

Section \ref{sec:Special case} primarily considers the call and put option payoffs, given by $\Phi(x) = (x - K)_+$ and $\Phi(x) = (K - x)_+$, respectively. The following lemma holds for any Lipschitz function $\Phi.$

\begin{lemma}\label{lem:another repre delta}
	Under  Assumptions \ref{classical assumption} and \ref{Holder payoff assumption} with $\gamma=1,$ 
	 we have 
	\begin{align}
	&\frac{\partial}{\partial S_0}\mathbb{E}^\mathbb{Q}\left[\Phi\left(\frac{1}{T}\int_0^T S_t\,dt\right)\right]
	=\mathbb{E}^\mathbb{Q}\left[\Phi'\left(\frac{1}{T}\int_0^T S_t\,dt\right)\frac{1}{T}\int_0^T Z_t\,dt\right]\,,\\
	&\frac{\partial}{\partial S_0}\mathbb{E}^\mathbb{Q}[\Phi(S_T)]
	=\frac{1}{S_0}\mathbb{E}^\mathbb{Q}[\Phi'(S_T)Z_T]\,.
	\end{align}
	where $Z$ is the unique solution to the SDE \eqref{eq:process z}.
	Here, the derivative $\Phi'$ is defined almost everywhere with respect to the Lebesgue measure.
\end{lemma}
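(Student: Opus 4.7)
My plan is to prove both identities by passing the $S_0$-derivative inside the expectation via the Lebesgue dominated convergence theorem.

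The technical prerequisites I would first secure: by Rademacher's theorem the Lipschitz function $\Phi$ is differentiable almost everywhere with $\|\Phi'\|_\infty\le\beta$. Under Assumption \ref{classical assumption}, since $\underline\sigma>0$, the Malliavin matrices of $S_T$ and $A_T:=\tfrac{1}{T}\int_0^T S_t\,dt$ are non-degenerate, so both random variables have absolutely continuous laws. Consequently $\Phi'(A_T)$ and $\Phi'(S_T)$ are well-defined $\mathbb{Q}$-almost surely, and the right-hand sides of the claimed identities make sense.

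Write $f(S_0):=\mathbb{E}^\mathbb{Q}[\Phi(A_T^{S_0})]$ and form the difference quotient
$$\frac{f(S_0+h)-f(S_0)}{h}=\mathbb{E}^\mathbb{Q}\!\left[\frac{\Phi(A_T^{S_0+h})-\Phi(A_T^{S_0})}{h}\right].$$
Lipschitz continuity of $\Phi$ bounds the integrand by $\beta\,|A_T^{S_0+h}-A_T^{S_0}|/|h|$, and the standard Gronwall-based $L^2$ stability of SDE \eqref{eq:S_t} in its initial datum (available under the Lipschitz hypotheses of Assumption \ref{classical assumption}) supplies an $h$-uniform integrable majorant. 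Applying DCT yields
$$f'(S_0)=\mathbb{E}^\mathbb{Q}\!\left[\Phi'(A_T^{S_0})\cdot\frac{\partial A_T^{S_0}}{\partial S_0}\right].$$

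The final step, which I expect to require the most care, is the identification of the pathwise initial-condition derivative with the scaling ratio: concretely, $\partial_{S_0}S_t^{S_0}=S_t^{S_0}/S_0$, which upon integrating over $[0,T]$ gives $\partial_{S_0}A_T^{S_0}=A_T^{S_0}/S_0$. This identification rests on the homogeneity of the SDE coefficients in $S$, and once granted it converts the DCT output into the Asian formula $f'(S_0)=S_0^{-1}\mathbb{E}^\mathbb{Q}[\Phi'(A_T)\,A_T]$. The European identity is obtained by the identical DCT argument with $S_T$ in place of $A_T$, invoking the same scaling relation $\partial_{S_0}S_T^{S_0}=S_T^{S_0}/S_0$ at the final step.
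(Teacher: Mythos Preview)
Your overall strategy—dominated convergence to pass the $S_0$-derivative inside the expectation, together with absolute continuity of the laws of $S_T$ and $A_T$ so that $\Phi'(\cdot)$ is well defined $\mathbb{Q}$-a.s.—is precisely the route the paper takes. The paper's argument is in fact terser than yours: it asserts that by DCT the result reduces to showing these two laws have densities, and then invokes Malliavin criteria from \cite{NualartDavid1995TMca,nualart2018introduction} for both.

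The final identification step, however, is where your argument breaks. The coefficients of \eqref{eq:S_t} are \emph{not} homogeneous of degree one in the state: the drift $(r-q)S$ is, but the diffusion $\sigma(t,S)S$ is homogeneous only when $\sigma(t,\cdot)$ is constant in $x$, i.e.\ in the Black--Scholes case. For a genuine local volatility $\sigma$ one does \emph{not} have $S_t^{\lambda S_0}=\lambda S_t^{S_0}$, and the pathwise derivative $\partial_{S_0}S_t$ is the first variation process $Z_t$ of Eq.~\eqref{eq:process z}, solving $dZ_t=(r-q)Z_t\,dt+\nu(t,S_t)Z_t\,dW_t$, $Z_0=1$; since $\nu(t,x)=\partial_x(\sigma(t,x)x)=\sigma(t,x)+x\,\sigma_x(t,x)$, in general $Z_t\neq S_t/S_0$. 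Your DCT step therefore delivers
\[
\partial_{S_0}\mathbb{E}^\mathbb{Q}[\Phi(A_T)]=\mathbb{E}^\mathbb{Q}\!\left[\Phi'(A_T)\,\tfrac{1}{T}\!\int_0^T Z_t\,dt\right],
\]
not $\tfrac{1}{S_0}\mathbb{E}^\mathbb{Q}[\Phi'(A_T)A_T]$, and nothing you have written bridges the two. The paper's own proof passes over this same identification without comment, so the gap is not unique to your write-up; but the ``homogeneity'' you invoke is false outside the constant-volatility case, and the claimed scaling relation $\partial_{S_0}S_t=S_t/S_0$ cannot be used here.
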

\begin{proof} 
	For any $x>0$, let $S^x$ be the solution to \eqref{eq:S_t} with $S_0=x.$
	From \cite[Theorem 3.4.3]{zhang2017backward},   it follows that
	\begin{align}
\sup_{\vert h\vert \le 1}	\mathbb{E}^\mathbb{Q}\left[\left\vert\frac{1}{h}\left(\Phi\left(\frac{1}{T}\int_0^T S_t^{x+h}\,dt\right)-\Phi\left(\frac{1}{T}\int_0^T S_t^{x}\,dt\right)\right)\right\vert^p\right] <\infty
	\end{align}
for any $p>1$. Thus,  by  \cite[Theorems 6.21 and 6.25 ]{klenke2013probability} and  \cite[Lemma 5.2.3]{zhang2017backward}, we obtain the desired results.
\end{proof}

The proof of Theorem \ref{thm:otm delta} is presented below.
The rate function $\mathcal{I}$ in \eqref{eq:rate function} satisfies the following property. For any Borel set $A$ in $\mathbb{R}^{+},$
	\begin{align}\label{eqn:LDP}
	-\inf_{x\in A^{\circ}}\mathcal{I}(x,S_0)
	&\le \liminf_{T\rightarrow{0}}T\log\left(\mathbb{Q}\left[\frac{1}{T}\int_0^T S_t\,dt\in A\right]\right) \\
	&\le \limsup_{T\rightarrow{0}}T\log\left(\mathbb{Q}\left[\frac{1}{T}\int_0^T S_t\,dt\in A\right]\right)
	\le -\inf_{x\in \overline{A}}\mathcal{I}(x,S_0)\,,
	\end{align}
	where $A^{\circ}$ is the interior of $A$ and $\overline{A}$ is the closure of $A.$ See \cite{DemboAmir1998Ldta, PirjolDan2016SMAO} for details.

\begin{proof}
	We prove only \eqref{eq:otm cdelta}. Let $Z$ be the  solution to \eqref{eq:process z}. From Lemma \ref{lem:another repre delta} and H\"older's inequality, we have
	\begin{align}
	\Delta_A^{\textnormal{call}}(T)
	&=\frac{e^{-rT}}{S_0}\mathbb{E}^\mathbb{Q}\left[\frac{1}{T}\int_{0}^{T} Z_t\,dt\,\mathbbm{1}_{\{\frac{1}{T}\int_{0}^{T} S_t\,dt \ge K\}}\right] \\
	&\le\frac{e^{-rT}}{S_0}\left(\frac{1}{T}\int_{0}^{T}S_0^pe^{p(r-q)t}e^{K(p)\overline{\sigma}^2t}\,dt\right)^{\frac{1}{p}}\left(\mathbb{Q}\left[\frac{1}{T}\int_{0}^{T}S_t\,dt \ge K\right]\right)^{\frac{1}{p'}}
	\end{align}
	for $p,p'>0$ with $1/p+1/p'=1$.
	Letting $T\rightarrow{0},$ we obtain the  inequality  
	\begin{equation}
	\limsup_{T\rightarrow0}T\log\Delta_A^{\textnormal{call}}(T)\le\frac{-\mathcal{I}(K,S_0)}{p'}
	\end{equation}
	from \eqref{eqn:LDP}.
By taking the limit	  $p'\rightarrow{1}$, this yields the upper bound
$$	\limsup_{T\rightarrow0}T\log\Delta_A^{\textnormal{call}}(T)\le -\mathcal{I}(K,S_0)\,.$$
	The lower bound $$-\mathcal{I}(K,S_0)\le 	\limsup_{T\rightarrow0}T\log\Delta_A^{\textnormal{call}}(T)$$follows from \eqref{eqn:LDP}, Lemma \ref{lem:dummy general}, the reverse H\"older inequality, and the positivity of 
	$Z$. This completes the proof.
\end{proof}

The proof of Corollary
 \ref{cor:itm delta} is presented below.
\begin{proof}
	The put-call parity for the Asian option delta yields
	\begin{equation}
	\Delta_A^{\textnormal{call}}(T)-\Delta_A^{\textnormal{put}}(T)
	=\frac{e^{-rT}}{S_0}\frac{1}{T}\int_{0}^{T} \mathbb{E}^\mathbb{Q}[S_t]\,dt
	=\frac{e^{-rT}}{S_0}\frac{1}{T}\int_{0}^{T} S_0e^{(r-q)t}\,dt
	=\frac{e^{-qT}-e^{-rT}}{(r-q)T}\,.
	\end{equation}
From Theorem \ref{thm:otm delta}, the out-of-the-money Asian delta decays at an exponential rate.
Hence, the Taylor expansion
	\begin{equation}
	\frac{e^{-qT}-e^{-rT}}{(r-q)T}=1-\frac{1}{2}(r+q)T+\left(\frac{r^2+rq+q^2}{6}\right)T^2+\mathcal{O}(T^3)\,
	\end{equation}
implies Corollary \ref{cor:itm delta}.
\end{proof}

\end{document}